\newcommand{\cA}{\mbox{$\cal A$}}
\newcommand{\NN}{{\Bbb N}}
\newcommand{\RR}{{\Bbb R}}
\newcommand{\ZZ}{{\Bbb Z}}
\newcommand{{\uk}}{\mbox{$\underline{k}$}}
\def\nod(#1,#2){\put(#1,#2){\circle*{.125}}
\put(#1,#2){\makebox(0,0.5){{\small$#2$}}}}%
\def\rod(#1,#2){\put(#1,#2){\circle*{.2}}}
\def\NOD(#1,#2)#3{\put(#1,#2){\circle*{.2}}\put(#1,#2){\makebox(0,0.8){{\small$#3$}}}}
\def\EXX{{\hfill{$\diamondsuit$}}}
\newcounter{exampleNo}
\newtheorem{theorem}{Theorem}[section]
\newtheorem{proposition}[theorem]{Proposition}
\newtheorem{corollary}[theorem]{Corollary}
\newenvironment{example}[1][Example \arabic{exampleNo}.]{\begin{trivlist}\refstepcounter{exampleNo}
\item[\hskip \labelsep {\bfseries #1}]}{\end{trivlist}}
\title{On Stochastic Automata over Monoids}
\author{Merve Nur Cakir and Karl-Heinz Zimmermann\footnote{Email: k.zimmermann@tuhh.de}\\
Department of Computer Engineering \\
Hamburg University of Technology\\
21071 Hamburg, Germany}
\begin{document}
\maketitle
\begin{abstract}
In this paper, stochastic automata over monoids as input sets are studied.
The well-definedness of these automata requires an extension postulate that
replaces the inherent universal property of free monoids.
As a generalization of Turakainen's result, it will be shown 
that the generalized automata over monoids have the same acceptance power as their stochastic counterparts.
The key to homomorphisms is a commuting property between the monoid homomorphism of input states and 
the monoid homomorphism of transition matrices.
Closure properties of the languages accepted by stochastic automata over monoids are studied.
\end{abstract}
\medskip

\mbox{\bf AMS Subject Classification:} 68Q70, 68Q87, 20M35
\medskip

\mbox{\bf Keywords:} Stochastic automaton, formal language, monoid, homomorphism, closure properties.

\section{Introduction}

The theory of discrete stochastic systems has been initiated by the work of Shannon~\cite{shannon} and von Neumann~\cite{neumann}.
While Shannon has considered memory-less communication channels and their generalization by introducing states,
von Neumann has studied the synthesis of reliable systems from unreliable components.
The fundamental work of Rabin and Scott~\cite{rscott} about deterministic finite-state automata 
has led to two generalizations.
First, the generalization of transition functions to conditional distributions 
studied by Carlyle~\cite{carl} and Starke~\cite{starke}.
This in turn yields a generalization of discrete-time Markov chains in which the chains are governed 
by more than one transition probability matrix.
Second, the generalization of regular sets by introducing stochastic automata as described by Rabin~\cite{rabin}. 

By the work of Turakainen~\cite{tura69}, stochastic acceptors can be 
viewed equivalently as generalized automata in which the ''probability'' is neglected.
This leads to a more accessible approach to stochastic automata~\cite{claus}.

On the other hand, the class of nondeterministic automata~\cite{salomaa} can be generalized to monoidal automata,
where the input alphabet corresponds to an arbitrary monoid instead of a free monoid~\cite{kufi,mihov}.
This leads to the class of monoidal automata whose languages are closed under a smaller set of operations 
when compared with regular languages.

In this paper, a unification of generalized automata and monoidal automata, called monoidal generalized automata, is studied. 
In view of the well-definedness of these automata, an extension postulate is necessary 
to replace the inherent universal property of free monoids.
As a generalization of Turakainen's result, it will be shown 
that the monoidal generalized automata have the same acceptance power as their stochastic counterparts.
Moreover, the key to homomorphisms is a commuting property 
between the monoid homomorphism of input states and the monoid homomorphism of transition matrices.
Closure properties of the languages accepted by monoidal generalized automata are studied.
They are in a sense parallel to the closure properties of the languages accepted by monoidal automata. 
The text is largely self-contained and can be read with moderate preknowledge in stochastics and formal languages.

\section{Stochastic Automata}
Stochastic finite-state automata are a generalization of the non-deterministic finite-state automata~\cite{claus}.

A {\em stochastic automaton\/} is a quintuple 
$$\cA = (S,\Sigma,\{P(x)\mid x\in\Sigma\}, \pi,f),$$ 
where
\begin{itemize}
\item $S$ is the non-empty finite set of {\em states},
\item $\Sigma$ is the alphabet of {\em input symbols},
\item $P$ is a collection of (row-) stochastic $n\times n$ matrices $P(x)$, $x\in\Sigma$,
where $n$ is the number of states,
\item $\pi$ is the {\em initial distribution}\index{initial distribution} 
of the state set written as row vector,
\item $f$ is a binary column vector of length $n$ called {\em final state vector}\index{final state vector}.
\end{itemize}
Note that if the state set is $S=\{s_1,\ldots,s_n\}$ 
and the final state vector is $f=(f_1,\ldots,f_n)^T$, then 
$F = \{s_i\mid f_i=1\}$ is the {\em final state set}\index{final state set}.

Note that $\Sigma^*$ is the free monoid over the alphabet $\Sigma$ and so by the 
universal property of free monoids,
there exists a unique monoid homomorphism $P:\Sigma^*\rightarrow\RR^{n\times n}$ 
from the free monoid $(\Sigma^*,\circ,\epsilon)$ 
to the multiplicative monoid of $n\times n$ real-valued matrices $(\RR^{n\times n},\cdot,I_n)$
that extends the mapping $P:\Sigma\rightarrow\RR^{n\times n}$ given by the 
automaton~\cite{cliff,mihov}.
Thus for each word $u=x_1\ldots x_k\in\Sigma^*$,
the associated matrix is $P(u) = P(x_1)\cdots P(x_k)$.
In particular, if $u=\epsilon$ is the empty word, then $P(\epsilon)=I_n$ is the $n\times n$ unit matrix.
Furthermore, the $(i,j)$th element $p(s_j\mid u, s_i)$ of the matrix $P(u)$
is the transition probability that the automaton enters state $s_j$ when starting in state $s_i$ and reading the word $u$. 


Let $\cA=(S,\Sigma,\{P(x)\mid x\in\Sigma\},\pi,f)$ 
be a stochastic automaton and let $\lambda$ be a real number with $0\leq \lambda\leq 1$.
The set 
\begin{eqnarray}
L(\cA,\lambda) = \{u\in \Sigma^*\mid \pi P(u)f > \lambda\}
\end{eqnarray}
is the {\em language\/}\index{language} of the automaton $\cA$ 
w.r.t.\ the {\em cut point\/}\index{cut point} $\lambda$.
A subset $L\subseteq \Sigma^*$ is a {\em stochastic automaton language} 
if there exists a stochastic automaton $\cA$ and a cut point $\lambda$ with $0\leq \lambda\leq 1$
such that $L=L(\cA,\lambda)$.
The $m$-adic languages provide a class of stochastic automaton languages 
which contains properly the class of regular languages~\cite{claus}.
\begin{example}
Let $m\geq 2$ be an integer.
Put $\Sigma=\{0,\ldots,m-1\}$.
The stochastic automaton $\cA= (\{s_1,s_2\}, \Sigma,  \{P(x)\mid x\in\Sigma\} , \pi, f)$ 
given by
$$P(x) = 
\left(\begin{array}{cc} 1-\frac{x}{m} &\frac{x}{m} \\ 1-\frac{x+1}{m} &\frac{x+1}{m} \end{array}\right), \quad x\in\Sigma, 
$$
$\pi =(1,0),$ and $f = {0\choose 1}$
is called {\em $m$-adic acceptor}.

A word $u=x_1\ldots x_k\in\Sigma^*$ lies in $L(\cA,\lambda)$ iff $\pi P(u) f>\lambda$, 
i.e., the $(1,2)$-entry of the matrix $P(u)$ is larger than $\lambda$.
This element is the $m$-adic representation $0.x_k\ldots x_1$ of $u$.
Thus the language accepted by the automaton $\cA$ w.r.t.\ the cut point $\lambda$ is 
$$L(\cA,\lambda) = \{x_1\ldots x_k\in\Sigma^*\mid 0.x_k\ldots x_1 > \lambda\}.$$
In particular, the language $L(\cA,\lambda)$ is regular iff $\lambda$ is a rational number.

Furthermore, if $\lambda\ne \lambda'$, then $L(\cA,\lambda)\ne L(\cA,\lambda')$ 
and so the class of stochastic automaton languages accepted by $\cA$ for different cut points 
is nondenumerable.
\EXX
\end{example}
%

\section{Generalized Automata} 
The definition of stochastic finite-state automata can be generalized 
by dropping the restrictions imposed by probability~\cite{claus, tura69}.

A {\em generalized automaton\/} is a quintuple 
$$\cA = (S,\Sigma, \{Q(x)\mid x\in \Sigma\}, \pi,f),$$ 
where
\begin{itemize}
\item $S$ is the non-empty finite set of {\em states},
\item $\Sigma$ is the alphabet of {\em input symbols},
\item $Q$ is a collection of $n\times n$ matrices $Q(x)$, $x\in \Sigma$, 
where $n$ is the number of states,
\item $\pi\in\RR^n$ is the {\em initial vector} written as row vector, and
\item $f\in\RR^n$ is the {\em final vector} written as column vector.
\end{itemize}
As already noticed, $\Sigma^*$ is the free monoid over the alphabet $\Sigma$ and so by the 
universal property of free monoids,
there exists a unique monoid homomorphism $Q:\Sigma^*\rightarrow\RR^{n\times n}$ from the free monoid $(\Sigma^*,\circ,\epsilon)$ 
to the multiplicative monoid of $n\times n$ real-valued matrices $(\RR^{n\times n},\cdot,I_n)$
that extends the mapping $Q:\Sigma\rightarrow\RR^{n\times n}$ given by the automaton.
Thus for each word $u=x_1\ldots x_k\in\Sigma^*$, the associated matrix is $Q(u) = Q(x_1)\cdots Q(x_k)$.
In particular, if $u=\epsilon$ is the empty word, then $Q(\epsilon)=I_n$ is the $n\times n$ unit matrix.

Let $\cA = (S,\Sigma, \{Q(x)\mid x\in \Sigma\}, \pi,f)$ be a 
generalized automaton and $\lambda$ be a real number.
The set
\begin{eqnarray}
L(\cA,\lambda) =\{u\in\Sigma^*\mid \pi Q(u) f>\lambda\}
\end{eqnarray}
is the {\em language} accepted by the automaton $\cA$ w.r.t.\ the {\em cut point\/} $\lambda$.
A subset $L$ of $\Sigma^*$ is called a {\em generalized automaton language} 
if there exists a generalized automaton $\cA$
and a real number $\lambda$ such that $L=L(\cA,\lambda)$.

This generalization of stochastic automaton languages 
does not lead to a larger class of languages~\cite{tura69}.
\begin{theorem}[Turakainen]\label{p-tura0}
Each generalized automaton language is a stochastic automaton language.
\end{theorem}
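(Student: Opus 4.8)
The plan is to convert a given generalized automaton $\cA=(S,\Sigma,\{Q(x)\mid x\in\Sigma\},\pi,f)$ with cut point $\lambda$ into a genuine stochastic automaton accepting the same language, in three stages: first normalise the cut point to $0$, then replace the arbitrary real matrices $Q(x)$ by honestly stochastic matrices in such a way that only the \emph{sign} of the acceptance value is preserved, and finally repair $\pi$ and $f$ so that they become an initial distribution and a binary final vector. I would begin with the reduction to cut point $0$. Augmenting the state set by one ``clock'' coordinate, put $\hat Q(x)=\begin{pmatrix}Q(x)&0\\0&1\end{pmatrix}$, $\hat\pi=(\pi,-\lambda)$ and $\hat f=(f,1)^T$. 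Since the extra coordinate multiplies as $1^{|u|}=1$, one gets $\hat\pi\,\hat Q(u)\,\hat f=\pi Q(u)f-\lambda$ for every $u\in\Sigma^*$, so that $L(\cA,\lambda)=\{u\mid\hat\pi\,\hat Q(u)\,\hat f>0\}$, an acceptance with cut point $0$.

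The heart of the argument is a lemma in the spirit of Turakainen: after enlarging the dimension by two further coordinates, one can find real matrices $\bar Q(x)$ that are (i) entrywise nonnegative and (ii) have all row sums equal to one and the same constant $s>0$, together with vectors $\bar\pi,\bar f$ satisfying $\bar\pi\,\bar Q(u)\,\bar f=\hat\pi\,\hat Q(u)\,\hat f$ for all $u$. The use of (ii) rests on the elementary multiplicative property that $A\mathbf 1=s\mathbf 1$ and $B\mathbf 1=t\mathbf 1$ imply $AB\mathbf 1=st\,\mathbf 1$; thus the nonnegative matrices with constant row sum form a submonoid on which the row sum is multiplicative. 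Hence $P(x):=\tfrac1s\bar Q(x)$ are stochastic, $P(u)=s^{-|u|}\bar Q(u)$, and $\bar\pi\,P(u)\,\bar f=s^{-|u|}\,(\pi Q(u)f-\lambda)$. As $s^{-|u|}>0$, this quantity has the same sign as $\pi Q(u)f-\lambda$, so the language with respect to the cut point $0$ is unchanged. This is precisely why the reduction to cut point $0$ is made: the construction can only be expected to control the sign, since the rescaling by $s^{-|u|}$ introduces a length-dependent, though always positive, factor.

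The hard part is the lemma, specifically achieving nonnegativity and constant row sums simultaneously without corrupting the bilinear form. The naive device for removing negative entries, namely adding a large multiple $cJ$ of the all-ones matrix to each $Q(x)$, fails: in a product $(Q(x_1)+cJ)\cdots(Q(x_k)+cJ)$ the mixed terms $Q(x_i)\,cJ$, $cJ\,Q(x_j)$ and $c^2J^2$ do not telescope and destroy the tracking of $Q(u)$. The way around this is to route all additive corrections through the two auxiliary coordinates, whose rows and columns are set up so that in any product $\bar Q(x_1)\cdots\bar Q(x_k)$ the spurious contributions are confined to positions annihilated by $\bar\pi$ or $\bar f$, leaving exactly $\pi Q(u)f$. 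Checking that a single choice of these slack entries at once makes every entry nonnegative, equalises all row sums uniformly over $x\in\Sigma$, and preserves the form is the main computational obstacle; it is handled by scaling so small that each $|Q(x)_{ij}|$ is dominated by the mass available in the auxiliary coordinates. Finally I would massage the data into the exact format demanded by the definition of a stochastic automaton, namely $\pi$ a probability distribution and $f$ a binary column vector with cut point in $[0,1]$, by a few routine coordinate augmentations (one coordinate to absorb the total mass of $\bar\pi$ into a distribution, and a standard device replacing the real vector $\bar f$ by a $0/1$ vector at the cost of adjusting $\pi$ and the cut point), none of which alters the threshold $0$ or the accepted language.
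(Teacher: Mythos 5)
Your overall scaffolding---normalise the cut point to $0$, replace the matrices by nonnegative ones whose row sums equal a common constant $s$ (so that row sums are multiplicative and $P(x)=\tfrac1s\bar Q(x)$ is stochastic with $P(u)=s^{-|u|}\bar Q(u)$), then repair $\pi$ and $f$---is exactly Turakainen's route, which the paper carries out (in the monoidal setting) in Props.~\ref{p-cr0}--\ref{p-0e}. The gap is that your central lemma, which you yourself identify as the heart of the argument, is asserted rather than proved, and the mechanism you sketch for it cannot work as stated. To achieve nonnegativity you must modify the inner $n\times n$ block itself, because that is where the negative entries sit; consequently, in a product $\bar Q(x_1)\cdots\bar Q(x_k)$ the spurious cross terms arise \emph{inside} that block, i.e.\ at exactly the positions that $\bar\pi$ and $\bar f$ must read off in order to recover $\pi Q(u)f$. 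They therefore cannot be ``confined to positions annihilated by $\bar\pi$ or $\bar f$''; they must be \emph{cancelled}. Nor does ``scaling so small'' help: rescaling never changes the sign of an entry.

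The idea you are missing is the zero-sum normalisation of Prop.~\ref{p-cr0}: first border each $Q(x)$ by two extra coordinates so that \emph{every row sum and every column sum} of the bordered matrix $Q_1(x)$ is $0$, while $\pi_1=(0,\pi,0)$ and $f_1=(0,f,0)^T$ annihilate the border. Then, writing $B_r$ for the matrix all of whose entries are $r$, one has $Q_1(x)B_r=B_rQ_1(x)=0$, so the mixed terms you rightly worried about vanish identically and
$$(Q_1(x_1)+B_r)\cdots(Q_1(x_k)+B_r)=Q_1(u)+B_{m^{k-1}r^k}.$$
The accumulated constant matrix is then removed not by annihilation but by cancellation: one further diagonal coordinate carrying $(mr)^k$, entering the initial vector with weight $\alpha/m$ and the final vector with weight $-1$, contributes exactly $-\alpha m^{k-1}r^k$ (Prop.~\ref{p-zz}). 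Without the zero row- and column-sum preprocessing there is no identity making the additive corrections multiplicatively inert, and your budget of two auxiliary coordinates is in any case too small. A secondary inaccuracy: the final ``routine'' repairs are not threshold-preserving. Since $\bar\pi$ has negative entries, turning it into a distribution requires adding a constant vector and \emph{doubling} the state set with $\pm f$ blocks, which shifts the bilinear form by a constant and moves the cut point away from $0$ (Prop.~\ref{p-4}); the binary-$f$ step is a further $k$-fold blow-up (Prop.~\ref{p-5}); and the empty word needs separate treatment (Prop.~\ref{p-0e}). Those parts are repairable along the paper's lines, but the unproved lemma is a genuine gap.
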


The following proposition provides a characterization of stochastic automaton languages 
which does not use the notion of automaton~\cite{claus}.
This description has first been used in two seminal papers~\cite{fliess, schberger}.

\begin{theorem}[Matrix Characterization]\label{p-claus}
A subset $L$ of\/ $\Sigma^*$ is a stochastic automaton language 
iff there exists a collection $\{Q(x)\mid x\in\Sigma\}$
of $n\times n$ matrices for some $n\geq 1$ such that
for each non-empty word $u=x_1\ldots x_k \in\Sigma^*$, 
$$u\in L\quad\Longleftrightarrow\quad (Q(u))_{1,n}>0,$$
where $(Q(u))_{1,n}$ is the $(1,n)$-entry of the matrix $Q(u)$.
\end{theorem}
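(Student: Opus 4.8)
The plan is to prove the two implications separately, dispatching the easy direction with Turakainen's Theorem~\ref{p-tura0} and handling the harder direction with an explicit matrix construction.

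For the ''if'' direction, suppose a collection $\{Q(x)\mid x\in\Sigma\}$ of $n\times n$ matrices with the stated property is given. I would read these matrices as the transition matrices of a \emph{generalized} automaton $\cA=(S,\Sigma,\{Q(x)\mid x\in\Sigma\},e_1,e_n)$, taking $e_1=(1,0,\ldots,0)$ as the initial vector and $e_n=(0,\ldots,0,1)^T$ as the final vector. Then $e_1 Q(u) e_n=(Q(u))_{1,n}$ for every word $u$, so with cut point $\lambda=0$ the hypothesis gives $u\in L(\cA,0)\iff (Q(u))_{1,n}>0\iff u\in L$ for all non-empty $u$. Hence $L$ and $L(\cA,0)$ agree on $\Sigma^+$, so $L\cap\Sigma^+$ is a generalized automaton language, and by Theorem~\ref{p-tura0} it is a stochastic automaton language. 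The only point needing care is the empty word: since $(Q(\epsilon))_{1,n}=(I_n)_{1,n}=0$ for $n\ge 2$, membership of $\epsilon$ is not governed by the hypothesis and is fixed by a routine modification of the automaton.

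For the ''only if'' direction, let $L=L(\cA,\lambda)$ for a stochastic automaton $\cA=(S,\Sigma,\{P(x)\mid x\in\Sigma\},\pi,f)$ with $n$ states, so that $u\in L\iff \pi P(u)f>\lambda\iff \pi P(u)f-\lambda>0$. I would build matrices of size $N=n+3$ whose $(1,N)$-entry realises the affine quantity $\pi P(u)f-\lambda$. Concretely, partitioning the indices into blocks of sizes $1,n,1,1$, set
\[
Q(x)=\begin{pmatrix} 0 & \pi P(x) & 1 & \pi P(x)f-\lambda\\ 0 & P(x) & 0 & P(x)f\\ 0 & 0 & 1 & -\lambda\\ 0 & 0 & 0 & 0\end{pmatrix}.
\]
A path-counting (or inductive) computation then shows $(Q(u))_{1,N}=\pi P(u)f-\lambda$ for every non-empty $u$: the middle block propagates $\pi P(x_1)\cdots P(x_k)f$ into the corner through the $P(x)$-block, while the isolated self-looping state (the third block, carrying a $1$ on its diagonal and $-\lambda$ into the last coordinate) contributes the constant $-\lambda$ for every word length. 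Because the first block has zero diagonal and the last coordinate is absorbing, no further paths reach position $(1,N)$, so the two contributions never interfere. Consequently $u\in L\iff (Q(u))_{1,N}>0$, as required.

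The substance of the argument lies entirely in this last construction, and its delicate feature is the uniform subtraction of the cut point: a constant must be injected into the corner entry independently of the (unbounded) length of $u$, which is exactly what the diagonal $1$ at the auxiliary self-looping state accomplishes, while the surrounding zeros guarantee that this state neither feeds into nor is fed by the $P(x)$-block. Verifying that the corner entry telescopes to $\pi P(u)f-\lambda$ without spurious paths is the only genuine calculation; the remaining bookkeeping, together with the empty-word adjustment in the converse, is routine.
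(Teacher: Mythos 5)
Your proof is correct, and its harder half takes a genuinely different route from the paper's. (The paper itself states Theorem~\ref{p-claus} without proof, citing the literature, but proves the monoidal analogue, Theorem~\ref{p-mga-L}; that proof is the natural comparison.) The ``if'' directions coincide: like the paper, you read the matrices as a generalized automaton with initial vector $e_1$, final vector $e_n$ and cut point $0$, note $\pi Q(u)f=(Q(u))_{1,n}$, invoke Turakainen's theorem, and patch the empty word (the paper's Prop.~\ref{p-0e} is exactly your ``routine modification''). In the ``only if'' direction the paper never works with the original cut point: it first pushes $L$ through the normalization chain (Props.~\ref{p-cr0}--\ref{p-0}) to get an automaton $\cA_3$ with stochastic matrices and cut point $0$, and only then borders $Q_3(x)$ with two extra states so that the first row becomes $(0,\,\pi_3Q_3(x),\,\pi_3Q_3(x)f_3)$ and the last column becomes $(\pi_3Q_3(x)f_3,\,Q_3(x)f_3,\,0)^T$; the corner entry is then $\pi_3Q_3(u)f_3$ with no constant left to subtract. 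You instead absorb an arbitrary cut point $\lambda$ in a single construction: the auxiliary state with a $1$ on its diagonal and $-\lambda$ into the last column injects $-\lambda$ exactly once independently of $|u|$, and since the last row of $Q(x)$ is zero, every path from index $1$ to index $N$ either stays in the $P$-block (contributing $\pi P(u)f$) or rides the self-loop (contributing $-\lambda$), so $(Q(u))_{1,N}=\pi P(u)f-\lambda$ as you claim; your block multiplication indeed reproduces this form inductively. The trade-off: your construction is self-contained and economical (three extra states on top of the original $n$, no normalization needed in this direction), whereas the paper's version is an almost immediate corollary of machinery it must build anyway for the generalized Turakainen theorem, keeping all cut-point manipulation localized in reusable propositions rather than inside the bordered matrix. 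One cosmetic caveat: your final state is not ``absorbing'' in the usual sense (its row is zero, so it is a dead state); that is precisely what prevents the stale corner value from propagating, and the argument should say so.
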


\begin{example}\label{e-cos}
Let $\varphi$ be a real number. 
Consider the rotation matrix
$$R_\varphi = \left( \begin{array}{rc} \cos(2\pi\varphi) & \sin(2\pi\varphi) \\ -\sin(2\pi\varphi) & \cos(2\pi\varphi) \end{array}\right).  $$
Then for each integer $n\geq 1$,
$$R_\varphi^n=\left(\begin{array}{rc} \cos(2\pi n\varphi) & \sin(2\pi n\varphi) \\ -\sin(2\pi n\varphi) & \cos(2\pi n\varphi)\end{array}\right).$$
Take the alphabet $\Sigma=\{x\}$.
Then by Prop.~\ref{p-claus},
the stochastic automaton language $L=L_\varphi \subseteq \Sigma^*$ contains the non-empty words $x^n$ 
whenever $\sin(2\pi n\varphi)>0$.
For instance, if $\varphi=30^o$, then $L$ will contain the non-empty words $x$ and $x^{4i},x^{4i+1}$ for each $i\geq 1$.
Note that if the initial vector $\pi$ and the final vector $f$ are specified, 
the value of $\pi f$ will determine whether the empty word lies in $L$.
\EXX
\end{example}

Stochastic automaton languages are closed under several set-theoretic operations~\cite{claus}.
\begin{proposition}[Closure Properties]\label{p-closed}
Let $\Sigma,\Omega$ be alphabets, let $L,L_1,L_2$ be stochastic automaton languages over $\Sigma$,
and let $R$ be a regular language over $\Sigma$.
\begin{itemize}
\item
The mirror image of\/ $L$ is a stochastic automaton language.
\item 
$L\cap R$, $L\cup R$, and $L\setminus R$ are stochastic automaton languages.
\item
If\/ $\Sigma=\{x\}$ is a singleton set, the complement $\overline L = \{x\}^*\setminus L$ is a stochastic
automaton language.
\item
$L_1\cap L_2$, $L_1\cup L_2$, $L_1\circ L_2$ (product or concatenation) 
and $L^*$ (Kleene star) are generally not stochastic automaton languages.
\item 
If\/ $\phi:\Sigma^*\rightarrow\Omega^*$ is a monoid homomorphism, 
the image $\phi(L)$ is generally not a stochastic automaton language.
However, if\/ $\Sigma=\{x\}$ is a singleton set, the image $\phi(L)$ is a stochastic automaton language.
\end{itemize}
\end{proposition}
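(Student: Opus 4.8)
The plan is to split the proof into the constructive (positive) closure statements, all of which will follow from Turakainen's theorem (Thm.~\ref{p-tura0}) applied to suitable product and block constructions, and the non-closure statements, which demand explicit counterexamples. For the mirror image I would pass to a generalized automaton: if $\cA$ accepts $L$ via matrices $P(x)$, initial $\pi$, final $f$ and cut point $\lambda$, I set $Q(x)=P(x)^{T}$, $\pi'=f^{T}$, $f'=\pi^{T}$. Reversing a word transposes the matrix product, so for the mirror $\tilde u$ of $u$ one gets $\pi'Q(\tilde u)f'=f^{T}P(u)^{T}\pi^{T}=\pi P(u)f$ (a scalar equals its transpose); the generalized automaton accepts exactly the mirror of $L$ at cut point $\lambda$, and Turakainen converts it to a stochastic automaton. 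For operations with a regular $R$, recall $R$ is accepted by a deterministic automaton with $0$--$1$ transition matrices $M(x)$, start indicator $\rho$ and accept indicator $g$, so that $\rho M(u)g=\mathbf{1}[u\in R]$. The tensor machine $Q(x)=P(x)\otimes M(x)$, $\pi'=\pi\otimes\rho$, $f'=f\otimes g$ has value $(\pi P(u)f)\,\mathbf{1}[u\in R]$, which settles $L\cap R$ at any cut point $\lambda\ge 0$ and, since $\overline R$ is regular, $L\setminus R=L\cap\overline R$. For $L\cup R$ I would instead use the block-diagonal machine $Q(x)=\mathrm{diag}(P(x),M(x))$ with $\pi'=(\pi,c\rho)$ and $f'=(f,g)^{T}$, giving value $\pi P(u)f+c\,\mathbf{1}[u\in R]$; choosing $c$ larger than the range of $\pi P(u)f$ lets the cut point $\lambda$ separate $L\cup R$ correctly. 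In each case Turakainen finishes.

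For the complement over a one-letter alphabet, write $a_n=\pi P(x)^{n}f$, so $\overline L=\{x^{n}\mid a_n\le\lambda\}$. The piece $\{x^{n}\mid a_n<\lambda\}$ is a generalized automaton language (negate the value, cut point $-\lambda$), hence stochastic. It then suffices that the boundary set $E=\{x^{n}\mid a_n=\lambda\}$ be regular, for then $\overline L$ is a union of a stochastic language with a regular one, covered by the case already treated. The key observation is that $a_n$ is a linear recurrence sequence, so $b_n=a_n-\lambda$ is linear-recurrent as well; by the Skolem--Mahler--Lech theorem its zero set is eventually periodic, and over a one-letter alphabet eventual periodicity is exactly regularity. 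This boundary analysis is the delicate part, and it is precisely what has no analogue for larger alphabets.

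For a homomorphism $\phi:\{x\}^{*}\to\Omega^{*}$ with $\phi(x)=w=y_1\cdots y_\ell$, note $\phi(L)=\{w^{n}\mid a_n>\lambda\}$. If $w=\epsilon$ the image lies in $\{\epsilon\}$ and is trivial; otherwise I would build a generalized automaton over $\Omega$ on the state set $\{1,\dots,\ell\}\times S$ that cycles through the $\ell$ phases of $w$ and multiplies by $P(x)$ whenever a full block $w$ is completed, with a dead state absorbing any symbol deviating from the pattern $w^{*}$. Supporting $f'$ only on the block-boundary phase forces the value on input $v$ to be $\pi P(x)^{n}f$ when $v=w^{n}$ and $0$ otherwise, so cut point $\lambda\ge 0$ yields $\phi(L)$, and Turakainen finishes.

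The remaining statements are negative: $L_1\cap L_2$, $L_1\cup L_2$, $L_1\circ L_2$, $L^{*}$ and a general homomorphic image $\phi(L)$ need not be stochastic. These I would establish by exhibiting counterexamples built from the $m$-adic acceptor and the rotation language of the examples above, citing~\cite{claus} for the verifications. I expect the real obstacle to lie here: proving that a concrete language is \emph{not} stochastic is an impossibility claim requiring a lower-bound tool (density, rank or ``number of states'' arguments, or the constrained structure of accepted unary sets), rather than the uniform tensor-and-Turakainen recipe that drives every positive part. Among the positive results, only the unary complement carries a comparable subtlety, namely the regularity of the boundary set $E$ argued above.
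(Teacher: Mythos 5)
The paper itself gives no proof of Prop.~\ref{p-closed}; it is quoted from \cite{claus}, so your attempt can only be measured against the constructions the paper carries out elsewhere, namely the monoidal analogues in its final section. Measured that way, your positive arguments are correct and essentially the same devices the paper uses: transposition with swapped initial and final vectors for the mirror image (the paper's Mirror Images proposition), a block-diagonal sum with a $0/1$ deterministic automaton for $L\cup R$, $L\cap R$, $L\setminus R$ (the paper's Set Operations proposition, which uses the weighting $\frac{1}{2}(\pi,\pi')$ where you use a large constant $c$; both are valid because stochastic values lie in $[0,1]$, so any $c>\lambda$ separates correctly), and the Kronecker product, which the paper reserves for Cartesian products; each reduction then ends with Thm.~\ref{p-tura0}, exactly as intended. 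The two delicate unary claims, which the paper nowhere proves, you handle correctly: $a_n=\pi P(x)^n f$ satisfies a linear recurrence by Cayley--Hamilton, so by Skolem--Mahler--Lech the boundary set $\{n\mid a_n=\lambda\}$ is ultimately periodic, hence regular, and $\overline L$ decomposes as a union of a generalized (hence, by Thm.~\ref{p-tura0}, stochastic) language with a regular one; the phase-cycling automaton for $\phi(L)$ with $\phi(x)=w$ is also sound, since the dead state and the support of the final vector on the block boundary force the value $0$ outside $w^*$, and the cut point is $\ge 0$. The non-closure claims you defer to counterexamples cited from \cite{claus}; this matches the paper, which likewise proves none of them but invokes \cite{fliess} in the example following the proposition, where non-closure under union and intersection is derived from the rotation languages of Ex.~\ref{e-cos}. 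In short, your proposal supplies correct proofs for every positive claim the paper leaves to citation, and defers the impossibility claims exactly as the paper does.
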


\begin{example} 
The class of stochastic automaton languages is not closed under union and intersection. 
To see this, consider the generalized automaton languages $L_\varphi$ from Ex.~\ref{e-cos}. 
Take real numbers $\varphi_1$ and $\varphi_2$ which are linearly independent over the rationals, i.e.,
there exist no rational numbers $r,r_1,r_2$ such that $r+r_1\varphi_1+r_2\varphi_2=0$.
Then the set $L_{\varphi_1}\cup L_{\varphi_2} \subseteq \{x\}^*$ 
is not a stochastic automaton language~\cite{fliess}.

By De Morgan's law, 
$L_{\varphi_1}\cup L_{\varphi_2} = \overline{\overline L_{\varphi_1}\cap \overline L_{\varphi_2} }$
and by Prop.~\ref{p-closed}, the complement $\overline L$ of a stochastic language $L\subseteq \{x\}^*$ 
is stochastic.
Hence, the intersection 
$\overline L_{\varphi_1}\cap \overline L_{\varphi_2}$
cannot be a stochastic automaton language.
\EXX 
\end{example}

Stochastic automaton languages are not closed under set-theoretic complement.
The opposite holds for isolated cut points~\cite{claus}.
A cut point $\lambda$ is {\em isolated\/} for a stochastic or generalized automaton 
$\cA$ if there exists a real number $\delta>0$ such that for all words $u\in\Sigma^*$, 
\begin{eqnarray}
|\lambda - \pi Q(u) f| \geq \delta.
\end{eqnarray}
\begin{proposition}[Closure under Complement]\label{p-com}
Let $L$ be a stochastic automaton language.
If\/ $L$ is accepted by the stochastic automaton $\cA$ w.r.t.\ 
the cut point $\lambda$ and $\lambda$ is isolated for~$\cA$, 
the complement $\overline L = \Sigma^*\setminus L$ is also a stochastic automaton language.
\end{proposition}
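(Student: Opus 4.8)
The plan is to show that if $\lambda$ is an isolated cut point for the stochastic automaton $\cA$ accepting $L$, then the complement $\overline{L}$ can be obtained by shifting the cut point slightly and flipping the inequality, realizing the flipped inequality again as a stochastic automaton language. The core observation is that isolation gives us a gap of width $2\delta$ around $\lambda$ that no acceptance value $\pi Q(u) f$ ever enters, so for every word $u$ we have either $\pi Q(u) f \geq \lambda + \delta$ or $\pi Q(u) f \leq \lambda - \delta$. Consequently $u \in \overline{L}$ exactly when $\pi Q(u) f < \lambda$, which by isolation is equivalent to $\pi Q(u) f \leq \lambda - \delta$, and hence to $\pi Q(u) f < \lambda - \delta/2$. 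The task is thus to recognize a language defined by a \emph{strict-less-than} condition against a cut point.

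First I would record that the original automaton writes $L = \{u \mid \pi P(u) f > \lambda\}$ with $P(u)$ stochastic. The naive move is to negate the linear functional: set $f' = -f$ (or equivalently negate $\pi$), giving $\pi P(u) f' = -\pi P(u) f$, so that $\pi P(u) f < \mu$ becomes $\pi P(u) f' > -\mu$. Applying this with $\mu = \lambda - \delta/2$ turns the less-than condition for $\overline{L}$ into the strict greater-than condition $\pi P(u) f' > -(\lambda - \delta/2)$. This exhibits $\overline{L}$ as the language of a \emph{generalized} automaton $\cA' = (S, \Sigma, \{P(x)\}, \pi, f')$ with cut point $\lambda' = -(\lambda - \delta/2)$, since dropping the probability restriction on the final vector costs nothing in the generalized setting.

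Second, I would invoke Turakainen's theorem (Thm.~\ref{p-tura0}): every generalized automaton language is a stochastic automaton language. Since $\overline{L} = L(\cA', \lambda')$ is a generalized automaton language by the previous step, it is automatically a stochastic automaton language, which is exactly the claim. This two-line reduction is clean precisely because the generalized framework absorbs the sign change and the shifted cut point, and Turakainen then pulls us back into the stochastic world.

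\textbf{The main obstacle} is the boundary case and the strictness of the inequalities. The definition of $L$ uses $>\lambda$, so $\overline{L}$ is $\{u \mid \pi P(u) f \leq \lambda\}$, which includes the equality case; isolation is what rescues us, since it guarantees no value equals $\lambda$ (indeed none lies within $\delta$ of it), so $\leq \lambda$ collapses to $< \lambda - \delta/2$, a strict inequality of the required form. I would therefore take care to argue that the shifted cut point $\lambda - \delta/2$ is chosen to sit strictly inside the forbidden gap, so that no word achieves the acceptance value $\lambda - \delta/2$ exactly and the membership condition for $\overline{L}$ matches the strict-greater-than template of a language exactly. A secondary technical point is to confirm that negating the final vector does not violate any requirement of the generalized automaton (it does not, as the final vector there is an arbitrary real vector), so the appeal to Turakainen is legitimate.
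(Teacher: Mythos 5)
Your proof is correct; note, however, that the paper states this proposition without any proof (it is quoted from Claus's book), so your argument stands on its own rather than mirroring anything in the text. Your route is sound at every step: isolation collapses $\overline L=\{u\mid \pi P(u)f\le\lambda\}$ to the strict condition $\{u\mid \pi P(u)f<\lambda-\delta/2\}$; negating the final vector converts ``$<$'' into ``$>$'', which is legal because in a generalized automaton the final vector is an arbitrary real vector and the cut point an arbitrary real number; and Theorem~\ref{p-tura0} then returns the language to the stochastic class. It is worth comparing this with the classical argument that your reduction replaces. The traditional proof never leaves the stochastic world: one keeps the same stochastic matrices and replaces the binary final vector $f$ by its binary complement $(1,\ldots,1)^T-f$; since $\pi P(u)$ is again a probability vector, $\pi P(u)\bigl((1,\ldots,1)^T-f\bigr)=1-\pi P(u)f$, so by isolation $\overline L$ is accepted by this stochastic automaton w.r.t.\ the cut point $1-\lambda+\delta/2$ (with a small side case when that number exceeds $1$, where $\overline L=\emptyset$). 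That argument is more elementary, needing only the row-sum property of stochastic matrices, whereas yours invokes the full strength of Turakainen's theorem --- but since the paper states that theorem before this proposition, your appeal to it is legitimate, and your proof is arguably shorter. A final observation your argument makes visible: isolation is used only to exclude values at the boundary, so your reduction actually proves the stronger statement that $\overline L$ is stochastic whenever no word satisfies $\pi P(u)f=\lambda$ exactly (negate $f$, take cut point $-\lambda$, apply Theorem~\ref{p-tura0}); the $\delta/2$ shift is harmless but not needed.
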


\section{Monoidal Automata} 

Monoidal finite-state automata are introduced as a generalization of classical finite-state automata~\cite{kufi,mihov}.
They are defined over an arbitrary monoid as input set instead of the free monoid over an alphabet.

A {\em monoidal automaton} 
is a quintuple 
$$\cA=(S,M,I,F,\Delta),$$ 
where
\begin{itemize}
\item $S$ is the non-empty finite set of {\em states},
\item $(M,\circ,e)$ is a finitely-generated monoid, where $M$ is the set of {\em input symbols},
\item $I\subseteq S$ is the set of {\em initial states}, 
\item $F\subseteq S$ is the set of {\em final states}, and 
\item $\Delta \subseteq S\times M\times S$ is a finite set called the {\em transition relation}. 
\end{itemize}
Triples $(s,x,s')\in\Delta$ are called {\em transitions}.
The transition $(s,x,s')\in\Delta$ {\em begins\/} in state $s$, {\em ends\/} in state $s'$ 
and has {\em label\/} $x$.

Let $\cA=(S,M,I,F,\Delta)$ be a monoidal automaton.
A {\em proper path\/} in $\cA$ is a finite sequence of $k\geq 1$ transitions
\begin{eqnarray}\label{e-path}
\pi = (s_0,x_1,s_1), (s_1,x_2,s_2), \ldots ,(s_{k-1},x_k,s_k),
\end{eqnarray}
where $(s_{j-1},x_j,s_j)\in\Delta$ for each $1\leq j\leq k$.
The number $k$ is the {\em length\/} of the path $\pi$ and
it is said that the path $\pi$ starts in state $s_0$ and ends in state~$s_k$.
Moreover, the element $u=x_1x_2\ldots x_k = x_1\circ x_2\circ \ldots \circ x_k\in M$ 
is the {\em label\/} of the path $\pi$.
In particular, the {\em null path\/} is a proper path of the form $(s,\epsilon,s)$, where $s\in S$.
A {\em successful path\/} is a proper path which starts in an initial state and ends in a final state.

The {\em generalized transition relation\/} $\Delta^*$ is the smallest subset of $S\times M\times S$ 
containing $\Delta$ with the following closure properties:
\begin{itemize}
\item For each $s\in S$, $(s,\epsilon,s)\in\Delta^*$.
\item For all $s_1,s_2,s_3\in S$ and $u,x\in M$, if $(s_1,u,s_2)\in\Delta^*$ and $(s_2,x,s_3)\in\Delta$, 
then $(s_1,ux,s_3) \in \Delta^*$.
\end{itemize}
Triples $(s,u,s')\in\Delta^*$ are called {\em generalized transitions}.
The generalized transition $(s,u,s')\in\Delta^*$ {\em begins\/} in state $s$, {\em ends\/} in state $s'$ 
and has {\em label\/} $u$.

Let $\cA=(S,M,I,F,\Delta)$ be a monoidal automaton.
The set of labels of all successful paths in $\cA$, i.e.,
\begin{eqnarray}
L(\cA) = \{ u\in M\mid \exists i\in I: \exists f\in F: (i,u,f)\in \Delta^*\},
\end{eqnarray}
is called the {\em language\/} accepted by $\cA$. 
A subset $L$ of $M$ is called a {\em monoidal automaton language\/} over $M$ 
if there exists a monoidal automaton $\cA$ such that $L=L(\cA)$.

Let $(M,\circ,e)$ be a monoid.
Each subset $L$ of $M$ is a {\em  monoidal language} over $M$. 
The {\em  monoidal regular languages} over $M$ 
are monoidal languages over $M$ which are inductively defined as follows:
\begin{itemize}
\item $\emptyset$ and $\{m\}$ for each $m\in M$ are monoidal regular languages over $M$.
\item If $L_1$ and $L_2$ are monoidal regular languages over $M$, then $L_1\cup L_2$ (union), 
$L_1\circ L_2$ (monoidal product or concatenation) and $L_1^*$ (monoidal Kleene star) 
are monoidal regular languages over $M$.
\end{itemize}

\begin{proposition}[Regular Languages]\label{p-m-reg}
A monoidal language is regular iff it is a monoidal automaton language
\end{proposition}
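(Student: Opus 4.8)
The plan is to establish both implications by adapting the classical Kleene theorem to the monoidal setting. The key observation is that the automaton constructions involved are purely graph-theoretic on the transition structure and treat the elements of $M$ as opaque labels, so they transfer almost verbatim; associativity of $\circ$ supplies the one algebraic fact that is genuinely needed, namely that the label of a concatenation of paths is the $\circ$-product of the labels of the pieces.

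First I would show that every monoidal regular language over $M$ is a monoidal automaton language, by structural induction on the inductive definition of the monoidal regular languages. For the base cases, $\emptyset$ is accepted by the automaton with empty final-state set, and $\{m\}$ for any $m\in M$ is accepted by the two-state automaton $\cA=(\{s_0,s_1\},M,\{s_0\},\{s_1\},\{(s_0,m,s_1)\})$, whose only successful path carries the label $m$. For the inductive step I would use the standard constructions: a disjoint union of the two automata, taking the union of the initial and of the final state sets, realizes $L_1\cup L_2$; linking every final state of the first automaton to every initial state of the second by an $e$-labeled transition, with initial set that of the first and final set that of the second, realizes the monoidal product $L_1\circ L_2$; and the usual one-extra-state construction with $e$-labeled transitions realizes the Kleene star $L_1^*$. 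Here I would use that $e\in M$ and that null paths lie in $\Delta^*$, so $e$-labeled ("silent") transitions are legitimate and contribute the identity to a path label, and I would note that all of these constructions keep $S$ and $\Delta$ finite, as required.

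For the converse I would use Kleene's algorithm (the McNaughton--Yamada recursion). Given a monoidal automaton $\cA=(S,M,I,F,\Delta)$, for states $p,q\in S$ and a subset $K\subseteq S$ let $L_{p,q}^{K}\subseteq M$ be the set of labels of generalized transitions from $p$ to $q$ all of whose intermediate states lie in $K$. The base case is $L_{p,q}^{\emptyset}=\{x\mid (p,x,q)\in\Delta\}$, augmented by $\{e\}$ when $p=q$ to account for the null path; this is a finite union of singletons, hence monoidal regular. Inducting on $|K|$, choosing any $r\in K$ and splitting a path at its first and last visit to $r$ yields
$$L_{p,q}^{K}=L_{p,q}^{K\setminus\{r\}}\cup L_{p,r}^{K\setminus\{r\}}\circ\bigl(L_{r,r}^{K\setminus\{r\}}\bigr)^{*}\circ L_{r,q}^{K\setminus\{r\}},$$
which exhibits $L_{p,q}^{K}$ as built from monoidal regular languages by union, product and star, hence monoidal regular. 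Finally $L(\cA)=\bigcup_{i\in I,\,f\in F}L_{i,f}^{S}$ is a finite union of monoidal regular languages and is therefore monoidal regular.

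The step needing the most care, and the only place where $M$ failing to be free is a potential concern, is the recursion in the converse: I must verify it as an equality of subsets of $M$, checking both inclusions. The inclusion $\subseteq$ is the path-decomposition argument above, and $\supseteq$ holds because each factor corresponds to an actual path and concatenating those paths produces a path from $p$ to $q$ through $K$ whose label is exactly the $\circ$-product of the factor labels. Because the language product is defined elementwise and $\circ$ is associative, no unique factorization of monoid elements is required; only set equality of the values is claimed, so the absence of freeness in $M$ causes no difficulty. I would also double-check the handling of the identity $e$, that is, the null-path contributions in the base case and inside $\bigl(L_{r,r}^{K\setminus\{r\}}\bigr)^{*}$, since that is where off-by-one errors typically hide.
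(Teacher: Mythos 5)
Your proof is correct, but there is no paper proof to measure it against: Proposition~\ref{p-m-reg} is stated without proof, being imported from the monoidal-automata literature~\cite{kufi,mihov}. Your argument is the natural one --- the Kleene/McNaughton--Yamada theorem transplanted to an arbitrary monoid --- and you put the emphasis exactly where it belongs. Both directions survive the loss of freeness because every construction manipulates paths and treats labels as opaque: in one direction, the union/product/star automaton constructions only need $e$-labelled transitions (legitimate, since $\Delta$ is an arbitrary finite subset of $S\times M\times S$ and null paths lie in $\Delta^*$) together with associativity of $\circ$; in the other, the recursion $L_{p,q}^{K}=L_{p,q}^{K\setminus\{r\}}\cup L_{p,r}^{K\setminus\{r\}}\circ\bigl(L_{r,r}^{K\setminus\{r\}}\bigr)^{*}\circ L_{r,q}^{K\setminus\{r\}}$ is an equality of label sets, proved by path decomposition for one inclusion and path concatenation for the other, and since the language product is defined elementwise, no unique-factorization property of $M$ is ever invoked --- exactly the point you flag. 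Two small repairs: define $L_{p,q}^{K}$ as the set of labels of proper paths (plus the null-path label $e$ when $p=q$) from $p$ to $q$ whose intermediate states lie in $K$, rather than of generalized transitions, since the latter are bare triples in $S\times M\times S$ and have no well-defined intermediate states; and note that the statement implicitly assumes $M$ finitely generated, because the paper only defines monoidal automata over finitely generated monoids --- your constructions never actually need this, as $\Delta$ is finite in any case.
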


Let $(M,\circ,e)$ and $=(M',\odot,e')$ be monoids, let $\phi:M\rightarrow M'$ be a monoid homomorphism,
and let $\cA=(S,M,I,F,\Delta)$ be a monoidal automaton. 
The monoidal automaton 
\begin{eqnarray}
\cA' = (S,M',I,F,\Delta') 
\end{eqnarray}
with the transition relation
\begin{eqnarray}
\Delta' = \{(s,\phi(x),s')\mid (s,x,s') \in\Delta \}
\end{eqnarray}
is the {\em homomorphic image} of $\cA$ under $\phi$.
\begin{proposition}[Homomorphic Images]
If $\cA$ is a monoidal automaton over $M$ and $\cA'$ is its homomorphic image 
under the homomorphism $\phi:M\rightarrow M'$, then
$$L(\cA') = \phi(L(\cA)).$$
\end{proposition}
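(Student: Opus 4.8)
The plan is to reduce the set equality $L(\cA')=\phi(L(\cA))$ to a label-tracking correspondence between the generalized transition relations $\Delta^*$ and $(\Delta')^*$. Writing $e,e'$ for the identities of $M,M'$ (so $\phi(e)=e'$, since $\phi$ is a monoid homomorphism), I would establish, for all states $s,s'\in S$, the two claims: \emph{(i)} if $(s,u,s')\in\Delta^*$ then $(s,\phi(u),s')\in(\Delta')^*$; and \emph{(ii)} if $(s,v,s')\in(\Delta')^*$ then there is some $u\in M$ with $\phi(u)=v$ and $(s,u,s')\in\Delta^*$. Each claim is proved by induction following the inductive definition of the relevant generalized transition relation.

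For claim (i) I would induct on the closure rules defining $\Delta^*$. The base case $(s,e,s)\in\Delta^*$ is covered by $\phi(e)=e'$ together with the base rule $(s,e',s)\in(\Delta')^*$. In the inductive step, a pair $(s_1,u,s_2)\in\Delta^*$ and $(s_2,x,s_3)\in\Delta$ yielding $(s_1,u\circ x,s_3)\in\Delta^*$ is handled as follows: the induction hypothesis gives $(s_1,\phi(u),s_2)\in(\Delta')^*$, the definition of $\Delta'$ gives $(s_2,\phi(x),s_3)\in\Delta'$, and the closure rule for $(\Delta')^*$ then produces $(s_1,\phi(u)\odot\phi(x),s_3)\in(\Delta')^*$; homomorphy rewrites $\phi(u)\odot\phi(x)=\phi(u\circ x)$.

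Claim (ii) is the direction that carries the real content, and it is where I expect the only genuine subtlety. One inducts on the definition of $(\Delta')^*$ and must \emph{lift} each step back to $\Delta$. The base case takes $u=e$. In the inductive step a transition $(s_2,y,s_3)\in\Delta'$ is, by the very construction of $\Delta'$, of the form $y=\phi(x)$ for some $x$ with $(s_2,x,s_3)\in\Delta$; choosing such a preimage $x$ and appending it to the word delivered by the induction hypothesis gives the desired lift, using $\phi(u\circ x)=\phi(u)\odot y$. The point to watch is that $\phi$ need not be injective, so the preimage $x$ is in general not unique; the argument only needs \emph{existence} of a preimage, which the definition of $\Delta'$ guarantees.

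It remains to assemble the pieces, recalling that $\cA$ and $\cA'$ share the same initial set $I$ and final set $F$. Applying (i) to a successful path of $\cA$ with endpoints $i\in I$, $f\in F$ shows $\phi(L(\cA))\subseteq L(\cA')$, while applying (ii) to a successful path of $\cA'$ shows $L(\cA')\subseteq\phi(L(\cA))$. The two inclusions together give $L(\cA')=\phi(L(\cA))$.
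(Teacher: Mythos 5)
Your proof is correct. Note that the paper itself gives no proof of this proposition --- it is stated as background on monoidal automata, cited from the literature --- so there is nothing in the paper to compare against; your two structural inductions are the standard argument one would supply. Both directions are sound: claim (i) pushes derivations in $\Delta^*$ forward through $\phi$, and claim (ii) lifts derivations in $(\Delta')^*$ back, using exactly what you isolate as the key point, namely that only the \emph{existence} of a preimage of each label is needed, and that existence is guaranteed by the construction of $\Delta'$ (injectivity of $\phi$ is irrelevant). One small formal remark: since $\Delta^*$ is defined as the smallest set \emph{containing} $\Delta$ and closed under the two rules, a structural induction over it has three cases, not two --- membership of $\Delta$ itself (respectively of $\Delta'$ in claim (ii)) is a base case alongside the null transitions. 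Both instances are immediate: $(s,x,s')\in\Delta$ yields $(s,\phi(x),s')\in\Delta'\subseteq(\Delta')^*$, and every $(s,y,s')\in\Delta'$ has, by definition, a preimage $(s,x,s')\in\Delta\subseteq\Delta^*$ with $\phi(x)=y$. Alternatively, this case is subsumed by your inductive step applied to a null transition, so the omission is harmless, but it deserves a sentence.
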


{\em Classical automata} are monoidal automata 
where the underlying monoid of input symbols is the free monoid $\Sigma^*$ over an alphabet $\Sigma$ 
and the transition labels are in the set $\Sigma\cup\{\epsilon\}$.
An monoidal automaton language accepted by a classical automaton is called {\em classical automaton language}.

\begin{proposition}[Classical Languages]
Each monoidal automaton is the homomorphic image of a classical automaton.
Each monoidal automaton language can be established as the homomorphic image of a classical automaton language.
\end{proposition}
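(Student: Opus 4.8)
The plan is to prove the first (automaton-level) statement constructively and then to derive the second (language-level) statement as an immediate consequence of the Homomorphic Images proposition. The governing observation is that a classical automaton may only carry labels from $\Sigma\cup\{\epsilon\}$, whereas a general monoidal automaton $\cA=(S,M,I,F,\Delta)$ may carry arbitrary elements of $M$ as labels. Since $\Delta$ is finite, however, only finitely many distinct labels actually occur, and I would exploit this by letting the input alphabet index precisely these labels rather than a generating set of $M$.

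Concretely, let $X=\{x_1,\dots,x_r\}$ be the set of labels occurring in $\Delta$, introduce an alphabet $\Sigma=\{a_1,\dots,a_r\}$ with one letter per element of $X$, and let $\phi\colon\Sigma^*\to M$ be the unique monoid homomorphism with $\phi(a_i)=x_i$ guaranteed by the universal property of the free monoid. I would then build the classical automaton $\mathcal{B}=(S,\Sigma^*,I,F,\Delta_0)$ on the same states, with $\Delta_0=\{(s,a_i,s')\mid (s,x_i,s')\in\Delta\}$, using the letter $\epsilon$ in place of any label equal to $e$. By construction every label of $\mathcal{B}$ lies in $\Sigma\cup\{\epsilon\}$, so $\mathcal{B}$ is indeed classical.

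The verification step is to compute the homomorphic image of $\mathcal{B}$ under $\phi$. Its transition relation is $\{(s,\phi(y),s')\mid (s,y,s')\in\Delta_0\}$, and since $\phi(a_i)=x_i$ and $\phi(\epsilon)=e$, this set is exactly $\Delta$; as the state set, initial set and final set are unchanged, the homomorphic image equals $\cA$ itself, which proves the first statement. For the second, given any monoidal automaton language $L=L(\cA)$, the Homomorphic Images proposition applied to $\mathcal{B}$ and $\phi$ yields $L=L(\cA)=\phi(L(\mathcal{B}))$; since $L(\mathcal{B})$ is by definition a classical automaton language, $L$ is its homomorphic image under $\phi$, as required.

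The only real obstacle is the label restriction of classical automata, and it is precisely here that the choice of alphabet matters. The naive temptation is to take $\Sigma$ to be a copy of a finite generating set of $M$ and to make $\phi$ surjective; but then a label $x=g_{i_1}\circ\cdots\circ g_{i_k}$ that is not a single generator cannot be realized by one classical transition and must be spelled out along a path through auxiliary states, which enlarges $S$ and yields only an automaton equivalent to $\cA$ rather than $\cA$ on the nose. Indexing the alphabet by the occurring labels sidesteps this entirely and recovers $\cA$ exactly; I would also remark that finite generation of $M$ is in fact not needed for the argument, only the finiteness of $\Delta$.
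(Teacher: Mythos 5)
Your proof is correct, and since the paper states this proposition without any proof of its own (it is imported from the monoidal-automata literature, cf.~\cite{kufi,mihov}), your construction is exactly the standard argument: index the alphabet by the finitely many labels occurring in $\Delta$, map each letter to its label by the universal property of $\Sigma^*$, and recover $\cA$ on the nose as the homomorphic image, so that $L(\cA)=\phi(L(\mathcal{B}))$ follows from the Homomorphic Images proposition. Your closing observation is also the right one: because the paper's notion of homomorphic image keeps $S$, $I$, $F$ fixed, the alphabet must index the occurring labels rather than a generating set of $M$, and finite generation of $M$ plays no role beyond the finiteness of $\Delta$.
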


%

\begin{proposition}[Closure Properties]\label{p-m-cl}
The class of monoidal automaton languages is closed under monoid homomorphisms.
The class of monoidal automaton languages is closed 
under the regular operations union, monoidal product and monoidal Kleene star.
\end{proposition}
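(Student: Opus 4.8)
The plan is to treat the two assertions separately, since the first follows almost immediately from an earlier result while the second rests on explicit automaton constructions.

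For closure under monoid homomorphisms I would invoke the Homomorphic Images proposition directly. Let $L$ be a monoidal automaton language over $M$, say $L = L(\cA)$ for a monoidal automaton $\cA = (S, M, I, F, \Delta)$, and let $\phi: M \to M'$ be a monoid homomorphism. Forming the homomorphic image $\cA' = (S, M', I, F, \Delta')$ with $\Delta' = \{(s, \phi(x), s') \mid (s,x,s') \in \Delta\}$, the Homomorphic Images proposition gives $L(\cA') = \phi(L(\cA)) = \phi(L)$. The only point to check is that $\cA'$ is again a legitimate monoidal automaton: $\Delta'$ is finite because $\Delta$ is, and the labels occurring in $\Delta'$ generate a finitely generated submonoid of $M'$ inside which $\phi(L)$ lives, so the finite-generation requirement is met. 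Hence $\phi(L)$ is a monoidal automaton language.

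For closure under the regular operations I would give the standard Thompson-style constructions, adapted to an arbitrary monoid by letting the identity $e \in M$ play the role of the empty-word label $\epsilon$; a transition carrying the label $e$ then acts exactly as an $\epsilon$-move under the generalized transition relation $\Delta^*$, because $(s_1,u,s_2) \in \Delta^*$ together with $(s_2,e,s_3)$ yields $(s_1, u \circ e, s_3) = (s_1, u, s_3)$. Given languages $L_1 = L(\cA_1)$ and $L_2 = L(\cA_2)$, after relabelling so that the state sets are disjoint, I would build: for union, the disjoint union automaton $(S_1 \cup S_2, M, I_1 \cup I_2, F_1 \cup F_2, \Delta_1 \cup \Delta_2)$; for the monoidal product, the automaton obtained from the disjoint union by adjoining $e$-labelled transitions from every state in $F_1$ to every state in $I_2$, with initial set $I_1$ and final set $F_2$; and for the Kleene star, a fresh state $q_0$ declared both initial and final, with $e$-labelled transitions from $q_0$ to each state of $I_1$ and from each state of $F_1$ back to $q_0$. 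Each construction adjoins only finitely many transitions over the same monoid $M$, so the result is again a monoidal automaton, and I claim its language equals $L_1 \cup L_2$, $L_1 \circ L_2$, and $L_1^*$ respectively, which I would verify by tracing successful paths through $\Delta^*$.

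The main obstacle is the correctness verification for the product and Kleene star, i.e.\ proving both inclusions of each language identity. The forward inclusion is the easy direction, since any path of the prescribed shape spells an element of the target language once the $e$-labels are absorbed by the identity. The subtle direction is the converse together with the bookkeeping around the identity: one must confirm that the new $e$-transitions create no spurious successful paths, that the boundary case $e \in L_i$ (realised by an initial state that is final, or reachable from an initial state by a null path) is handled correctly, and that the star automaton accepts $e$ through $q_0$ for the empty product while accepting every finite product $u_1 \circ \cdots \circ u_k$ with $u_j \in L_1$ via repeated returns to $q_0$. I note in passing that this second assertion is essentially the ``regular $\Rightarrow$ automaton language'' direction of the Regular Languages proposition (Prop.~\ref{p-m-reg}), so one could alternatively quote that characterization in place of the explicit constructions.
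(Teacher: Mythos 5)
The paper itself states Prop.~\ref{p-m-cl} without proof---it is imported as background from the literature on monoidal automata (Diekert--Kufleitner--Steinberg, Mihov--Schulz)---so there is no in-paper argument to compare against; judged on its own, your proof is correct. Your first half is exactly the intended one-line argument: the Homomorphic Images proposition stated immediately before gives $L(\cA')=\phi(L(\cA))$, and your side remark about finite generation (the labels of $\Delta'$ generate a finitely generated submonoid of $M'$ containing $\phi(L)$) patches a technicality that the paper itself glosses over. For the second half you build Thompson-style constructions with $e$-labelled transitions serving as $\epsilon$-moves; these are sound---the disjoint-union, bridge, and loop-through-$q_0$ automata do accept $L_1\cup L_2$, $L_1\circ L_2$ and $L_1^*$, and your observation that $(s_1,u,s_2)\in\Delta^*$ composed with $(s_2,e,s_3)$ yields $(s_1,u,s_3)$ is the key bookkeeping fact. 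However, as your own closing remark notes, given that Prop.~\ref{p-m-reg} (monoidal automaton languages $=$ monoidal regular languages) is already available in the paper, the entire second assertion is immediate: if $L_1,L_2$ are monoidal automaton languages they are monoidal regular, hence $L_1\cup L_2$, $L_1\circ L_2$ and $L_1^*$ are monoidal regular by the inductive definition of regularity, hence monoidal automaton languages again. That two-line derivation is what the explicit constructions effectively re-prove (they constitute the ``regular $\Rightarrow$ automaton language'' direction of Prop.~\ref{p-m-reg}), so the Thompson constructions buy self-containedness at the price of redoing work the paper has already packaged; either route is acceptable.
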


\begin{example}\label{e-4}
Consider the monoidal automaton $\cA$ in Fig.~\ref{a-fsa}
which has state set $S=\{i,f\}$, initial state $i$, final state $f$ and transitions
$(i,x,i)$, $(i,y,f)$, $(f,x,f)$ and $(f,y,i)$.
\begin{itemize}
\item
As a classical automaton over the free monoid $\{x,y\}^*$, 
the language is
$$\{x^{j_1}y x^{j_2}y\ldots y x^{j_n}\mid j_1,j_2,\ldots,j_n\geq 0, n\geq 0, \# y \equiv 1\!\!\!\!\mod 2 \}.$$
\item
As a monoidal automaton over the commutative monoid $M$ given by the presentation
$\langle x,y\mid xy=yx\rangle$, the language is
$$\{x^{i}y^j\mid i,j\geq 0, j \equiv 1\!\!\!\!\mod 2 \}.$$
\end{itemize}
\EXX
\end{example}
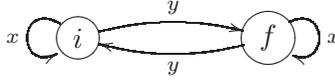
\begin{figure}[hbt]
\begin{center}
\mbox{$
\xymatrix{
*++[o][F-]{i} 
 \ar@(ul,dl)[]_{x}
\ar@/^/[rr]^{y}
&&
*++[o][F-]{f} 
 \ar@(ur,dr)[]^{x}
\ar@/^/[ll]^{y} \\
}
$}
\end{center}
\caption{Diagram of monoidal automaton.}\label{a-fsa}
\end{figure}

Let $(M_1,\circ_1,e_1)$ and $(M_2,\circ_2,e_2)$ be monoids.
Their Cartesian product $M = M_1\times M_2$ is also monoid, where the associative operation and
the identity element are defined pairwise.

A {\em monoidal $2$-tape automaton} over $M$ 
is a monoidal automaton $\cA=(S,M,I,F,\Delta)$ 
over a Cartesian product of monoids $M = M_1\times M_2$.
A {\em monoidal 2-tape language\/} over $M$ 
is a monoidal language over $M$ accepted by a monoidal $2$-tape automaton over $M$.
These notions can be extended to monoidal $n$-tape automata and monoidal $n$-tape languages for $n\geq 2$.

The class of monoids is closed under Cartesian products and therefore the monoidal $n$-tape automata
are a special case of the monoidal automata.

\begin{proposition}[Inverse Relations]
Let $\cA=(S,M_1\times M_2,I,F,\Delta)$ be a monoidal 2-tape automaton.
Then for the monoidal 2-tape automaton $\cA'=(S,M_2\times M_1,I,F,\Delta')$ with transition relation
$$\Delta' = \{(s,(y,x),s')\mid (s,(x,y),s')\in\Delta\},$$
we have $L(\cA') = L(\cA)^{-1}$, 
where $L(\cA)^{-1} = \{(v,u)\mid (u,v)\in L(\cA)\}$.
\end{proposition}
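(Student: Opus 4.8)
The plan is to recognize $\cA'$ as the homomorphic image of $\cA$ under the coordinate-swap map and then invoke the Homomorphic Images proposition, which is already available. Define
$$\sigma : M_1\times M_2 \rightarrow M_2\times M_1, \qquad \sigma(a,b) = (b,a).$$
First I would verify that $\sigma$ is a monoid homomorphism. Since the operation and identity of a Cartesian product of monoids are defined componentwise, one checks that $\sigma\bigl((a_1,b_1)(a_2,b_2)\bigr) = (b_1\circ_2 b_2,\, a_1\circ_1 a_2) = \sigma(a_1,b_1)\odot\sigma(a_2,b_2)$, where $\odot$ denotes the operation on $M_2\times M_1$, and that $\sigma(e_1,e_2)=(e_2,e_1)$ is the identity of $M_2\times M_1$. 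In fact $\sigma$ is a monoid isomorphism (its inverse is the analogous swap), but only the homomorphism property is needed.

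Next I would observe that, by its very definition, the transition relation of $\cA'$ satisfies $\Delta' = \{(s,\sigma(x,y),s')\mid (s,(x,y),s')\in\Delta\}$, simply because $\sigma(x,y)=(y,x)$. Hence $\cA'=(S,M_2\times M_1,I,F,\Delta')$ is precisely the homomorphic image of $\cA$ under $\sigma$ in the sense defined above.

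Applying the Homomorphic Images proposition then gives $L(\cA')=\sigma(L(\cA))$ at once. It remains only to identify this image with the inverse relation:
$$\sigma(L(\cA)) = \{\sigma(u,v)\mid (u,v)\in L(\cA)\} = \{(v,u)\mid (u,v)\in L(\cA)\} = L(\cA)^{-1},$$
which is exactly the assertion.

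The only point that requires genuine care is the verification that $\sigma$ is a monoid homomorphism, since this is what licenses the appeal to the Homomorphic Images proposition; it hinges entirely on the componentwise definition of the product monoid, so the two tapes evolve independently and the swap commutes with concatenation. Should a self-contained argument be preferred, I would instead prove by induction on the closure rules defining the generalized transition relation that $(s,(u,v),s')\in\Delta^*$ if and only if $(s,(v,u),s')\in(\Delta')^*$, with the null paths as base case and the appending of a transition $(s_2,(x,y),s_3)\in\Delta$ matched by $(s_2,(y,x),s_3)\in\Delta'$ in the inductive step. Either route yields the result, but the homomorphism route is the shorter, as it reuses machinery already established.
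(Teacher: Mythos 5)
Your proof is correct. The paper itself states this proposition without proof (it is one of the facts about monoidal automata imported from the literature), so there is no in-paper argument to compare against; but your route --- recognizing $\cA'$ as the homomorphic image of $\cA$ under the coordinate-swap homomorphism $\sigma(a,b)=(b,a)$ and then invoking the Homomorphic Images proposition to get $L(\cA')=\sigma(L(\cA))=L(\cA)^{-1}$ --- is precisely the strategy the paper employs for the analogous Inverse Relations result about monoidal generalized $2$-tape automata in the later section, where the swap map and a commuting property reduce the claim to the homomorphic-image machinery. The two points that genuinely need checking, namely that $\sigma$ is a monoid homomorphism for the componentwise operations and that the given $\Delta'$ coincides with $\{(s,\sigma(x,y),s')\mid (s,(x,y),s')\in\Delta\}$, are both verified explicitly in your write-up, so the appeal to the Homomorphic Images proposition is fully licensed; your fallback induction over the generalized transition relation would also work but is unnecessary given that machinery.
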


\begin{proposition}[Projections]
Let $\cA=(S,M_1\times M_2,I,F,\Delta)$ be a monoidal $2$-tape automaton.
Then for the monoidal automaton
$\cA'=(S,M_1,I,F,\Delta')$ with transition relation
$$\Delta' = \{(s,x_1,s')\mid (s,(x_1,x_2),s')\in\Delta\}$$
we have $L(\cA') = L(\cA)_1$, 
where $L(\cA)_1 = \{u_1\mid (u_1,u_2)\in L(\cA)\}$.
\end{proposition}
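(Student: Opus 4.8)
The plan is to recognize the construction of $\cA'$ as an instance of the homomorphic-image construction already introduced and then invoke the Homomorphic Images proposition. Concretely, I would define the projection map $p_1:M_1\times M_2\rightarrow M_1$ by $p_1(x_1,x_2)=x_1$ and first verify that it is a monoid homomorphism from $(M_1\times M_2,\circ,(e_1,e_2))$ to $(M_1,\circ_1,e_1)$. This is a routine check: for $(x_1,x_2),(y_1,y_2)\in M_1\times M_2$ one has
$$p_1\big((x_1,x_2)\circ(y_1,y_2)\big)=p_1(x_1\circ_1 y_1,\,x_2\circ_2 y_2)=x_1\circ_1 y_1=p_1(x_1,x_2)\circ_1 p_1(y_1,y_2),$$
and the identity $(e_1,e_2)$ is sent to $e_1$.

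Second, I would observe that the transition relation $\Delta'=\{(s,x_1,s')\mid (s,(x_1,x_2),s')\in\Delta\}$ is precisely $\{(s,p_1(x),s')\mid (s,x,s')\in\Delta\}$, so that $\cA'=(S,M_1,I,F,\Delta')$ coincides with the homomorphic image of $\cA$ under $p_1$ in the sense defined earlier. The Homomorphic Images proposition then yields directly
$$L(\cA')=p_1(L(\cA)).$$
Third, I would unwind the right-hand side: by definition $p_1(L(\cA))=\{p_1(u_1,u_2)\mid (u_1,u_2)\in L(\cA)\}=\{u_1\mid (u_1,u_2)\in L(\cA)\}=L(\cA)_1$, which is the claimed equality.

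There is no genuine obstacle here; the only point requiring care is that the projection really does respect the pairwise monoid structure on $M_1\times M_2$, which is exactly the homomorphism check above, and that $\Delta'$ is written in the same form as the homomorphic-image transition relation. Should a self-contained argument be preferred over citing the Homomorphic Images proposition, one could instead prove $L(\cA')=L(\cA)_1$ directly by induction on path length, showing that $(s,(u_1,u_2),s')\in\Delta^*$ in $\cA$ for some second component $u_2$ if and only if $(s,u_1,s')\in(\Delta')^*$ in $\cA'$; but this merely re-derives the homomorphism argument inline, so the reduction to the already-established proposition is the cleaner route.
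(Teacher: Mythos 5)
Your proof is correct. Note that the paper states this proposition without a proof of its own (like the other structural facts about monoidal automata in that section, it is imported from the literature), so there is no in-paper argument to compare against; your reduction is the natural one: $p_1$ is a monoid homomorphism, $\Delta'$ is literally the homomorphic-image transition relation $\{(s,p_1(x),s')\mid (s,x,s')\in\Delta\}$, and the Homomorphic Images proposition gives $L(\cA')=p_1(L(\cA))=L(\cA)_1$. The only point you leave implicit is that $M_1$ is again finitely generated, as required by the paper's definition of a monoidal automaton; this is immediate since $p_1$ is surjective, so the image of a finite generating set of $M_1\times M_2$ generates $M_1$.
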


\begin{proposition}[Cartesian Products]
Let $\cA_1=(S_1,M_1,I_1,F_1,\Delta_1)$ and $\cA_2=(S_2,M_2,I_2,F_2,\Delta_2)$ be monoidal automata.
Then for the monoidal 2-tape automaton
$\cA =(S_1\times S_2,M_1\times M_2,I_1\times I_2,F_1\times F_2,\Delta)$ with transition relation
$$\Delta = \{((s_1,s_2),(x_1,x_2),(s'_1,s'_2))\mid (s_i,x_i,s'_i)\in\Delta_i,i=1,2\},$$
we have $L(\cA) = L(\cA_1)\times L(\cA_2)$.
\end{proposition}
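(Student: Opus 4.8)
The plan is to reduce the statement to a structural description of the generalized transition relation $\Delta^*$ of the product automaton in terms of the relations $\Delta_1^*$ and $\Delta_2^*$, and then to unwind the definition of the accepted language. The characterization I aim for is
$$((s_1,s_2),(u_1,u_2),(s'_1,s'_2))\in\Delta^* \iff (s_1,u_1,s'_1)\in\Delta_1^* \ \text{and}\ (s_2,u_2,s'_2)\in\Delta_2^*.$$
Granting this, the proposition is immediate: $(u_1,u_2)\in L(\cA)$ means there are $(i_1,i_2)\in I_1\times I_2$ and $(f_1,f_2)\in F_1\times F_2$ with $((i_1,i_2),(u_1,u_2),(f_1,f_2))\in\Delta^*$, which by the characterization is equivalent to $(i_1,u_1,f_1)\in\Delta_1^*$ together with $(i_2,u_2,f_2)\in\Delta_2^*$; the two existential statements then separate, giving $u_1\in L(\cA_1)$ and $u_2\in L(\cA_2)$, that is $(u_1,u_2)\in L(\cA_1)\times L(\cA_2)$.

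The forward implication follows by induction along the inductive definition of $\Delta^*$. In the base case, a null path $((s_1,s_2),\epsilon,(s_1,s_2))$ projects onto the null paths $(s_1,\epsilon,s_1)\in\Delta_1^*$ and $(s_2,\epsilon,s_2)\in\Delta_2^*$. In the inductive step, a generalized transition obtained from $((s_1,s_2),(u_1,u_2),(t_1,t_2))\in\Delta^*$ by right-extension with a product transition $((t_1,t_2),(x_1,x_2),(s'_1,s'_2))\in\Delta$ projects, using the defining shape of $\Delta$, onto the right-extension of $(s_1,u_1,t_1)\in\Delta_1^*$ by $(t_1,x_1,s'_1)\in\Delta_1$ and of $(s_2,u_2,t_2)\in\Delta_2^*$ by $(t_2,x_2,s'_2)\in\Delta_2$, which lie in $\Delta_1^*$ and $\Delta_2^*$ respectively.

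The reverse implication is where the genuine difficulty lies, and it is the step I expect to be the main obstacle. A single transition of $\cA$ advances both component automata at once, so a product path of length $k$ decomposes into two component paths of the common length $k$; conversely, to build a product generalized transition out of $(s_1,u_1,s'_1)\in\Delta_1^*$ and $(s_2,u_2,s'_2)\in\Delta_2^*$ one must realize both by component paths that can be matched transition by transition. I would therefore argue by a simultaneous induction that walks the two component paths in lockstep, combining their $k$-th transitions $(t_1,x_1,s'_1)\in\Delta_1$ and $(t_2,x_2,s'_2)\in\Delta_2$ into the single product transition $((t_1,t_2),(x_1,x_2),(s'_1,s'_2))\in\Delta$ supplied by the definition of $\Delta$, with the two null paths forming the base case. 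The delicate point throughout is exactly this synchronization of the two component derivations: it is the synchronous nature of the product construction that lets the $k$-th steps be paired, and the induction must keep the two component witnesses advancing together.
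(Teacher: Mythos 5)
Your reduction to a characterization of $\Delta^*$ is the right instinct, and your forward implication is sound, but the reverse implication --- which you yourself single out as the crux --- is not merely delicate: it is false for the transition relation as written, and no induction can rescue it. The lockstep argument needs the two component witnesses to be realized by paths of the \emph{same} length, and nothing forces this, because every transition of $\cA$ advances both tapes by exactly one step and there is no transition that moves one component while the other stands still. Concretely, let $\cA_1$ have the single state $s$ (initial and final) and the single transition $(s,x,s)$, and let $\cA_2$ have the single state $t$ (initial and final) and the single transition $(t,y,t)$. Then $L(\cA_1)=\{x^n\mid n\geq 0\}$ and $L(\cA_2)=\{y^m\mid m\geq 0\}$, while the product automaton has the single transition $((s,t),(x,y),(s,t))$, so that $\Delta^*$ consists exactly of the triples $((s,t),(x^n,y^n),(s,t))$ and $L(\cA)=\{(x^n,y^n)\mid n\geq 0\}$. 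Hence $(x,e_2)\in L(\cA_1)\times L(\cA_2)\setminus L(\cA)$: your claimed equivalence already fails for the pair $(s,x,s)\in\Delta_1^*$, $(t,e_2,t)\in\Delta_2^*$. Only the inclusion $L(\cA)\subseteq L(\cA_1)\times L(\cA_2)$, your forward direction, survives.

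The repair is not a better induction but a different transition relation. The construction intended in the literature the paper cites (and the one the paper itself uses later for monoidal generalized $2$-tape automata, whose generating set is $G_M=\{(x_1,e_2)\}\cup\{(e_1,x_2)\}$) is the \emph{asynchronous} product
\begin{align*}
\Delta ={}& \{((s_1,s_2),(x_1,e_2),(s'_1,s_2))\mid (s_1,x_1,s'_1)\in\Delta_1,\ s_2\in S_2\}\\
&\cup\ \{((s_1,s_2),(e_1,x_2),(s_1,s'_2))\mid (s_2,x_2,s'_2)\in\Delta_2,\ s_1\in S_1\}.
\end{align*}
With this $\Delta$ the reverse inclusion is proved by concatenation rather than synchronization: given successful paths for $u_1$ in $\cA_1$ and for $u_2$ in $\cA_2$, run the first path on tape one while the second component is held at $i_2$, then run the second path on tape two while the first component is held at $f_1$; since $(u_1,u_2)=(u_1,e_2)\circ(e_1,u_2)$, this produces a successful path of $\cA$ with label $(u_1,u_2)$. (Equivalently, you could keep the synchronous $\Delta$ after adjoining all null transitions $(s,e_i,s)$ to each $\Delta_i$; then the shorter witness can be padded and your lockstep induction goes through.) Note that the paper states this proposition without proof, so there is no in-paper argument to compare against; but any correct proof must confront the length mismatch between the two component paths, which your proposal does not resolve.
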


\begin{proposition}[Closure Properties]
The class of monoidal automaton languages is closed under Cartesian products and projections.
The class of monoidal $2$-tape languages is closed under inverse relations.
\end{proposition}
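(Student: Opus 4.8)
The plan is to obtain all three closure assertions as immediate consequences of the three explicit constructions carried out in the preceding propositions, together with the observation already recorded in the text that a monoidal $n$-tape automaton is literally a monoidal automaton over a product monoid, so that every monoidal $2$-tape language is in particular a monoidal automaton language.

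For closure under Cartesian products, I would begin with two monoidal automaton languages $L_1$ and $L_2$, say $L_i=L(\cA_i)$ for monoidal automata $\cA_i=(S_i,M_i,I_i,F_i,\Delta_i)$, $i=1,2$. Applying the Cartesian Products proposition to $\cA_1$ and $\cA_2$ produces a monoidal $2$-tape automaton $\cA$ over $M_1\times M_2$ with $L(\cA)=L(\cA_1)\times L(\cA_2)=L_1\times L_2$. Since $\cA$ is a monoidal automaton, the product $L_1\times L_2$ is a monoidal automaton language over $M_1\times M_2$, as required.

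For closure under projections, I would take a monoidal $2$-tape language $L=L(\cA)$ accepted by a monoidal $2$-tape automaton $\cA=(S,M_1\times M_2,I,F,\Delta)$. The Projections proposition furnishes a monoidal automaton $\cA'$ over $M_1$ with $L(\cA')=L(\cA)_1$, exhibiting the projection onto the first factor as a monoidal automaton language over $M_1$; the second projection is handled symmetrically. Finally, for closure of monoidal $2$-tape languages under inverse relations, the Inverse Relations proposition directly supplies, for a $2$-tape automaton $\cA$ over $M_1\times M_2$, a $2$-tape automaton $\cA'$ over $M_2\times M_1$ with $L(\cA')=L(\cA)^{-1}$, so that $L(\cA)^{-1}$ is again a monoidal $2$-tape language.

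I do not anticipate any real obstacle, because the substantive work---constructing the appropriate automaton and verifying the corresponding language identity---has already been done in the three construction propositions. The only step that needs to be stated with care is the reduction itself: recognizing that the languages delivered by those propositions fall, respectively, into the classes of monoidal automaton languages and of monoidal $2$-tape languages. Once the remark preceding the Inverse Relations proposition is invoked, each closure claim reduces to a one-line reformulation of the matching existence statement.
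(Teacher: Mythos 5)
Your proposal is correct and matches the paper's intent exactly: the paper states this proposition without a separate proof, precisely because it follows immediately from the three preceding propositions (Cartesian Products, Projections, Inverse Relations) together with the remark that monoidal $n$-tape automata are monoidal automata over product monoids, which is the reduction you carry out. Nothing is missing; your explicit statement of the reduction is, if anything, slightly more careful than the paper itself.
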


\begin{example}
Consider the two monoidal 2-tape automata over the monoid $M = M_1\times M_2$ 
given in Fig.~\ref{f-gfsa},
where $M_1$ is the commutative monoid given by the presentation $\langle x,y\mid xy=yx\rangle$ 
and $M_2$ is the free monoid $\{z\}^*$.

The language of the first automaton is $$L_1 = \{(x^iy^j,z^i)\mid i\geq 1, j\geq 0\}$$ and 
the language of the second automaton is $$L_2 = \{(x^jy^i,z^i)\mid i\geq 1, j\geq 0\}.$$
Both languages are regular.
The intersection of both languages is
$$L = L_1\cap L_2 = \{(x^iy^i,z^i)\mid i\geq 1\}$$
and thus the projection onto the first component is
$$L' = \{ x^iy^i\mid i\geq 1\}.$$
The language $L'$ is not regular~\cite{salomaa}.
But as already shown, each monoidal automaton language is regular and 
the class of monoidal automaton languages is closed under projection. 
Therefore, the language $L$ cannot be a monoidal automaton language. 
Thus the class of monoidal automaton languages is not closed under intersection.
By De Morgan's law, the class of monoidal automaton languages is also not closed under complement
since by Prop.~\ref{p-m-cl} it is closed under union.
\EXX
\end{example}
\begin{figure}[hbt]
\begin{center}
\mbox{$
\xymatrix{
*++[o][F-]{i}\ar@(ul,dl)[]_{(x,z),(y,\epsilon)} \ar@{->}[rr]^{(x,z)} && *++[o][F-]{f} 
&&& *++[o][F-]{i}\ar@(ul,dl)[]_{(y,z),(x,\epsilon)} \ar@{->}[rr]^{(y,z)} && *++[o][F-]{f} 
}
$}
\end{center}
\caption{Diagrams of monoidal 2-tape automata.}\label{f-gfsa}
\end{figure}
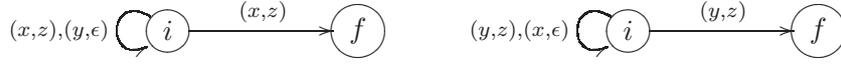

%

\section{Monoidal Generalized Automata}\label{s-5}

Generalized finite-state automata and monoidal finite-state automata can
be unified to monoidal generalized finite-state automata. 

A {\em monoidal generalized automaton\/} 
is a quintuple 
$$\cA = (S,M, \{Q(x)\mid x\in G_M\}, \pi,f),$$ 
where 
\begin{itemize}
\item $S$ is the non-empty finite set of {\em states},
\item $(M,\circ,e)$ is a finitely generated monoid, where $M$ is the set of {\em input symbols},
\item $Q$ is a finite collection of $n\times n$ matrices $Q(x)$ with $x\in G_M$, 
where $n$ is the number of states and $G_M$ is a generating set of $M$,
\item $\pi\in\RR^n$ is the {\em initial vector}\index{initial distribution} written as row vector, and
\item $f\in\RR^n$ is the {\em final vector}\index{final state vector} written as column vector.
\item {\em Extension postulate:} 
The mapping $Q:G_M\rightarrow\RR^{n\times n}$ 
can be uniquely extended to a monoid homomorphism $Q:M\rightarrow \RR^{n\times n}$ 
such that for each word $u=x_1\ldots x_k\in M$,
\begin{eqnarray}\label{e-post0}
Q(u) = Q(x_1) \cdots Q(x_k) 
\end{eqnarray}
and particularly $Q(\epsilon) = I_n$.
\end{itemize}

The extension postulate ensures that the monoid operation is compa\-tible with the matrix multiplication.
In view of generalized or stochastic automata, 
this postulate is a direct consequence of the universal property of free monoids~\cite{cliff,mihov}.

\begin{example}\label{e-xy0}
Consider the commutative monoid $M$ given by the presentation
$\langle x,y\mid xy=yx\rangle$.
Each element of $M$ has the form $x^iy^j$ for some $i,j\geq 0$.

Define the matrices
$$Q(x) = \left(\begin{array}{rr} 1 & 1 \\ 0 & 1 \end{array}\right)
\quad\mbox{and}\quad
Q(y) = \left(\begin{array}{rr} 1 & -1 \\ 0 & 1 \end{array}\right).$$
The mapping $Q:\{x,y\}\rightarrow\RR^{2\times 2}$ extends to a unique monoid-homomorphism
$Q:M\rightarrow \RR^{2\times 2}$, where 
$$Q(x^iy^j) = Q(x)^iQ(y)^j = 
\left(\begin{array}{cc} 1 & i-j \\ 0 & 1 \end{array}\right),\quad i,j\geq 0.$$
\EXX
\end{example}

\begin{example}\label{e-xy00}
Consider the commutative monoid $M$ given by the presentation
$\langle x,y\mid xy=yx\rangle$.
Each element of $M$ has the form $x^iy^j$ for some $i,j\geq 0$.

Define the matrices
$$Q(x) = \left(\begin{array}{rr} 1 & 1 \\ 0 & 1 \end{array}\right)
\quad\mbox{and}\quad
Q(y) = \left(\begin{array}{rr} 1 & 0 \\ 1 & 1 \end{array}\right).$$
The mapping $Q:\{x,y\}\rightarrow\RR^{2\times 2}$ cannot be extended to a monoid-homo\-mor\-phism,
since $xy=yx$\/ but the matrices $Q(x)$ and $Q(y)$ do not commute, 
$$Q(x)Q(y) = \left(\begin{array}{rr} 2 & 1 \\ 1 & 1 \end{array}\right)
\quad\mbox{and}\quad
Q(y)Q(x) = \left(\begin{array}{rr} 1 & 1 \\ 1 & 2 \end{array}\right).$$
\EXX
\end{example}

Let $\cA = (S,M, \{Q(x)\mid x\in G_M\}, \pi,f)$ be a monoidal generalized automaton and 
$\lambda$ be a real number.
The set
\begin{eqnarray}
L(\cA,\lambda) =\{u\in M\mid \pi Q(u) f>\lambda\}
\end{eqnarray}
is the {\em language\/} accepted by $\cA$ w.r.t.\ the {\em cut point\/} $\lambda$.
A subset $L$ of $M$ is called a {\em monoidal generalized automaton language\/} 
if there exists a monoidal generalized automaton $\cA$
and a real number $\lambda$ such that $L=L(\cA,\lambda)$.

{\em Classical generalized automata\/} are monoidal generalized automata 
where the underlying monoid is the free monoid $\Sigma^*$ over the alphabet $G_M=\Sigma$.
In this case, the extension postulate follows directly from the universal property of free monoids.
A monoidal generalized automaton language accepted by a classical generalized automaton 
is called {\em classical generalized automaton language}.

\begin{proposition}\label{p-malg}
Every monoidal automaton language is a monoidal generalized automa\-ton language.
\end{proposition}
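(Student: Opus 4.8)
The plan is to realise the monoidal automaton $\cA=(S,M,I,F,\Delta)$ directly as a monoidal generalized automaton on the same state set, using $0$--$1$ ``adjacency'' matrices as transition matrices. First I would normalise $\cA$ so that every transition carries a label in the generating set $G_M$: since $G_M$ generates $M$, a transition $(s,x,s')$ whose label factors as $x=g_1\circ\cdots\circ g_\ell$ with $g_i\in G_M$ is split into a chain $(s,g_1,t_1),(t_1,g_2,t_2),\ldots,(t_{\ell-1},g_\ell,s')$ through fresh intermediate states. Every path and its label-product in $M$ are preserved, so the accepted language is unchanged and I may assume $\Delta\subseteq S\times(G_M\cup\{\epsilon\})\times S$.

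Next, writing $S=\{s_1,\dots,s_n\}$, I would take $\pi$ to be the characteristic row vector of $I$, $f$ the characteristic column vector of $F$, and for each $g\in G_M$ set $Q(g)_{ij}$ equal to the number of transitions $(s_i,g,s_j)\in\Delta$. Viewed over the \emph{free} monoid $G_M^*$, the assignment $g\mapsto Q(g)$ extends by the universal property to a homomorphism $Q:G_M^*\to\RR^{n\times n}$, and a routine induction on length shows that $\bigl(Q(g_{i_1})\cdots Q(g_{i_k})\bigr)_{ij}$ counts the paths from $s_i$ to $s_j$ spelling the word $g_{i_1}\cdots g_{i_k}$. Hence $\pi Q(w)f$ is the number of successful paths spelling $w$, so with cut point $\lambda=0$ a word $w\in G_M^*$ satisfies $\pi Q(w)f>0$ exactly when $w$ labels a successful path; under the canonical surjection $\psi:G_M^*\to M$ the images of these words are precisely $L(\cA)$.

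The hard part is the extension postulate, which is exactly what separates a classical generalized automaton over $G_M^*$ from a monoidal one over $M$. The matrices $Q(g)$ respect no relations a priori: the homomorphism $Q:G_M^*\to\RR^{n\times n}$ need not factor through $\psi$, so $Q$ does not descend to a homomorphism $M\to\RR^{n\times n}$ and the quintuple above is in general \emph{not} a legitimate monoidal generalized automaton. Example~\ref{e-xy00} already exhibits generator matrices that fail to commute although $xy=yx$ holds in $M$; the same phenomenon can be manufactured inside an automaton, e.g.\ the two-transition chain $s_1\xrightarrow{x}s_2\xrightarrow{y}s_3$ over $\langle x,y\mid xy=yx\rangle$, whose adjacency matrices satisfy $Q(x)Q(y)\neq Q(y)Q(x)$, so the naive construction breaks down.

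To overcome this I would not insist on the naive adjacency matrices but instead produce a genuine finite-dimensional representation of $M$ carrying the same acceptance information. Concretely, I would saturate the automaton so that any two generator-words with the same image in $M$ induce identical path counts between every pair of states --- equivalently, replace $Q$ by a representation $Q:M\to\RR^{N\times N}$ for which the real-valued function $u\mapsto\pi Q(u)f$ has support $L(\cA)$ --- and then separate this support from its complement by a suitable cut point. Verifying that such a relation-respecting representation exists, and that a single cut point still isolates $L(\cA)$, is where the real work lies; once the extension postulate is secured in this way, the equality $L(\cA,\lambda)=L(\cA)$ follows from the path-counting identity established above.
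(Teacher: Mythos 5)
Your first two paragraphs coincide with the paper's own proof: the paper also takes $G_M$ to be the set of transition labels, builds the same $0$--$1$ adjacency matrices $Q(x)$, uses the characteristic vectors of $I$ and $F$ as $\pi$ and $f$, and accepts with cut point $0$. You part ways with the paper exactly where the paper is on thin ice: it simply asserts that ``the extension postulate is satisfied,'' whereas you check it and find that it can fail. Your counterexample is decisive. For the automaton with transitions $(s_1,x,s_2)$ and $(s_2,y,s_3)$ over $M=\langle x,y\mid xy=yx\rangle$, the adjacency matrices give $(Q(x)Q(y))_{1,3}=1$ while $Q(y)Q(x)=0$, so no monoid homomorphism $M\rightarrow\RR^{3\times 3}$ can send $x$ to $Q(x)$ and $y$ to $Q(y)$; equivalently, the paper's claim that $q_{ij}^{(u)}>0$ iff $(s_i,u,s_j)\in\Delta^*$ is factorization-dependent and fails for the element $u=xy=yx$ factored as $y\circ x$, so the ``matrix $Q(u)$'' is not even well defined as a function of $u\in M$. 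What you flag is therefore not a weakness of your own route only; it is a genuine gap in the paper's proof of this very proposition.

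The gap in \emph{your} write-up is the final paragraph. ``Replace $Q$ by a relation-respecting representation whose cut-point function has support $L(\cA)$'' is not a construction; it is a paraphrase of the proposition itself, so deferring it as ``where the real work lies'' leaves the statement unproved. Worse, in the stated generality that step cannot be carried out at all. Let $M$ be a finitely generated infinite torsion group $G$ (such groups exist, e.g.\ Grigorchuk's group). The extension postulate forces $Q(e)=I_n$, hence every $Q(g)$ is invertible and $Q(G)$ is a finitely generated torsion subgroup of $\mathrm{GL}_n(\RR)$, which is finite by the Jordan--Schur theorem. Thus any admissible $Q$ factors through a finite quotient $G/K$ with $K$ infinite, so every cut-point language over $G$ is a union of cosets of $K$ and is therefore empty or infinite; yet the singleton $\{e\}$ is a monoidal automaton language (one state, both initial and final, $\Delta=\emptyset$). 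Hence no repair along your lines (or the paper's) can work for all finitely generated monoids: the proposition needs additional hypotheses on $M$. For restricted classes it can be saved, but by genuinely different matrices rather than saturated adjacency matrices --- e.g.\ over $\langle x,y\mid xy=yx\rangle$ your language $\{xy\}$ is cut by a commuting translation representation realizing $\pi Q(x^iy^j)f=1-(i-1)^2-(j-1)^2$, positive exactly at $i=j=1$ --- and any correct general treatment would have to isolate the monoids for which such relation-respecting separations exist.
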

\begin{proof}
Let $L$ be a monoidal automaton language.
Then there is a monoidal automaton $\cA=(S,M,I,F,\Delta)$ such that $L=L(\cA)$.

Let $S=\{s_1,\ldots,s_n\}$ 
and let $G_M=\{x\mid (s,x,s')\in \Delta\}$ be the set of transition labels of $\cA$.
We may assume that $G_M$ is a generating set of $M$.

Define the monoidal generalized automaton 
$$\cA'=(S,M,\{Q(x)\mid x\in G_M\},\pi,f),$$
where $Q(x) = (q_{i,j}^{(x)})$ is the $n\times n$ matrix with entries 
$$q_{i,j}^{(x)}
= \left\{ \begin{array}{ll}
1 & \mbox{if } (s_i,x,s_j)\in \Delta,\\
0 & \mbox{otherwise,}
\end{array} \right.$$
$\pi=(\pi_1,\ldots,\pi_n)$, where $\pi_i=1$ if $s_i\in I$ and $\pi_i=0$ otherwise,
and $f=(f_1,\ldots,f_n)^T$, where $f_i=1$ if $s_i\in F$ and $f_i=0$ otherwise.

Let $u=x_1\ldots x_k\in M$.
Then by the property of matrix multiplication, 
the matrix $Q(u) = (q_{ij}^{(u)}) = Q(x_1)\cdots Q(x_k)$ has entry $q_{ij}^{(u)}>0$ iff $(s_i,u,s_j)\in \Delta^*$.
Thus the extension postulate is satisfied.
In particular,
$u\in L$ iff $(s_i,u,s_j)\in \Delta^*$ for some $s_i\in I$ and $s_j\in F$.
This is equivalent to the condition $\pi Q(u) f>0$.
Hence, $L=L(\cA',0)$.
\end{proof}
In view of Prop.~\ref{p-m-reg}, one obtains the following consequence.
\begin{corollary}
The regular monoidal languages are monoidal generalized automaton languages.
\end{corollary}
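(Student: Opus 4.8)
The plan is to obtain the statement by composing two results that are already available: the characterization of regular monoidal languages in Prop.~\ref{p-m-reg} and the inclusion established in Prop.~\ref{p-malg}. No new automaton needs to be built from scratch; the entire content is transport along a short chain of inclusions.

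First I would invoke Prop.~\ref{p-m-reg}. By that proposition, a monoidal language over $M$ is regular if and only if it is a monoidal automaton language. Hence every regular monoidal language $L\subseteq M$ is, in particular, a monoidal automaton language, so there exists a monoidal automaton $\cA$ with $L=L(\cA)$.

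Next I would apply Prop.~\ref{p-malg}, which asserts that every monoidal automaton language is a monoidal generalized automaton language. Applying it to $L=L(\cA)$ yields a monoidal generalized automaton $\cA'$ together with the cut point $\lambda=0$ (as in the construction underlying the proof of Prop.~\ref{p-malg}) such that $L=L(\cA',0)$. This exhibits $L$ as a monoidal generalized automaton language and completes the argument.

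There is essentially no obstacle here, since both ingredients are established earlier; the only point to check is that the two statements genuinely compose, i.e.\ that the notion of monoidal automaton language produced by Prop.~\ref{p-m-reg} is exactly the one consumed by Prop.~\ref{p-malg}, which it is. If Prop.~\ref{p-m-reg} were unavailable, I would instead argue by structural induction on the inductive definition of regular monoidal languages—treating the base cases $\emptyset$ and $\{m\}$ directly and then using the closure of monoidal automaton languages under union, monoidal product and monoidal Kleene star (Prop.~\ref{p-m-cl}) followed by Prop.~\ref{p-malg}—but with Prop.~\ref{p-m-reg} in hand the direct two-step composition is the cleanest route.
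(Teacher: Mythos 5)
Your proposal is correct and matches the paper's own reasoning exactly: the paper derives this corollary by combining Prop.~\ref{p-m-reg} (regular monoidal languages coincide with monoidal automaton languages) with Prop.~\ref{p-malg} (every monoidal automaton language is a monoidal generalized automaton language), which is precisely your two-step composition.
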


\begin{proposition}\label{p-l}
Let $\cA=(S,M,\{Q(x)\mid x\in G_M\},\pi,f)$ be a monoidal generalized automaton and 
let $\lambda>0$ be a real number.
Then for each real number $\lambda'>0$, there exists a monoidal generalized automaton $\cA'$ such that 
$$L(\cA,\lambda)=L(\cA',\lambda').$$
\end{proposition}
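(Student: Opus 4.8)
The plan is to exploit the fact that the acceptance weight $\pi Q(u) f$ enters the definition of $L(\cA,\lambda)$ only through the strict inequality $\pi Q(u) f > \lambda$, and that such a strict inequality is invariant under multiplication of both sides by a fixed positive constant. This lets me trade one positive cut point for another simply by rescaling one of the boundary vectors. Concretely, I would set $c = \lambda'/\lambda$, which is a well-defined positive real because both $\lambda$ and $\lambda'$ are strictly positive, and define the new automaton to agree with $\cA$ in everything except its final vector, which I scale by $c$:
$$\cA' = (S,M,\{Q(x)\mid x\in G_M\},\pi,cf).$$

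The first step is to confirm that $\cA'$ is a legitimate monoidal generalized automaton. Since the collection $\{Q(x)\mid x\in G_M\}$ is left completely untouched, the extension postulate required of $\cA'$ is \emph{verbatim} the one already assumed for $\cA$; the induced monoid homomorphism $Q\colon M\to\RR^{n\times n}$ is literally the same map, so $Q(u)=Q(x_1)\cdots Q(x_k)$ continues to hold for every $u=x_1\cdots x_k\in M$ and $Q(\epsilon)=I_n$. Thus no fresh hypothesis is needed and $\cA'$ is well-defined. The second step is the computation of the acceptance weight of $\cA'$ on an arbitrary $u\in M$: by linearity of the vector--matrix product, $\pi Q(u)(cf) = c\,\pi Q(u) f$, so the membership condition for $L(\cA',\lambda')$ reads $c\,\pi Q(u) f > \lambda'$. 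Substituting $\lambda' = c\lambda$ and dividing through by the positive scalar $c$ yields $\pi Q(u) f > \lambda$, which is exactly the condition defining $L(\cA,\lambda)$. Because $c>0$, every step of this chain is reversible, and hence $L(\cA',\lambda') = L(\cA,\lambda)$.

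I do not expect a genuine obstacle here; the argument is a one-line scaling once the bookkeeping is in place. The only point demanding care is the role of the positivity hypotheses: strict positivity of $\lambda$ makes $c=\lambda'/\lambda$ well-defined, and positivity of $c$ is precisely what preserves the strict inequality rather than flipping it. Were $\lambda$ permitted to vanish, this multiplicative rescaling would degenerate and one would have to replace it by a different normalization (for instance, keeping $f$ fixed and absorbing an additive shift into the construction), but that situation lies outside the stated hypotheses, so I would simply remark on it and stop.
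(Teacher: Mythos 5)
Your proof is correct and is essentially the paper's own argument: the paper rescales the initial vector, setting $\pi'=\frac{\lambda'}{\lambda}\pi$ while keeping the matrices and final vector unchanged, whereas you rescale the final vector by the same factor $c=\lambda'/\lambda$ --- a purely cosmetic difference. In both cases the matrices are untouched, so the extension postulate transfers verbatim, and the strict inequality is preserved under multiplication by the positive scalar $\lambda'/\lambda$, exactly as you argue.
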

\begin{proof}
Let $S=\{s_1,\ldots,s_n\}$ and put $\alpha=1 - \frac{\lambda'}{\lambda}$. 
Consider the monoidal generalized automaton
$$\cA'=(S',M,\{Q'(x)\mid x\in G_M\},\pi',f'),$$
where $S'=S$, $Q'(x) =Q(x)$ for each $x\in G_M$, 
$\pi'=(1-\alpha)\cdot \pi$ and $f'=f$.
Then for each word $u=x_1\ldots x_k\in M$,
$$\pi'Q'(u)f' = (1-\alpha)\pi Q(u)f = \frac{\lambda'}{\lambda} \pi Q(u) f.$$
Thus 
$\pi'Q'(u)f'>\lambda'$ 
iff
$\pi Q(u) f>\lambda$ 
and hence
$L(\cA,\lambda)=L(\cA',\lambda')$.
\end{proof}

%

\section{Turakainen's Result}
Turakainen's result (Thm.~\ref{p-tura0}) and the matrix characterization of stochastic languages (Thm.~\ref{p-claus})
will be considered in the monoidal setting.
In view of Turakainen's theorem, let 
$$\cA=(S,M,\{Q(x)\mid x\in G_M\},\pi,f)$$ 
be a monoidal generalized automaton,
$\lambda$ a real number and $L=L(\cA,\lambda)$ the language accepted by $\cA$ 
w.r.t.\ the cut point $\lambda$.
In the following, let $S=\{s_1,\ldots,s_n\}$.
The proof of the generalization of Turakainen's result (Thm.~\ref{t-tura-gen}) 
will be broken down into several steps and will be conducted for the language 
$L' = L\setminus\{\epsilon\}$.
The empty word will be considered at the end.

Note that in the following assertions, 
a new automaton $\cA'$ will always be constructed from a given one, $\cA$, satisfying the extension postulate 
such that the monoid homomorphism $Q':M\rightarrow\RR^{n\times n}$ in $\cA'$  
is an extension or modification of the monoid homomorphism $Q:M\rightarrow\RR^{n\times n}$ in $\cA$.  

\begin{proposition}\label{p-cr0}
There exists a monoidal generalized automaton
$$\cA_1=(S_1,M,\{Q_1(x)\mid x\in G_M\},\pi_1,f_1),$$ 
whose matrices $Q_1(x)$ have column and row sums equal to~0, such that
$$L'=L(\cA_1,\lambda)\setminus\{\epsilon\}.$$
\end{proposition}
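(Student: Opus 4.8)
The plan is to avoid modifying the generator matrices of $\cA$ by hand and instead to post-compose the homomorphism $Q\colon M\to\RR^{n\times n}$ with a fixed \emph{multiplicative} matrix transformation $\Phi\colon\RR^{n\times n}\to\RR^{(n+1)\times(n+1)}$ whose image consists exactly of matrices with vanishing row and column sums. Setting $Q_1=\Phi\circ Q$ then yields an automaton of dimension $n+1$ whose generator matrices carry the required zero sums, while the extension postulate for $\cA_1$ is inherited from that of $\cA$. I expect this inheritance to be the main obstacle: unlike the free-monoid case, the new generator matrices cannot be assigned arbitrarily, since they must continue to satisfy every defining relation of $M$, and it is precisely the multiplicativity of $\Phi$ that secures this.

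Concretely, I would let $\mathbf{1}\in\RR^{n+1}$ be the all-ones column vector and choose an $(n+1)\times n$ matrix $U$ whose columns form a basis of the hyperplane $\{v\in\RR^{n+1}\mid \mathbf{1}^{T}v=0\}$, so that $\mathbf{1}^{T}U=0$. With the pseudoinverse $U^{+}=(U^{T}U)^{-1}U^{T}$ one has $U^{+}U=I_{n}$, and since the columns of $U$ are orthogonal to $\mathbf{1}$ also $U^{+}\mathbf{1}=0$. Define $\Phi(A)=U\,A\,U^{+}$. Then $\Phi$ is multiplicative, because $\Phi(A)\Phi(B)=U A (U^{+}U) B U^{+}=U(AB)U^{+}=\Phi(AB)$, and every image lies in the prescribed set, since $\mathbf{1}^{T}\Phi(A)=(\mathbf{1}^{T}U)AU^{+}=0$ and $\Phi(A)\mathbf{1}=UA(U^{+}\mathbf{1})=0$; that is, each $\Phi(A)$ has all row and all column sums equal to $0$.

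I would then put $Q_{1}(x)=\Phi(Q(x))$ for $x\in G_M$, together with $\pi_{1}=\pi U^{+}$, $f_{1}=U f$, and a state set $S_1$ of size $n+1$. Since $Q$ is a monoid homomorphism and $\Phi$ is multiplicative, for every non-empty word $u=x_1\cdots x_k$ one gets $Q_1(x_1)\cdots Q_1(x_k)=\Phi\bigl(Q(x_1)\cdots Q(x_k)\bigr)=\Phi(Q(u))$, which depends only on the element $u\in M$ and not on its factorization; hence $Q_1$ extends to a well-defined homomorphism and the extension postulate holds for $\cA_1$. Using $U^{+}U=I_{n}$, the acceptance value is preserved exactly on non-empty words:
$$\pi_{1}Q_{1}(u)f_{1}=\pi U^{+}\,U Q(u) U^{+}\,U f=\pi Q(u) f .$$
Therefore $u\in L(\cA_1,\lambda)\iff u\in L(\cA,\lambda)$ for every non-empty $u$, which is exactly the assertion $L'=L(\cA_1,\lambda)\setminus\{\epsilon\}$.

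Finally I would dispose of the empty word, where the only genuine delicacy lies. The map $\Phi$ is not unital: $\Phi(I_n)=UU^{+}$ is the projection onto $\mathbf{1}^{\perp}$, not $I_{n+1}$. Accordingly $Q_1(\epsilon)$ must be taken as $I_{n+1}$ by the empty-product convention of the extension postulate rather than as $\Phi(Q(\epsilon))$, and the two assignments are consistent precisely because $\epsilon$ is excluded from $L'$ and is represented only by the empty word, so its matrix never needs to coincide with a product of the zero-sum generator matrices. This is the point I would argue most carefully, since a product of matrices with vanishing row and column sums can never equal $I_{n+1}$; it is exactly the reason the statement is formulated for $L'=L\setminus\{\epsilon\}$, with the empty word deferred to the end of the overall proof of Turakainen's result.
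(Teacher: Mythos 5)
Your proof is correct and establishes exactly what the paper's proof does, but by a genuinely different realization of the same underlying idea. The paper adjoins \emph{two} states and borders each generator explicitly: $Q_1(x)$ keeps $Q(x)$ as the middle block, puts the negated row sums $-\sigma_i(x)$ in the first column, the negated column sums $-\sigma'_j(x)$ in the last row, and the total sum $\sigma''(x)$ in the corner, with zero-padded vectors $\pi_1=(0,\pi,0)$, $f_1=(0,f,0)^T$; it then verifies by hand that this bordered shape is closed under products and that $\pi_1Q_1(u)f_1=\pi Q(u)f$. You adjoin a single state and isolate the mechanism abstractly: a non-unital multiplicative embedding $\Phi(A)=UAU^{+}$ whose image lies in the zero-sum matrices, together with $\pi_1=\pi U^{+}$, $f_1=Uf$. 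These are two instances of one construction: the paper's bordered matrix is precisely $VQ(x)W$ for $V=\left(\begin{smallmatrix}0\\ I_n\\ -\mathbf{1}^T\end{smallmatrix}\right)$ and $W=\left(-\mathbf{1}\,\middle|\,I_n\,\middle|\,0\right)$, which satisfy $WV=I_n$, $\mathbf{1}^TV=0$ and $W\mathbf{1}=0$, exactly the identities you demand of the pair $(U,U^{+})$; the paper's padded vectors then satisfy $\pi_1V=\pi$ and $Wf_1=f$, playing the role of your transformed vectors. What the explicit bordering buys is concreteness (rational entries, no pseudo-inverse) at the cost of one more state and a shape-specific verification; what your formulation buys is a one-line proof that the extension postulate carries over (multiplicativity of $\Phi$ transports every defining relation of $M$ from $Q$ to $Q_1$), a point the paper only gestures at in its preamble to Section~6, and your $\cA_1$ feeds into Prop.~\ref{p-zz} just as well, since only the zero-sum property and the acceptance identity are used there. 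One caveat applies to both proofs equally: your closing assumption that the identity of $M$ is represented only by the empty word is genuinely needed, because no product of matrices with vanishing row sums can equal the identity matrix; over a monoid with non-trivial units (e.g.\ a group, where $e=xx^{-1}$), neither your $Q_1$ nor the paper's extends to a monoid homomorphism, and in fact no automaton of the required form can exist. You at least name this assumption; the paper passes over it in silence.
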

\begin{proof}
Define $S_1 = S\cup \{s_0,s_{n+1}\}$, $\pi_1=(0,\pi,0)$, 
$f_1=\left(\begin{array}{c} 0\\ f\\ 0 \end{array}\right)$ and
$$Q_1(x) = \left(\begin{array}{c|ccc|c} 
0            & 0&\ldots& 0 & 0 \\ \hline
-\sigma_1(x) &  &      &   & 0  \\
\vdots       && Q(x) &     &\vdots\\
-\sigma_n(x) &  &      &   & 0  \\\hline
\sigma''(x) & -\sigma'_1(x)&\ldots &-\sigma'_n(x) & 0  \\
\end{array}\right),\quad x\in G_M,$$
where $\sigma_i(x)$ is the $i$th row sum of $Q(x)$, $\sigma'_j(x)$ is the $j$th column sum of $Q(x)$, 
and $\sigma''(x)$ is the sum of all entries of $Q(x)$.

For each non-empty word $u=x_1\ldots x_k\in M$,
the matrix $Q_1(u) = Q_1(x_1)\cdots Q_1(x_k)$ has the same form as that of the generators,
i.e., the column and row sums of $Q_1(u)$ are equal to~0.
It is easy to check that
$$\pi_1Q_1(u)f_1 = \pi Q(u)f.$$ 
Thus $\pi Q(u) f>\lambda$ iff $\pi_1Q_1(u)f_1>\lambda$ and hence
the result follows.
\end{proof}

\begin{proposition}\label{p-zz}
There exists a monoidal generalized automaton
$$\cA_2=(S_2,M,\{Q_2(x)\mid x\in G_M\},\pi_2,f_2),$$
whose matrices $Q_2(x)$ are non-negative, such that
$$L'=L(\cA_2,\lambda)\setminus\{\epsilon\}.$$
\end{proposition}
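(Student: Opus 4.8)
The plan is to imitate the non-negativity step of the classical Turakainen argument, using the automaton $\cA_1$ of Prop.~\ref{p-cr0} as the starting point. Write $N=|S_1|=n+2$ and let $J$ be the $N\times N$ all-ones matrix. The whole construction rests on the two identities $Q_1(x)J=0$ and $JQ_1(x)=0$, which hold for every generator $x\in G_M$ precisely because the matrices $Q_1(x)$ have vanishing row sums (so $Q_1(x)\mathbf{1}=\mathbf{0}$, hence $Q_1(x)J=0$) and vanishing column sums (so $\mathbf{1}^TQ_1(x)=\mathbf{0}^T$, hence $JQ_1(x)=0$). I would set $S_2=S_1$, $f_2=f_1$, and define $Q_2(x)=Q_1(x)+cJ$ for a real constant $c$ chosen large enough that every entry of every $Q_2(x)$ is non-negative; this is possible since $G_M$ is finite and each matrix is finite, so one may take $c\ge-\min_{x,i,j}(Q_1(x))_{i,j}$.

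The next step is to check that the cut-point value is preserved on non-empty words. Expanding a product $Q_2(x_1)\cdots Q_2(x_k)$ and using the identities above, every mixed term containing both a factor $Q_1(x_i)$ and a factor $cJ$ vanishes, because any such term has an adjacent pair of unlike factors and hence a vanishing factor $Q_1(x_i)J=0$ or $JQ_1(x_j)=0$. Only the all-$Q_1$ term and the all-$J$ term survive, so $Q_2(x_1)\cdots Q_2(x_k)=Q_1(u)+c^kN^{k-1}J$ for $u=x_1\cdots x_k$. To neutralize the surviving $J$-term I would replace the initial vector by $\pi_2=\pi_1-\frac{s}{N}\mathbf{1}^T$, where $s$ is the sum of the entries of $\pi_1$, so that $\pi_2\mathbf{1}=0$. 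Since $Q_1(u)$ again has zero column sums, $\mathbf{1}^TQ_1(u)=\mathbf{0}^T$, so this change does not alter $\pi_1Q_1(u)f_1$; and since $\pi_2\mathbf{1}=0$, the contribution of the $J$-term is $c^kN^{k-1}(\pi_2\mathbf{1})(\mathbf{1}^Tf_2)=0$. Hence $\pi_2Q_2(u)f_2=\pi_1Q_1(u)f_1=\pi Q(u)f$ for every non-empty $u$, which would give $L(\cA_2,\lambda)\setminus\{\epsilon\}=L'$.

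The main obstacle is not this value computation but the \emph{extension postulate}: I must guarantee that $Q_2$ really extends to a monoid homomorphism $M\to\RR^{N\times N}$, which in the free/classical setting is automatic but here has to be argued. The annihilation identities already show that the images of the generators satisfy every length-preserving relation of $M$ (the cross terms cancel and the residual $J$-terms agree on both sides). The delicate case is a relation equating two words of \emph{different} lengths $k\ne k'$: the residual scalar $c^kN^{k-1}$ then depends on the chosen factorization, so the naive product is ill-defined unless this scalar is constant in $k$. This forces the idempotent normalization $c=\frac1N$, for which $c^kN^{k-1}=\frac1N$ for all $k$ and $Q_2(u)=Q_1(u)+\frac1NJ$ is manifestly factorization-independent, so that $Q_2$ becomes a genuine homomorphism while leaving the value computation above untouched. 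Reconciling this normalization with non-negativity is where the argument genuinely departs from the classical free-monoid proof: for a length-homogeneous $M$ the coefficient $c^kN^{k-1}$ is already a function of $u$ alone, so any large $c$ is admissible at once; in the general case one must combine the normalized shift $c=\frac1N$ with a prior lower bound $\min_{x,i,j}(Q_1(x))_{i,j}\ge-\frac1N$ on the entries, and securing such a bound through a homomorphism-preserving modification is the step I expect to require the most care.
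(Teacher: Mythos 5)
Your value computation is correct and rests on the same engine as the paper's proof: the annihilation identities $Q_1(x)J=0=JQ_1(x)$ coming from the zero row and column sums guaranteed by Prop.~\ref{p-cr0}, and the consequent product formula $Q_2(x_1)\cdots Q_2(x_k)=Q_1(u)+c^kN^{k-1}J$. Where you genuinely depart from the paper is in the disposal of the surviving $J$-term. The paper enlarges the state set by one state $s_{n+2}$ whose diagonal entry $mr$ propagates to $(mr)^k$ under products, and cancels $\pi_1 B_{m^{k-1}r^k}f_1=\alpha m^{k-1}r^k$ numerically against the contribution $\frac{\alpha}{m}(mr)^k(-1)$ of the new coordinate, via $\pi_2=(\pi_1,\frac{\alpha}{m})$ and $f_2={f_1\choose -1}$. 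You instead keep $S_2=S_1$ and recenter the initial vector so that $\pi_2\mathbf{1}=0$, whence the $J$-term dies structurally ($\pi_2Jf_2=(\pi_2\mathbf{1})(\mathbf{1}^Tf_2)=0$), while $\mathbf{1}^TQ_1(u)=0$ makes the recentering invisible on the $Q_1$-part. Your version is leaner (one fewer state, no bookkeeping constant $\alpha$) and equally valid; both arguments prove the proposition at the same level of rigour.

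Your closing discussion of the extension postulate is not a defect of your proof relative to the paper's --- it exposes a defect of the paper's. In the paper's construction the terms $B_{m^{k-1}r^k}$ and $(mr)^k$ depend on the length $k$ of the chosen word representing $u$, so whenever $M$ has a defining relation equating words of different lengths (say an idempotent generator, $x^2=x$, which is perfectly compatible with the extension postulate for the original automaton since $Q(x)^2=Q(x)$ is then required), the generator matrices $Q_2(x)$ admit no extension to a monoid homomorphism at all unless $r\in\{0,\frac{1}{m}\}$; the paper silently takes $r$ large and asserts that the postulate is preserved, so its proof is valid only for monoids whose relations are length-preserving. Your normalization $c=\frac{1}{N}$ is exactly the repair that makes $Q_2(u)=Q_1(u)+\frac{1}{N}J$ representation-independent, and the clash you point out with non-negativity (one would need every entry of every $Q_1(x)$ to be at least $-\frac{1}{N}$) is a real, open difficulty --- one the paper does not resolve either, since it never confronts it. So your proposal matches the paper's proof under the paper's implicit hypothesis, and your extra analysis is a sharpening rather than a gap.
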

\begin{proof}
Let $S_2=S_1\cup\{s_{n+2}\}$ and $m=n+2$.
Define
$$\pi_2=\left(\pi_1,\frac{\alpha}{m}\right)
\quad\mbox{and} \quad 
f_2=\left(\begin{array}{c} f_1\\-1\end{array}\right),$$
where $\alpha$ is the product of the sum of the components of $\pi_1$ 
and the sum of the components of $f_1$, i.e.,
$$\alpha = \left(\sum_{i=0}^{m-1} \pi_{1i} \right)\cdot \left(\sum_{i=0}^{m-1} f_{1i} \right).$$


Let $r$ be a real number.
Write $B_r$ for the $m\times m$ matrix whose entries are~$r$, i.e.,
$$B_r = \left(\begin{array}{ccc} r & \ldots & r\\ \vdots  & &  \vdots \\ r & \ldots & r \end{array}\right).$$

Choose $r\geq 0$ such that for each $x\in G_M$, the matrix $Q_1(x)+B_r$ is non-negative.
Since the column and row sums of the matrices $Q_1(x)$, $x\in G_M$, are zero, 
the matrices $Q_1(x)\cdot B_r$ and $B_r\cdot Q_1(x)$ are both zero matrices.
It follows that for all $x,y\in G_M$,
$$(Q_1(x)+B_r)\cdot (Q_1(y)+B_r) = Q_1(x)Q(y) + B_rB_r = Q_1(xy) + B_{mr^2}.$$
Moreover, for each nonempty word $u=x_1\ldots x_k\in M$, 
the column and row sums of the matrix $Q_1(u)$ are zero as well.
Thus for all non-empty words $u,v\in M$,
$$(Q_1(u)+B_r)\cdot (Q_1(v)+B_r) = Q_1(uv) + B_{mr^2}.$$

Define the matrices  
$$Q_2(x) = \left(\begin{array}{cc} Q_1(x)+B_r  & 0 \\ 0 & mr \end{array}\right),\quad x\in G_M.$$
Then for each word $u=x_1\ldots x_k\in M$ of length $k\geq 1$,
$$Q_2(u) = Q_2(x_1)\cdots Q_2(x_k) 
=  \left(\begin{array}{cc} Q_1(u)+B_{m^{k-1}r^k}  & 0 \\ 0 & m^kr^k \end{array}\right).$$
Thus 
\begin{eqnarray*}
\pi_2 Q_2(u) f_2
&=& \pi_1 Q_1(u) f_1 + \pi_1 B_{m^{k-1}r^k}f_1 - \alpha m^{k-1}r^k \\
&=& \pi_1 Q_1(u) f_1,
\end{eqnarray*}
since $\pi_1 B_r f_1 = \sum_{i=1}^{m}\sum_{j=1}^{m} \pi_{1i} r f_{1j} = \alpha r$ 
for each real number $r$.
Hence, by Prop.~\ref{p-cr0}, the result follows.
\end{proof}

\begin{proposition}\label{p-0}
There exists a monoidal generalized automaton
$$\cA_3=(S_3,M,\{Q_3(x)\mid x\in G_M\},\pi_3,f_3),$$
whose matrices $Q_3(x)$ are stochastic, such that
$$L'=L(\cA_3,0)\setminus\{\epsilon\}.$$
\end{proposition}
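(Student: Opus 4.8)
The plan is to continue the normalization chain begun in Propositions~\ref{p-cr0} and~\ref{p-zz}: starting from the automaton $\cA_2$ whose matrices $Q_2(x)$ are non-negative and accept $L'$ at cut point $\lambda$, I would produce an automaton $\cA_3$ whose generator matrices $Q_3(x)$ are genuinely stochastic (row sums equal to $1$) while absorbing the cut point $\lambda$ into the value $0$. The standard trick for turning a non-negative matrix into a stochastic one is to enlarge the state set by one ``slack'' state and to add to each row exactly the deficit needed to make it sum to $1$. Concretely, for each $x\in G_M$ let $\rho_i(x)$ be the $i$th row sum of $Q_2(x)$ (an $m\times m$ matrix, $m=n+3$ after the previous step). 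After first rescaling all matrices by a common positive constant $c$ so that every entry of $cQ_2(x)$ is at most $1$ and every row sum is at most $1$ — possible because there are finitely many generators and finitely many rows — I would set
\begin{eqnarray*}
Q_3(x) = \left(\begin{array}{cc} cQ_2(x) & d(x) \\ 0 & 1 \end{array}\right),\quad x\in G_M,
\end{eqnarray*}
where $d(x)$ is the column vector whose $i$th entry is $1$ minus the $i$th row sum of $cQ_2(x)$, so that each row of $Q_3(x)$ sums to $1$ and the new bottom state is absorbing.

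The key computation is to track what a product $Q_3(u)=Q_3(x_1)\cdots Q_3(x_k)$ looks like and to verify that the original acceptance value is recovered in the upper-left block. Because the new row/column is of the block-triangular absorbing form, a short induction shows that the upper-left $m\times m$ block of $Q_3(u)$ is exactly $c^k Q_2(u)$, where $k$ is the length of $u$; the off-diagonal slack column accumulates whatever deficit is needed, and the bottom-right entry stays $1$. Then choosing $\pi_3=(\pi_2,0)$ and $f_3=\bigl(\begin{smallmatrix} f_2\\ 0\end{smallmatrix}\bigr)$ (padding with a zero in the slack coordinate so the absorbing state never contributes) gives
\begin{eqnarray*}
\pi_3 Q_3(u) f_3 = c^k\,\pi_2 Q_2(u) f_2,
\end{eqnarray*}
for every non-empty word $u$ of length $k$. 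Since $c>0$, the sign of $\pi_3 Q_3(u)f_3 - c^k\lambda$ matches the sign of $\pi_2Q_2(u)f_2 - \lambda$; this is where the cut point $\lambda$ has to be handled.

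The main obstacle is exactly this last point: the scaling factor $c^k$ depends on the word length $k$, so the inequality $\pi_2 Q_2(u)f_2>\lambda$ does not translate into a comparison with a single fixed cut point unless $\lambda=0$. This is precisely why the proposition targets cut point $0$ rather than $\lambda$. I would therefore invoke Prop.~\ref{p-l} first to replace the pair $(\cA_2,\lambda)$ by an automaton accepting $L'$ at an arbitrary positive cut point, or — more directly — reduce to cut point $0$ by the Turakainen-style shift: subtract $\lambda$ from the acceptance value by augmenting $\pi_2$ and $f_2$ with one further coordinate that contributes the constant $-\lambda$ to $\pi Q(u)f$ for every non-empty $u$ (an absorbing state carrying value $1$ scaled appropriately against the length factor). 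Once the comparison is reduced to the sign of $\pi Q(u)f-\lambda$, i.e.\ to a cut point of $0$, the length-dependent positive factor $c^k$ is harmless because multiplication by a positive number preserves the sign. I would then verify the extension postulate for $Q_3$ — it holds because the block-triangular absorbing construction is multiplicative by the very induction used above — and conclude $L'=L(\cA_3,0)\setminus\{\epsilon\}$, deferring the treatment of the empty word to the final step as announced in the text.
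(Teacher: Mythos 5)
Your main route is essentially the paper's proof: the paper merges your two stages into a single construction, scaling $Q_2(x)$ by $\frac{1}{\beta}$ (your $c$), adjoining a state with diagonal entry $\frac{1}{\beta}$ that carries $\lambda$ in $\pi_3$ and $-1$ in $f_3$ so that the cut-point subtraction scales as $\frac{1}{\beta^k}$ together with the main term, and adjoining an absorbing slack state that soaks up the row deficits to make each $Q_3(x)$ stochastic — exactly your shift-then-rescale composition, yielding $\pi_3Q_3(u)f_3=\frac{1}{\beta^k}(\pi_2Q_2(u)f_2-\lambda)$. One caveat: your first suggested alternative of invoking Prop.~\ref{p-l} would not work, since that proposition only converts a positive cut point into another positive cut point (never $0$) and a fixed positive cut point cannot survive the length-dependent factor $c^k$; your ``more direct'' Turakainen-style shift is the correct device and is precisely what the paper does.
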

\begin{proof}
Define $S_3 = S_2\cup \{s_{n+3},s_{n+4}\}$,
$\pi_3=(\pi_2,\lambda,0)$, 
$f_3= \left(\begin{array}{c} f_2\\-1\\0\end{array}\right)$ and
$$Q_3(x) 
= \left(\begin{array}{c|cc} 
& 0 & \beta_0(x)\\ 
\frac{1}{\beta} Q_2(x)  & \vdots & \vdots \\ 
& 0 & \beta_{n+2}(x) \\ \hline 
0\ldots 0 & \frac{1}{\beta} &  1-\frac{1}{\beta}\\
0\ldots 0 & 0 &  1
\end{array}\right),\quad x\in G_M,$$
where $\beta>0$ is a real number which is larger 
than the row sums of each matrix $Q_2(x)$, $x\in G_M$,
and $\beta_0(x),\ldots,\beta_{n+2}(x)\geq 0$ 
are real numbers such that the matrix $Q_3(x)$, $x\in G_M$, becomes stochastic.
Then for each word $u=x_1\ldots x_k\in M$ of length $k\geq 1$, 
the matrix $Q_3(u)$ is given by
$$Q_3(u) 
= Q_3(x_1)\cdots Q_3(x_k) 
= \left(\begin{array}{c|cc} 
& 0 & \beta'_0(u)\\ 
\frac{1}{\beta^k} Q_2(u)  & \vdots & \vdots \\ 
& 0 & \beta'_{n+2}(u) \\ \hline 
0\ldots 0 & \frac{1}{\beta^k} &  1-\frac{1}{\beta^k}\\
0\ldots 0 & 0 &  1
\end{array}\right).$$
This matrix is also stochastic; 
the numbers $\beta'_0(u),\ldots,\beta'_{n+2}(u)\geq 0$ are not of interest in the sequel.

For each non-empty word $u=x_1\ldots x_k\in M$, 
$$\pi_3Q_3(u)f_3 
= \frac{1}{\beta^k}\pi_2 Q_2(u)f_2 - \frac{1}{\beta^k}\lambda 
= \frac{1}{\beta^k} \left(\pi_2 Q_2(u) f_2 - \lambda\right).$$ 
Thus $\pi_3 Q_3(u) f_3> 0$ iff $\pi_2 Q_2(u)f_2>\lambda$.
Hence, by Prop.~\ref{p-zz}, the result follows.
\end{proof}

\begin{proposition}\label{p-4}
There exists a monoidal generalized automaton 
$$\cA_4 = (S_4,M, \{Q_4(x)\mid x\in G_M\}, \pi_4,f_4),$$ 
where the matrices $Q_4(x)$ are stochastic,
the initial vector $\pi_4$ is a state distribution 
and the final vector $f_4$ is positive, such that for some real number $\lambda'>0$,
$$L'= L(\cA_4,\lambda')\setminus\{\epsilon\}.$$
\end{proposition}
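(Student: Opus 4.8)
The plan is to start from the automaton $\cA_3$ produced by Prop.~\ref{p-0}, whose matrices $Q_3(x)$ are already stochastic and which satisfies $L'=L(\cA_3,0)\setminus\{\epsilon\}$; equivalently, for each non-empty word $u\in M$ one has $u\in L'$ iff $\pi_3 Q_3(u) f_3>0$. Since the matrices are stochastic, the only work left is to replace the arbitrary real row vector $\pi_3$ by a state distribution and the arbitrary real column vector $f_3$ by a positive vector, all the while preserving the acceptance condition and the extension postulate. Writing $N=|S_3|$ and letting $\mathbf{1}$ denote the all-ones vector of the appropriate length, the two handles I would use are that $Q_3(u)\mathbf{1}=\mathbf{1}$ (row-stochasticity) and that a state distribution sums to~$1$.

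First I would double the state space. Set $S_4=S_3\sqcup S_3$ and define $Q_4(x)=\mathrm{diag}(Q_3(x),Q_3(x))$ for $x\in G_M$. These matrices are stochastic, and since the diagonal embedding $\RR^{N\times N}\to\RR^{2N\times 2N}$ is a monoid homomorphism, the extension postulate is inherited from $\cA_3$, with $Q_4(u)=\mathrm{diag}(Q_3(u),Q_3(u))$ for every $u\in M$. Next I would choose $d>0$ large enough that $\pi_3+d\,\mathbf{1}\geq 0$ and put $\pi_4=\tfrac{1}{T}\,(\pi_3+d\,\mathbf{1},\,d\,\mathbf{1})$, where $T$ is the sum of all entries of the unnormalized vector; for $d$ large this $T$ is positive, so $\pi_4$ is non-negative and sums to~$1$, i.e.\ a state distribution. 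On the final side I would place $f_3$ on the first copy and $-f_3$ on the second, so that the two blocks combine as a difference and, using $Q_3(u)\mathbf{1}=\mathbf{1}$, recover $\tfrac{1}{T}\,\pi_3 Q_3(u) f_3$.

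The second block now carries $-f_3$, so the provisional final vector has negative entries. To repair this I would add a constant column $c'\,\mathbf{1}$ to the whole final vector, obtaining $f_4=(f_3+c'\,\mathbf{1},\,-f_3+c'\,\mathbf{1})$, and choose $c'>\max_j|(f_3)_j|$ so that $f_4$ is positive. Because the matrices are stochastic and $\pi_4$ sums to~$1$, this translation changes the value by exactly the constant $c'$, giving
$$\pi_4 Q_4(u) f_4=\tfrac{1}{T}\,\pi_3 Q_3(u) f_3+c'.$$
Hence for every non-empty $u$ one has $\pi_3 Q_3(u) f_3>0$ iff $\pi_4 Q_4(u) f_4>c'$, and with $\lambda'=c'>0$ the identity $L'=L(\cA_4,\lambda')\setminus\{\epsilon\}$ follows from Prop.~\ref{p-0}.

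The main obstacle is that the two requirements pull against each other: forcing $\pi$ to be non-negative introduces the subtractive second copy and hence negative entries in $f$, while the constant-shift trick that turns $f$ positive is only a clean, $u$-independent translation if $\pi$ already sums to~$1$. The resolution is entirely in the \emph{order} of operations — normalize the initial vector first so that $\pi_4$ is a genuine distribution, and only afterwards apply the constant shift to $f_4$, which is legitimate precisely because of row-stochasticity together with $\sum_i(\pi_4)_i=1$. The remaining checks (non-negativity of $\pi_4$ for large $d$, positivity of $f_4$ for large $c'$, and the survival of the extension postulate under the block-diagonal doubling) are routine.
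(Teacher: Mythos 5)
Your proof is correct and takes essentially the same approach as the paper's own proof: block-diagonal doubling of $\cA_3$, shifting and normalizing the initial vector to a state distribution $\tfrac{1}{T}(\pi_3+d\mathbf{1},\,d\mathbf{1})$, placing $f_3$ and $-f_3$ on the two copies plus a constant shift to make the final vector positive, and taking $\lambda'$ equal to that shift. The paper's construction is identical up to notation (its $r$, $R$, $t$ are your $d$, $T$, $c'$).
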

\begin{proof}
Let $m=n+5$. 
Put $S_4=S_3\cup \{s_{m},\ldots,s_{2m-1}\}$ and 
$$Q_4(x) = \left(\begin{array}{cc} Q_3(x) & 0 \\ 0 & Q_3(x) \end{array}\right),\quad x\in G_M.$$
Then for each word $u=x_1\ldots x_k\in M$,
$$Q_4(u) = Q_4(x_1)\cdots Q_4(x_k) = \left(\begin{array}{cc} Q_3(u) & 0 \\ 0 & Q_3(u) \end{array}\right).$$
The matrices $Q_4(u)$ are also stochastic.

Let $1_m$ be the all-1 row vector of length $m$.
Choose a real number $r>0$ large enough such that the row vector 
$\pi_3 + r\cdot 1_m$ has only positive components.
Then put $$\pi_4 = \frac{1}{R}(\pi_3 + r\cdot 1_m, r\cdot 1_m),$$
where $R$ is taken such that the row sum of $\pi_4$ is equal to~1;
i.e., $\pi_4$ is a state distribution of $S_4$.

Moreover, choose a real number $t>0$ large enough such that the column vector 
$$f_4 = {f_3 + t\cdot 1_m^T\choose -f_3 + t\cdot 1_m^T}$$ has only positive components.

Then for each non-empty word $u\in M$,
\begin{eqnarray*}
\pi_4Q_4(u) f_4 &=& \pi_4Q_4(u) {f_3 \choose -f_3}  + \pi_4Q_4(u) {t\cdot 1_m^T\choose t\cdot 1_m^T}\\
&=& \frac{1}{R}\pi_3 Q_3(u) f_3  + t,
\end{eqnarray*}
since by construction the vector $\pi_4Q_4(u)$ has row sum equal to~1.
Put $\lambda'=t$.
Then $L(\cA_3,0)\setminus\{\epsilon\} = L(\cA_4,\lambda')\setminus\{\epsilon\}$ 
and hence the result follows from Prop.~\ref{p-0}.
\end{proof}

A monoidal generalized automaton $\cA = (S,M, \{P(x)\mid x\in G_M\}, \pi,f)$ 
is a {\em monoidal stochastic automaton\/}
if the matrices $P(x)$, $x\in G_M$, are stochastic,
the initial vector $\pi$ is a probability distribution of the state set,
and the final vector $f$ is a binary vector.
A {\em monoidal stochastic automaton language\/} is 
a monoidal generalized automaton language $L$, which is accepted by
a monoidal stochastic automaton  $\cA$, i.e., 
$L=L(\cA,\lambda)$ for some cut point $\lambda$ with $0\leq\lambda\leq 1$.

\begin{proposition}\label{p-5}
There exists a monoidal stochastic automaton 
$$\cA_5 = (S_5,M, \{P(x)\mid x\in G_M\}, \pi_5,f_5)$$ 
such that for some $0 \leq \lambda''\leq 1$,
$$L' = L(\cA_5,\lambda'')\setminus\{\epsilon\}.$$
\end{proposition}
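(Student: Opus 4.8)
The plan is to notice that the automaton $\cA_4$ from Prop.~\ref{p-4} is already almost a monoidal stochastic automaton: its transition matrices $Q_4(x)$ are stochastic and its initial vector $\pi_4$ is a probability distribution. By the definition of a monoidal stochastic automaton, the only feature still missing is that $f_4$ is merely \emph{positive} rather than binary, and that the cut point $\lambda'$ must be brought into $[0,1]$. So the whole task reduces to replacing the positive final vector by a $0$--$1$ vector without changing the language on non-empty words. First I would fix a real number $c>\max\{\max_i (f_4)_i,\ \lambda'\}$ and set $g=\tfrac1c f_4$, so that every component $g_i$ lies in $(0,1)$; note that positivity of $f_4$ is exactly what guarantees $g_i>0$.

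The construction I propose is to double the state set, splitting each state of $\cA_4$ into an \emph{accepting} copy and a \emph{rejecting} copy, so that $|S_5|=2N$ with $N=|S_4|$. The matrices $P(x)$ are defined so that, on reading $x$, the probability of moving into the accepting copy of a state $s_j$ is $(Q_4(x))_{ij}\,g_j$ and into the rejecting copy is $(Q_4(x))_{ij}\,(1-g_j)$, independently of whether the current state is an accepting or a rejecting copy. Equivalently, writing $U$ for the $2N\times N$ matrix that forgets the accept/reject tag and $W$ for the $N\times 2N$ matrix that re-splits a state $s_j$ with weights $g_j$ and $1-g_j$, one sets $P(x)=U\,Q_4(x)\,W$. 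The final vector $f_5$ is the binary vector carrying a $1$ on every accepting copy and $0$ elsewhere, the initial vector is $\pi_5=\pi_4 W$, and the cut point is $\lambda''=\lambda'/c$.

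The verification then rests on one key identity, which I regard as the crux: re-splitting a state and immediately forgetting the tag is the identity, i.e.\ $WU=I_N$, because the two split-weights sum to $g_j+(1-g_j)=1$. Granting this, the inner factors telescope and $P(x_1)\cdots P(x_k)=U\,Q_4(x_1\cdots x_k)\,W$ for every non-empty word; in particular $x\mapsto P(x)$ extends to a monoid homomorphism and the extension postulate is met. The routine checks are that each $P(x)$ is stochastic (its row sums collapse to those of $Q_4(x)$ via $g_j+(1-g_j)=1$), that $\pi_5=\pi_4W$ is again a probability distribution, and that $f_5$ is binary. Using $WU=I_N$ together with $Wf_5=g$, one gets, for every non-empty $u$,
\[
\pi_5\,P(u)\,f_5 \;=\; \pi_4\,Q_4(u)\,W f_5 \;=\; \pi_4\,Q_4(u)\,g \;=\; \tfrac1c\,\pi_4\,Q_4(u)\,f_4 .
\]
Hence $\pi_5 P(u) f_5>\lambda''$ iff $\pi_4 Q_4(u) f_4>\lambda'$, and since the choice of $c$ forces $\lambda''\in(0,1)$, we conclude $L'=L(\cA_5,\lambda'')\setminus\{\epsilon\}$ by Prop.~\ref{p-4}.

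The main obstacle is precisely designing the split so that the final-vector weighting is absorbed into genuinely stochastic matrices while remaining compatible with the homomorphism demanded by the extension postulate. A naive single absorbing accept state fails, because the weighting matrices no longer cancel along a word and the product acquires no clean closed form. The accept/reject doubling is what makes the weights telescope through $WU=I_N$, and verifying this telescoping (rather than the purely routine stochasticity and normalization checks) is the step I expect to require the most care.
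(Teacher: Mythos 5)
Your proof is correct and shares the paper's overall strategy: start from $\cA_4$ of Prop.~\ref{p-4}, note that only the positive (non-binary) final vector $f_4$ and the location of the cut point keep $\cA_4$ from being monoidal stochastic, and absorb the weights of $f_4$ into the stochastic matrices by duplicating states so that the weights telescope along products. The concrete realization, however, is genuinely different. The paper uses $k=|S_4|$ copies of the state set (hence $k^2$ states), sets the $(i,j)$ block of $P(x)$ to $\alpha_j Q_4(x)$ with $\alpha_j=f_{4j}/\alpha$ and $\alpha=\sum_i f_{4i}$, takes the ``diagonal'' states as final, and telescopes via $\sum_j\alpha_j=1$; its cut point $\lambda''=\lambda'/\alpha$ lands in $[0,1]$ because $0<\lambda'\leq\alpha$ by the construction of Prop.~\ref{p-4}. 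Your sandwich $P(x)=U\,Q_4(x)\,W$ with $WU=I_N$ is the same algebraic mechanism --- indeed the paper's block matrix is precisely $U'Q_4(x)W'$ with $U'$ a vertical stack of $k$ identity matrices and $W'=(\alpha_1 I,\ldots,\alpha_k I)$, so that $W'U'=I$ --- but you attach the weights $g_j,\,1-g_j$ to the \emph{target state} rather than to the \emph{copy}, which is why two copies suffice instead of $k$. Your route buys a substantially smaller automaton, about $2(2n+10)$ states instead of $(2n+10)^2$ (which would sharpen the paper's corollary bound $(2n+10)^2+1$), and a self-contained normalization $c>\max\{\max_i (f_4)_i,\lambda'\}$ in place of the paper's reliance on the inequality $\lambda'\leq\alpha$ inherited from Prop.~\ref{p-4}; the paper's route merely avoids introducing the extra parameter $c$. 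One shared caveat, not a flaw of yours: in both constructions the homomorphism property behind the extension postulate is verified only on non-empty factorizations (well defined in your case because every product collapses to $U\,Q_4(u)\,W$, with $Q_4(u)$ well defined by the postulate for $\cA_4$), exactly the level of rigor of the paper's own proof.
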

\begin{proof}
The automaton $\cA_4$ has $k=2n+10$ states.
The new automaton $\cA_5$ has $k$-fold many states, 
$$S_5 = \{s_0,\ldots,s_{k^2-1}\}.$$
The final state vector $f_5$ is the binary vector defined by the final state set 
$$F_5 = \{s_0,s_{k+1}, s_{2k+2},\ldots,s_{k^2-1}\} = \{s_{ik+i}\mid 0\leq i\leq k-1\}$$
and the initial state vector is
$$\pi_5 = \frac{1}{k}(\pi_4,\pi_4,\ldots,\pi_4).$$
It is clear that $\pi_5$ is a state distribution of $S_5$.

Put $\alpha = \sum_{i=1}^k f_{4i}>0$ and 
$\alpha_i = \frac{f_{4i}}{\alpha}>0$ for each $1\leq i\leq k$.
Define the $k\times k$ block matrices
$$P(x) = \left(\begin{array}{cccc} 
\alpha_1 Q_4(x) & \alpha_2 Q_4(x) & \ldots & \alpha_k Q_4(x) \\ 
\alpha_1 Q_4(x) & \alpha_2 Q_4(x) & \ldots & \alpha_k Q_4(x) \\ 
\vdots        & \vdots        &        & \vdots \\
\alpha_1 Q_4(x) & \alpha_2 Q_4(x) & \ldots & \alpha_k Q_4(x) 
\end{array}\right),\quad x\in G_M.$$
Since $\sum_{i=1}^k\alpha_i=1$, for each non-empty word $u=x_1\ldots x_k\in M$,
$$P(u) = P(x_1)\cdots P(x_k) = \left(\begin{array}{cccc} 
\alpha_1 Q_4(u) & \alpha_2 Q_4(u) & \ldots & \alpha_k Q_4(u) \\ 
\alpha_1 Q_4(u) & \alpha_2 Q_4(u) & \ldots & \alpha_k Q_4(u) \\ 
\vdots        & \vdots        &        & \vdots \\
\alpha_1 Q_4(u) & \alpha_2 Q_4(u) & \ldots & \alpha_k Q_4(u) 
\end{array}\right).$$
Then 
\begin{eqnarray*}
\pi_5 P(u) 
&=&  \frac{1}{k}\left(k\alpha_1\pi_4 Q_4(u),\ldots,k\alpha_k\pi_4 Q_4(u)\right)\\
&=&  \left(\alpha_1\pi_4 Q_4(u),\ldots,\alpha_k\pi_4 Q_4(u)\right).
\end{eqnarray*}
Thus by the choice of the final vector,
\begin{eqnarray*}
\pi_5 P(u) f_5
&=&  \pi_4 Q_4(u) \left(\begin{array}{c} \alpha_1\\\vdots\\\alpha_k \end{array}\right)
= \frac{1}{\alpha}\pi_4 Q_4(u) f_4.
\end{eqnarray*}
Put $\lambda'' = \lambda'/\alpha$. 
Then $\pi_5 P(u) f_5>\lambda''$ iff $\pi_4 Q_4(u) f_4>\lambda'$.
Therefore by Prop.~\ref{p-4}, 
$L'=L(\cA_4,\lambda') \setminus\{\epsilon\} =L(\cA_5,\lambda'') \setminus\{\epsilon\}$.
Finally, by the choice of $\lambda'$ in the proof of Prop.~\ref{p-4}, 
we have $0< \lambda'\leq \alpha$ and so $0\leq \lambda''\leq 1$.
\end{proof}

\begin{proposition}\label{p-0e}
Let $L$ be a monoidal stochastic automaton language and let 
$$\cA=(S,M,\{P(x)\mid x\in G_M\},\pi,f)$$ be a monoidal stochastic automaton 
such that $L(\cA,\lambda) = L\setminus\{\epsilon\}$ for some cut point $0\leq\lambda\leq 1$.
Then there exists a monoidal stochastic automaton $\cA'$ 
such that for some $0\leq \lambda'\leq 1$,
$$L=L(\cA',\lambda').$$ 
\end{proposition}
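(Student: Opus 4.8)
The plan is to reduce everything to the status of the empty word. The hypothesis $L(\cA,\lambda)=L\setminus\{\epsilon\}$ already tells me that $\epsilon\notin L(\cA,\lambda)$, i.e.\ $\pi f=\pi P(\epsilon)f\le\lambda$, so $\cA$ rejects $\epsilon$ and accepts exactly the non-empty words of $L$. Thus all non-empty words are handled correctly, and the only thing left to decide is whether $\epsilon$ must be accepted. If $\epsilon\notin L$, then $L=L\setminus\{\epsilon\}=L(\cA,\lambda)$ and I simply take $\cA'=\cA$ and $\lambda'=\lambda$. The real content is the case $\epsilon\in L$, in which $\epsilon$ has to be turned into an accepted word without disturbing a single non-empty word.

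So assume $\epsilon\in L$. Before constructing $\cA'$ I reduce to the case $\lambda<1$. Since the $P(x)$ are stochastic, $\pi$ is a probability distribution and $f$ is binary, every value $\pi P(u)f$ lies in $[0,1]$; hence $\lambda=1$ forces $L(\cA,1)=\emptyset$ and therefore $L\setminus\{\epsilon\}=\emptyset$. In that degenerate situation I replace $\cA$ by the one-state monoidal stochastic automaton with $P(x)=(1)$ for all $x\in G_M$, $\pi=(1)$, $f=(0)$, and cut point $\frac{1}{2}$: it satisfies the extension postulate trivially and also accepts $\emptyset=L\setminus\{\epsilon\}$ among the non-empty words, while its cut point is now strictly below $1$. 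After this harmless replacement I may assume $\lambda<1$.

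The key idea is now to adjoin a fresh initial state $s_\star$ that carries the empty-word value alone, thereby decoupling $\pi f$ from the non-empty words. Writing the new state first, I set
$$P'(x)=\left(\begin{array}{cc} 0 & \pi P(x) \\ 0 & P(x) \end{array}\right),\quad x\in G_M,\qquad \pi'=(1,0,\ldots,0),\qquad f'=\left(\begin{array}{c} 1\\ f\end{array}\right),$$
and keep $\lambda'=\lambda$. Each $P'(x)$ is stochastic, since its first row $(0,\pi P(x))$ sums to $\pi P(x)\mathbf{1}=\pi\mathbf{1}=1$ and its remaining rows are rows of the stochastic matrix $P(x)$; moreover $\pi'$ is a state distribution and $f'$ is binary, so $\cA'$ is a monoidal stochastic automaton. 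Exactly as in Propositions~\ref{p-cr0}--\ref{p-5}, this block form is multiplicatively closed, so $P'$ extends to a monoid homomorphism (the extension postulate holds) and $P'(u)=\left(\begin{array}{cc} 0 & \pi P(u)\\ 0 & P(u)\end{array}\right)$ for every non-empty $u$. Consequently $\pi'P'(u)f'=\pi P(u)f$ for non-empty $u$, leaving their membership unchanged, whereas $\pi'P'(\epsilon)f'=\pi'f'=1>\lambda$ makes $\epsilon$ accepted. Hence $L(\cA',\lambda)=(L\setminus\{\epsilon\})\cup\{\epsilon\}=L$.

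The step I expect to be the genuine obstacle is the boundary value $\lambda=1$: because all acceptance values are confined to $[0,1]$, there is no room to lift an accepted empty word strictly above a cut point equal to $1$, and an $\epsilon$ of value $1$ could never be separated by a cut point from a non-empty word of value $1$. This is precisely what forces the preliminary reduction to $\lambda<1$. Once that reduction is in place, the added-state construction carries the argument, and the remaining verifications---stochasticity of the $P'(x)$, that $\pi'$ is a distribution and $f'$ is binary, and the extension postulate via the closed block form---are routine.
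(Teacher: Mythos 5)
Your proposal is correct and takes essentially the same route as the paper: the same case split ($\epsilon\notin L$; $\lambda=1$; $\epsilon\in L$ with $\lambda<1$) and the same key device of adjoining a fresh initial state (yours written first, the paper's last) whose transition row is $\pi P(x)$, so that every non-empty word keeps its acceptance value while $\pi'f'=1>\lambda$ accepts $\epsilon$. Your treatment of $\lambda=1$ by reduction to a trivial one-state automaton with cut point $\tfrac{1}{2}$, followed by the main construction, in fact reproduces exactly the two-state automaton that the paper builds directly for $L=\{\epsilon\}$.
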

\begin{proof}
First, suppose $\epsilon\not\in L$.  
Then put $\cA'=\cA$.

Second, suppose $\epsilon\in L$.  
First, assume that $\lambda=1$.
Then $L(\cA,\lambda)=\emptyset$ and so $L=\{\epsilon\}$.
Define the monoidal stochastic automaton 
$$\cA'=(\{s_1,s_2\},M,\{P'(x)\mid x\in G_M\},(1,0),(1,0)^T),$$
where
$$ P'(x) = \left(\begin{array}{cc} 0&1 \\ 0& 1 \end{array}\right), \quad x\in G_M.$$
Then for each non-empty word $u=x_1\ldots x_k\in M$,
$$ P'(u) = \left(\begin{array}{cc} 0&1 \\ 0& 1 \end{array}\right).$$
Therefore, $(1,0)P'(u)(1,0)^T = 0$.
Moreover, $(1,0)(1,0)^T=1$.
Thus for any cut point $0\leq \lambda'<1$, we obtain
$L(\cA',\lambda') = \{\epsilon\}$.

Finally, suppose $\epsilon\in L$ and $\lambda <1$. 
Let $S=\{s_1,\ldots,s_n\}$.
Define $S' = S\cup\{s_{n+1}\}$ and
$$P'(x) 
= \left(\begin{array}{c|c} & 0 \\ P(x) & \vdots \\  & 0\\\hline \pi P(x) & 0 \end{array}\right),
\quad x\in G_M.$$
Then for each non-empty word $u=x_1\ldots x_k\in M$,
$$P'(u) 
=P'(x_1)\cdots P'(x_k)
= \left(\begin{array}{c|c} & 0 \\ P(u) & \vdots \\  & 0\\\hline \pi P(u) & 0 \end{array}\right),
\quad x\in G_M.$$
Consider the monoidal stochastic automaton
$$\cA'=(S',M,\{P'(x)\mid x\in G_M\},\pi',f'),$$ 
where $\pi' = (0,\ldots,0,1)$ and $f' = {f\choose 1}$.
Then for each non-empty word $u\in M$,
$$\pi'P'(u)f' = \pi P(u) f$$
and so 
$\pi'P'(u)f' >\lambda $ iff $\pi P(u) f>\lambda$.
Hence, $L(\cA',\lambda) \setminus\{\epsilon\} = L(\cA,\lambda) \setminus\{\epsilon\}$.
Moreover, in view of the empty word,
$\pi'f' = 1>\lambda$ and so $\epsilon \in L(\cA',\lambda)$.
\end{proof}

In view of the previous results, Turakainen's theorem holds in the more general monoidal setting.
\begin{theorem}[Generalized Turakainen]\label{t-tura-gen}
Every monoidal generalized automaton language is a monoidal stochastic automaton language.
\end{theorem}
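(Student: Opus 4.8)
The plan is to assemble the theorem from the reduction chain of Propositions~\ref{p-cr0}--\ref{p-5} and then dispose of the empty word with Prop.~\ref{p-0e}. Let $L$ be an arbitrary monoidal generalized automaton language, say $L=L(\cA,\lambda)$ for a monoidal generalized automaton $\cA=(S,M,\{Q(x)\mid x\in G_M\},\pi,f)$ and a real number $\lambda$. Following the convention fixed before the theorem, I would argue throughout on the non-empty words, i.e.\ on $L'=L\setminus\{\epsilon\}$, where acceptance depends only on products $Q(x_1)\cdots Q(x_k)$ of length $k\geq 1$, and treat $\epsilon$ last. Note first that the chain imposes no sign condition on $\lambda$, so the argument applies to every real cut point.

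First I would run the four successive constructions in order. Prop.~\ref{p-cr0} replaces $\cA$ by $\cA_1$ with vanishing row and column sums; Prop.~\ref{p-zz} shifts every generator by a constant matrix $B_r$ to reach non-negative matrices, where the vanishing-sum property is exactly what makes the cross terms vanish and keeps the map multiplicative; Prop.~\ref{p-0} normalises to stochastic matrices and slides the cut point to $0$; and Prop.~\ref{p-4} produces $\cA_4$ that is simultaneously stochastic, has a state-distribution initial vector, and a strictly positive final vector, at the price of a positive cut point $\lambda'$. The point to keep in view at each step is that the new generator map $Q_i:G_M\to\RR^{n\times n}$ is built so that $Q_i(u)$ retains its block/border shape under products; this is what transports the extension postulate from $\cA$ to $\cA_i$ and is the monoidal substitute for the universal property that would be automatic over a free monoid.

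Next I would apply Prop.~\ref{p-5} to convert $\cA_4$ into a genuine monoidal stochastic automaton $\cA_5$ with a binary final vector and a cut point $\lambda''\in[0,1]$ such that $L'=L(\cA_5,\lambda'')\setminus\{\epsilon\}$. I would additionally observe, directly from that construction, that the empty word is in fact excluded: one has $\pi_5 f_5=1/k=\lambda''$, so the strict inequality $\pi_5 P(\epsilon) f_5>\lambda''$ fails and $\epsilon\notin L(\cA_5,\lambda'')$. Hence $L(\cA_5,\lambda'')=L\setminus\{\epsilon\}$ exactly, which is precisely the hypothesis required by Prop.~\ref{p-0e}.

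Finally, Prop.~\ref{p-0e} applied to $\cA_5$ and $\lambda''$ yields a monoidal stochastic automaton $\cA'$ and a cut point $\mu\in[0,1]$ with $L=L(\cA',\mu)$, restoring $\epsilon$ exactly when $\epsilon\in L$. This exhibits $L$ as a monoidal stochastic automaton language and finishes the proof. I expect the one genuinely delicate point to be this empty-word bookkeeping: Prop.~\ref{p-0e} requires the stochastic automaton handed to it to accept exactly $L\setminus\{\epsilon\}$, so one must verify that $\epsilon$ lies on the correct side of the cut point before the final surgery, rather than on the accepting side. Everything else reduces to routine checks that the successive matrix constructions preserve multiplicativity and the strict cut-point inequality on non-empty words.
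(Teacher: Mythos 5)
Your proposal is correct and takes essentially the same approach as the paper: the paper's proof of Theorem~\ref{t-tura-gen} is exactly the assembly of Propositions~\ref{p-cr0}--\ref{p-5} followed by Prop.~\ref{p-0e}, left implicit in the remark ``in view of the previous results.'' Your explicit empty-word check is also correct and supplies the one detail the paper glosses over: since $\alpha = tk$ one gets $\lambda'' = \lambda'/\alpha = 1/k = \pi_5 f_5$, so $\epsilon\notin L(\cA_5,\lambda'')$ and Prop.~\ref{p-0e} indeed applies with $L(\cA_5,\lambda'') = L\setminus\{\epsilon\}$.
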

The construction in the proof reveals the following.
\begin{corollary}
Let $L$ be a monoidal generalized automaton language accepted by a monoidal generalized automaton with $n$ states.
Then there exists a monoidal stochastic automaton with at most $(2n+10)^2 + 1$ states that accepts $L$.  
\end{corollary}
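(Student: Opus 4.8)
The plan is to obtain the theorem by simply chaining together the five normalization steps already established, and then to read the claimed state bound directly off that chain; the corollary is in essence a bookkeeping statement recording how many states each step contributes. So I would first assemble the theorem and then count.

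For the theorem, let $\cA$ be a monoidal generalized automaton with $n$ states accepting $L=L(\cA,\lambda)$, and set $L'=L\setminus\{\epsilon\}$. Applying Prop.~\ref{p-cr0}, Prop.~\ref{p-zz}, Prop.~\ref{p-0}, Prop.~\ref{p-4}, and Prop.~\ref{p-5} in succession yields automata $\cA_1,\dots,\cA_5$ with $L'=L(\cA_i,\cdot)\setminus\{\epsilon\}$ preserved at each stage; here $\cA_5$ is a genuine monoidal stochastic automaton with a cut point $\lambda''$ satisfying $0\le\lambda''\le 1$. A final application of Prop.~\ref{p-0e} promotes the statement about $L'$ to one about $L$ by reinstating the empty word, producing a monoidal stochastic automaton $\cA'$ with $L=L(\cA',\lambda''')$ for some $0\le\lambda'''\le 1$. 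This establishes the theorem and fixes the exact sequence of constructions whose state counts the corollary refers to.

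To prove the corollary I would then trace the size of the state set through precisely this chain. Beginning with $|S|=n$: Prop.~\ref{p-cr0} adjoins the two states $s_0,s_{n+1}$, giving $n+2$; Prop.~\ref{p-zz} adjoins one state $s_{n+2}$, giving $n+3$; Prop.~\ref{p-0} adjoins $s_{n+3},s_{n+4}$, giving $n+5$; Prop.~\ref{p-4} takes a block-diagonal doubling, giving $2(n+5)=2n+10$; and Prop.~\ref{p-5} replaces the $k=2n+10$ states of $\cA_4$ by the $k^2=(2n+10)^2$ states of $\cA_5$. Hence $\cA_5$ has exactly $(2n+10)^2$ states.

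The only genuinely delicate point — the nearest thing to an obstacle — is the final empty-word adjustment of Prop.~\ref{p-0e}, which branches into cases and so must be handled by taking the worst case. If $\epsilon\notin L$ the construction returns $\cA_5$ unchanged and adds nothing; if $\epsilon\in L$ with $\lambda''=1$ it discards $\cA_5$ in favour of a fixed two-state automaton; and if $\epsilon\in L$ with $\lambda''<1$ it appends a single extra state $s_{n+1}$ to $\cA_5$. In every alternative the empty-word handling contributes at most one additional state, so the resulting monoidal stochastic automaton accepting $L$ has at most $(2n+10)^2+1$ states, which is exactly the asserted bound.
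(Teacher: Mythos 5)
Your proposal is correct and follows exactly the paper's (implicit) argument: the corollary is read off the chain of constructions in Props.~\ref{p-cr0}--\ref{p-5} together with the empty-word adjustment of Prop.~\ref{p-0e}, and your state counts ($n+2$, $n+3$, $n+5$, $2n+10$, $(2n+10)^2$, plus at most one state in the worst case of Prop.~\ref{p-0e}) match the paper's constructions step by step.
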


\begin{theorem}[Matrix Characterization]\label{p-mga-L}
Let\/ $M=(M,\circ,e)$ be a finitely generated mono\-id.
A subset $L$ of\/ $M$ is a monoidal stochastic automaton language 
iff there exists a finite collection of $n\times n$ matrices
$\{Q(x)\mid x\in G_M\}$ for some integer $n\geq 1$, where $G_M$ is a generating set of\/ $M$, 
such that the mapping $Q:G_M\rightarrow\RR^{n\times n}$ extends to a unique monoid homomorphism
$Q:M\rightarrow\RR^{n\times n}$ as in~(\ref{e-post0})
and then for each non-empty word $u\in M$, 
$$u\in L\quad\Longleftrightarrow\quad (Q(u))_{1,n}> 0.$$
\end{theorem}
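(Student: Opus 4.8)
The plan is to prove both implications of the characterization, following the classical matrix characterization (Thm.~\ref{p-claus}) but dealing with two issues that are absent over a free monoid: the extension postulate, and the role of the empty word.

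For the implication from the matrix condition to the language, suppose the matrices $\{Q(x)\mid x\in G_M\}$ are given and extend to a homomorphism $Q:M\to\RR^{n\times n}$ as in~(\ref{e-post0}). Choosing $\pi=(1,0,\ldots,0)$ and $f=(0,\ldots,0,1)^T$ gives $\pi Q(u)f=(Q(u))_{1,n}$ for every $u\in M$, so $\cA=(S,M,\{Q(x)\mid x\in G_M\},\pi,f)$ is a monoidal generalized automaton with $L(\cA,0)\setminus\{\epsilon\}=L\setminus\{\epsilon\}$. Applying the chain of constructions in Prop.~\ref{p-cr0}--\ref{p-5} turns $\cA$ into a monoidal stochastic automaton $\cA_5$ with $L(\cA_5,\lambda'')\setminus\{\epsilon\}=L\setminus\{\epsilon\}$ for some $0\le\lambda''\le1$, and Prop.~\ref{p-0e} then adjoins or deletes the empty word so as to accept exactly $L$; this is precisely the passage used in the proof of Thm.~\ref{t-tura-gen}. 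So this direction is essentially a reformulation combined with the generalized Turakainen theorem.

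For the converse, let $L=L(\cA,\lambda)$ be accepted by a monoidal stochastic automaton $\cA=(S,M,\{P(x)\mid x\in G_M\},\pi,f)$ with $n$ states. I would first shift the cut point to $0$: putting $Q_0(x)=\left(\begin{smallmatrix}P(x)&0\\0&1\end{smallmatrix}\right)$, $\pi_0=(\pi,\lambda)$ and $f_0=(f^T,-1)^T$ gives $\pi_0Q_0(u)f_0=\pi P(u)f-\lambda$ for every $u$, while $Q_0$ still obeys the extension postulate, being the direct sum of $P$ with the trivial homomorphism $x\mapsto1$. I would then embed this scalar into a single matrix entry by adjoining a source and a sink state and defining the $N\times N$ matrices, $N=n+3$,
$$Q'(x)=\begin{pmatrix}0&\pi_0Q_0(x)&\pi_0Q_0(x)f_0\\ 0&Q_0(x)&Q_0(x)f_0\\ 0&0&0\end{pmatrix},\qquad x\in G_M.$$
A block multiplication shows that $Q'(v)Q'(x)$ again has this shape, with $Q_0(v)Q_0(x)=Q_0(vx)$ in the central block and $\pi_0Q_0(v)Q_0(x)f_0$ in the top-right corner; hence $(Q'(u))_{1,N}=\pi_0Q_0(u)f_0=\pi P(u)f-\lambda$, so $(Q'(u))_{1,N}>0\iff u\in L$ for every non-empty $u$.

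The main point, and the only place where the monoidal setting genuinely differs from the free-monoid case, is to confirm that $Q'$ satisfies the extension postulate. This reduces to the identity $Q'(v)Q'(x)=\widetilde Q(vx)$, where $\widetilde Q(w)$ denotes the template matrix above built from $Q_0(w)$: the vanishing first column and last row kill all cross terms, so the central block multiplies as $Q_0(v)Q_0(x)$ and the corner accumulates $\pi_0Q_0(v)Q_0(x)f_0$. Because $Q_0(u)$, and therefore $\widetilde Q(u)$, depends only on the monoid element $u=x_1\cdots x_k$ and not on its factorization, $Q'$ extends uniquely to a homomorphism as required; the empty word is irrelevant here, since the characterization constrains only non-empty words. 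I expect this routine-but-essential verification of the extension postulate to be the main obstacle, with the empty-word bookkeeping in the first implication (dispatched exactly as in Prop.~\ref{p-0e}) a close second.
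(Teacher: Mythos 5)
Your proof is correct and matches the paper's structure almost exactly. The forward direction coincides with the paper's: the same automaton with $\pi=(1,0,\ldots,0)$, $f=(0,\ldots,0,1)^T$ and cut point $0$, followed by the Turakainen chain and Prop.~\ref{p-0e} to handle the empty word. The converse also rests on the paper's key idea --- the bordered matrices with zero first column and zero last row that accumulate $\pi_0 Q_0(u)f_0$ in the $(1,N)$ corner --- and your block-multiplication check is exactly the computation the paper leaves implicit. The only genuine divergence is how the cut point is normalized to $0$ before bordering: the paper re-enters the Turakainen chain at Prop.~\ref{p-0}, taking an automaton $\cA_3$ with stochastic matrices and cut point $0$, whereas you use the elementary direct sum $Q_0(x)=\mathrm{diag}(P(x),1)$ with $\pi_0=(\pi,\lambda)$, $f_0=(f^T,-1)^T$. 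Your variant is lighter and self-contained (it needs no stochasticity, only that a direct sum of monoid homomorphisms is again one), while the paper's choice simply reuses machinery it has already built; both normalizations feed into the identical corner construction. One caveat, which affects your proof and the paper's equally: asserting the extension postulate for the bordered matrices tacitly assumes that the identity $e$ of $M$ is not expressible as a non-empty product of generators --- otherwise every product $Q'(x_1)\cdots Q'(x_k)$ has zero first column and so cannot equal $I_N=Q'(e)$. Since the paper never verifies the postulate in this proof at all, this is a defect of the framework rather than of your argument relative to it; you are, if anything, more explicit on this point than the source.
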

\begin{proof}
Suppose there exists such a collection of matrices.
Take the state set $S=\{s_1,\ldots,s_n\}$, vectors $\pi=(1,0,\ldots,0)$ 
and $f=(0,\ldots,0,1)^T$ and consider the
monoidal generalized automaton $$\cA = (S,M,\{Q(x)\mid x\in G_M\},\pi,f).$$ 
Then for each non-empty word $u =x_1\ldots x_k\in M$, by hypothesis,
$Q(u)=Q(x_1)\cdots Q(x_k)$ and furthermore
$$\pi Q(u)f= (Q(u))_{1,n}.$$
It follows that $L(\cA,0) = L\setminus\{\epsilon\}$.
By Prop.~\ref{p-0e}, $L$ is also a monoidal stochastic automaton language.

Conversely, let $L$ be a monoidal stochastic automaton language.
Then there exists a monoidal generalized automaton 
$$\cA_3=(S_3,M,\{Q_3(x)\mid x\in G_M\},\pi_3,f_3)$$ as given in Prop.~\ref{p-0} such that 
$L\setminus\{\epsilon\} = L(\cA_3,0)\setminus\{\epsilon\}$.

Let $S_3=\{s_1,\ldots,s_m\}$.
Define the monoidal generalized automaton
$$\cA'=(S',M,\{Q'(x)\mid x\in G_M\},\pi',f'),$$ 
where
$S' = S_3\cup \{s_0,s_{m+1}\}$, $\pi'=(1,0,\ldots,0)$, $f'=(0,\ldots,0,1)^T$, and
$$Q'(x) = \left( \begin{array}{c|c|c}
0 & \pi_3 Q_3(x) & \pi_3 Q_3(x)f_3\\\hline
0 &  & \\ 
\vdots & Q_3(x) & Q_3(x)f_3\\
0 &  & \\ \hline
0 & 0 & 0 
\end{array} \right),\quad x\in G_M.$$
Then for each non-empty word $u=x_1\ldots x_k\in M$, 
$$Q'(u) = Q'(x_1)\cdots Q'(x_k)
= \left( \begin{array}{c|c|c}
0 & \pi_3 Q_3(u) & \pi Q_3(u)f_3\\\hline
0 &  & \\ 
\vdots & Q_3(u) & Q_3(u)f_3\\
0 &  & \\ \hline
0 & 0 & 0 
\end{array} \right).$$
Thus 
$$\pi'Q'(u)f' = \pi_3 Q_3(u) f_3,$$ where $\pi Q'(u)f$ is the $(1,n)$-entry of the matrix $Q'(u)$.
It follows that $u\in L$ iff $(Q'(u))_{1,n}>0$ and the matrices $\{Q'(x)\mid x\in G_M\}$ have the required form.
\end{proof}

\begin{example}\label{e-xy}
In view of Ex.~\ref{e-xy0}, consider the commutative monoid $M$ 
given by the words of the form $x^iy^j$, where $i,j\geq 0$.
The matrix $Q(x^iy^j)$ has the $(1,2)$-entry $i-j$.
Thus by Prop.~\ref{p-mga-L}, the corresponding monoidal generalized language $L$ 
has the non-empty words $x^iy^j$, where $i>j\geq 0$.
This language is context-free, but not regular~\cite{salomaa}.
\EXX
\end{example}

\section{Homomorphisms and Closure Properties}

Homomorphisms between monoidal generalized automata will be introduced and closure properties of 
monoidal generalized automata will be studied.
First, note that each monoidal generalized automaton $\cA=(S,M,\{Q(x)\mid x\in G_M\},\pi,f)$
can be associated with the finitely generated multiplicative matrix monoid 
$$H(\cA) =\langle Q(x)\mid x\in G_M\rangle,$$ 
which is a submonoid of $(\RR^{n\times n},\cdot,I_n)$,
where $n$ is the number of states of $\cA$.
By the extension postulate, 
\begin{eqnarray}
H(\cA) = \{Q(u)\mid u\in M\}.
\end{eqnarray}

\begin{example}
In view of Ex.~\ref{e-xy0}, the multiplicative matrix monoid $H$ is generated by the matrices
$$\left(\begin{array}{rr} 1 & 1 \\ 0 & 1 \end{array}\right) \quad\mbox{and}\quad
\left(\begin{array}{rr} 1 & -1 \\ 0 & 1 \end{array}\right).$$
Thus the matrix monoid (group) is
$$H = \left\{ \left(\begin{array}{rr} 1 & k \\ 0 & 1 \end{array}\right) \mid k\in \ZZ \right\},$$
which is isomorphic to $(\ZZ,+,0)$ by the isomorphism $Q\mapsto (Q)_{1,2}$.
\EXX
\end{example}

Let $\cA=(S,M,\{Q(x)\mid x\in G_M\},\pi,f)$ and $\cA'=(S,M',\{Q'(x')\mid x'\in G_{M'}\},\pi,f)$ 
be monoidal generalized automata.
Then $\cA'$ is a {\em homomorphic image} of $\cA$ if
there exist monoid homomorphisms
\begin{eqnarray}\label{e-uq0}
\phi:M\rightarrow M':x\mapsto x' \quad \mbox{and}\quad \psi:H(\cA)\rightarrow H(\cA'):Q\mapsto Q' 
\end{eqnarray}
such that the following {\em commuting property\/} holds for all $u\in M$,
\begin{eqnarray}\label{e-uq}
Q'(u') = Q(u)',
\end{eqnarray}
i.e., the following diagram commutes
\begin{center}
\mbox{$
\xymatrix{
M\ar@{->}[rr]^\phi\ar@{->}[d]_Q && M'\ar@{->}[d]^{Q'} \\
H\ar@{->}[rr]^\psi && H'\\
}
$}
\end{center}

\begin{proposition}\label{p-comm}
Let $\phi:M\rightarrow M'$ be an epimorphism.
Then the extension postulate for the homomorphic image automaton $\cA'$ is a consequence 
of the extension postulate for the automaton $\cA$ and the commuting property. 
\end{proposition}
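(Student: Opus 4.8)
The plan is to build the extension $Q':M'\rightarrow\RR^{n\times n}$ directly out of the data $\phi$, $\psi$ and the homomorphism $Q:M\rightarrow\RR^{n\times n}$ that the extension postulate for $\cA$ already provides, and then to read off both well-definedness and the homomorphism property from the commuting square~(\ref{e-uq}). First I would record the two structural facts that make the construction go through. Since $G_M$ generates $M$ and $\phi$ is surjective, the image $\phi(G_M)$ generates $M'$; hence we may take $G_{M'}=\phi(G_M)$ as generating set, and every $v\in M'$ can be written as $v=\phi(u)$ for some $u\in M$. Moreover, the composite $\psi\circ Q:M\rightarrow\RR^{n\times n}$ is a monoid homomorphism, being the composite of the monoid homomorphism $Q$ (which exists by the extension postulate for $\cA$) with the monoid homomorphism $\psi$.

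Next I would define $Q'$ on $M'$ by transport along $\phi$: for $v\in M'$ pick any $u\in\phi^{-1}(v)$ and set $Q'(v):=\psi(Q(u))$, which is precisely the content of the commuting property. The key step, and the main obstacle, is well-definedness: if $\phi(u_1)=\phi(u_2)$ then we must have $\psi(Q(u_1))=\psi(Q(u_2))$. This is exactly where the commuting property earns its keep. In the language of factoring a homomorphism through a quotient, what is needed is that the congruence kernel of the surjection $\phi$ be contained in that of $\psi\circ Q$, so that $\psi\circ Q$ descends through $\phi$; the commuting property supplies this, since it forces $\psi\circ Q$ to be constant on the fibres of $\phi$. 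Concretely, for any relation $w_1=w_2$ holding in $M'$ between words over $G_{M'}$, I would lift both sides to words over $G_M$, evaluate $Q'$ letter by letter as $\psi(Q(\cdot))$ of the lifted generators, collapse each product using that $\psi$ and $Q$ are homomorphisms, and observe that the two resulting matrices agree because the lifts have equal $\phi$-image. Surjectivity of $\phi$ is indispensable here: without it some elements of $M'$ would never be reached from $M$, and neither the value of $Q'$ nor its consistency across the relations of $M'$ could be inferred from $\cA$.

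Finally I would check that the resulting $Q'$ is a monoid homomorphism obeying~(\ref{e-post0}). Writing $v_1=\phi(u_1)$ and $v_2=\phi(u_2)$, one has $v_1v_2=\phi(u_1u_2)$, whence $Q'(v_1v_2)=\psi(Q(u_1u_2))=\psi(Q(u_1))\psi(Q(u_2))=Q'(v_1)Q'(v_2)$, while $Q'(e')=\psi(Q(e))=\psi(I_n)=I_n$. In particular, for $v=x'_1\cdots x'_k$ with $x'_i\in G_{M'}$ this yields $Q'(v)=Q'(x'_1)\cdots Q'(x'_k)$, which is the extension postulate for $\cA'$. Uniqueness is immediate: any monoid homomorphism agreeing with $Q'$ on the generating set $G_{M'}$ must coincide with it on all of $M'$. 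This establishes the extension postulate for $\cA'$ as a consequence of the extension postulate for $\cA$ together with the commuting property.
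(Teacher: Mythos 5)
Your proof is correct, and it rests on exactly the same three ingredients as the paper's: surjectivity of $\phi$ (so that $G_{M'}=\phi(G_M)$ generates $M'$ and every element of $M'$ is some $\phi(u)$), the commuting property~(\ref{e-uq}), and the fact that $\psi\circ Q$ is a monoid homomorphism because $Q$ is one (the extension postulate for $\cA$) and $\psi$ is one. The packaging, however, is genuinely different. The paper's proof is a single chain of equalities over words,
$Q'(x'_1\ldots x'_k)=Q(x_1\ldots x_k)'=(Q(x_1)\cdots Q(x_k))'=Q(x_1)'\cdots Q(x_k)'=Q'(x'_1)\cdots Q'(x'_k)$,
which verifies~(\ref{e-post0}) directly but says nothing explicit about well-definedness or uniqueness. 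You instead organize the argument as descent of $\psi\circ Q$ through the surjection $\phi$: the commuting property forces $\psi\circ Q$ to be constant on the fibres of $\phi$, so $Q'$ is the induced map on the quotient and inherits multiplicativity and the identity from $\psi\circ Q$. This buys the two points the paper leaves implicit, namely that $Q'$ is a well-defined function on all of $M'$ and that the homomorphic extension of the generator matrices is \emph{unique}, so your write-up is in fact the more complete one. One caveat: your ``concrete'' paragraph about lifting relations $w_1=w_2$ of $M'$ adds nothing beyond the fibre argument and would be circular taken on its own, since ``the two resulting matrices agree because the lifts have equal $\phi$-image'' is precisely the fibre-constancy being verified. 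What makes the argument non-circular is that the commuting property is assumed for \emph{all} $u\in M$, with $Q'$ a genuine function on $M'$, not merely for generators; the generator-level identity alone is insufficient, as Example~\ref{e-xy00} shows (a consistent assignment on generators over a commutative monoid that admits no homomorphic extension at all).
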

\begin{proof}
By hypothesis, the collection of matrices $\{Q'(x')\mid x'\in G_{M'}\}$ of $\cA'$ is fully given by 
the collection of matrices $\{Q(x)\mid x\in G_{M}\}$ of $\cA$, where $Q'(x') = Q(x)'$ for each $x\in G_M$.
Moreover, for each word $u=x_1\ldots x_k\in M$, $Q'(u') = Q(u)'$ and therefore
\begin{eqnarray*}
Q'(u') &=& Q'((x_1\ldots x_k)') = Q'(x'_1\ldots x'_k)\\
&=& Q(x_1\ldots x_k)' = (Q(x_1)\cdots Q(x_k))'\\
&=& Q(x_1)'\cdots Q(x_k)' = Q'(x'_1)\cdots Q'(x'_k).
\end{eqnarray*}
\end{proof}

\begin{example}
Consider the commutative monoid $M$ given by the presentation $\langle x,y\mid xy=yx\rangle$ 
and define the matrices
$$Q(x) = \left(\begin{array}{rr} 1 & 1 \\ 0 & 1 \end{array}\right)
\quad\mbox{and}\quad
Q(y) = \left(\begin{array}{rr} 1 & -1 \\ 0 & 1 \end{array}\right).$$
By Prop.~\ref{p-mga-L}, 
the corresponding monoidal generalized automaton language has the non-empty words $x^iy^j$, where $i>j\geq 0$.

Define the monoid $M'$ as the homomorphic image of the monoid homomorphism $\phi:M\rightarrow M'$, 
where
$$x'=\phi(x)=x^2\quad\mbox{and}\quad y'=\phi(y)=y^2,$$
and define the associated matrices as
$$Q'(x') = \left(\begin{array}{rr} 1 & 2 \\ 0 & 1 \end{array}\right)
\quad\mbox{and}\quad
Q'(y') = \left(\begin{array}{rr} 1 & -2 \\ 0 & 1 \end{array}\right).$$
The matrix homomorphism is given by squaring, 
$$Q(x)\mapsto Q(x)' = Q(x)^2 \quad\mbox{and}\quad Q(y)\mapsto Q(y)' = Q(y)^2.$$
It is easy to check that the commuting property holds: $Q'(u') = Q(u)'$ for all $u\in M$.
By Prop.~\ref{p-mga-L}, 
the monoidal generalized automaton language
has the non-empty words ${x'}^i{y'}^j = x^{2i}y^{2j}$, where $i>j\geq 0$.
\EXX
\end{example}

%

\begin{proposition}[Classical Generalized Languages]
Each monoidal generalized automaton is a 
homomorphic image of a classical generalized automaton.  
Each monoidal generalized automaton language can be obtained as a homomorphic image of a classical generalized automaton 
language.
\end{proposition}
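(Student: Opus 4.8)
The plan is to build the classical generalized automaton directly on the free monoid over the generating set $G_M$, to let the canonical epimorphism onto $M$ furnish the horizontal homomorphism $\phi$ of the commuting diagram, and to arrange that the two matrix monoids coincide so that the vertical homomorphism $\psi$ can be taken to be the identity.

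First I would take the given monoidal generalized automaton $\cA=(S,M,\{Q(x)\mid x\in G_M\},\pi,f)$ and set $\Sigma=G_M$, regarded as an abstract alphabet. Over the free monoid $\Sigma^*$ I would define the classical generalized automaton $\cA_0=(S,\Sigma^*,\{Q_0(x)\mid x\in\Sigma\},\pi,f)$ by assigning to each letter $x\in\Sigma$ the very same matrix $Q_0(x)=Q(x)$. Because $\Sigma^*$ is free, the extension postulate for $\cA_0$ holds automatically by the universal property, and the induced matrix homomorphism satisfies $Q_0(a_1\ldots a_k)=Q(a_1)\cdots Q(a_k)$. Since $G_M$ generates $M$, the letters of $\Sigma$ determine the canonical epimorphism $\phi:\Sigma^*\rightarrow M$ sending each letter to the corresponding element of $G_M$.

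The key observation is that $H(\cA_0)=\langle Q_0(x)\mid x\in\Sigma\rangle$ and $H(\cA)=\langle Q(x)\mid x\in G_M\rangle$ are literally the same submonoid of $(\RR^{n\times n},\cdot,I_n)$, so one may take $\psi:H(\cA_0)\rightarrow H(\cA)$ to be the identity. The commuting property then reduces to checking $Q(\phi(w))=Q_0(w)$ for every $w=a_1\ldots a_k\in\Sigma^*$: both sides equal $Q(a_1)\cdots Q(a_k)$, since $\phi(w)=a_1\circ\cdots\circ a_k$ and $Q$ satisfies the extension postulate on $M$. This establishes that $\cA$ is the homomorphic image of the classical generalized automaton $\cA_0$ under $(\phi,\psi)$, which proves the first claim.

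For the second claim I would show $\phi(L(\cA_0,\lambda))=L(\cA,\lambda)$ for the relevant cut point $\lambda$. Using the commuting property, $\pi Q_0(w)f=\pi Q(\phi(w))f$ for every $w\in\Sigma^*$, which gives the inclusion $\phi(L(\cA_0,\lambda))\subseteq L(\cA,\lambda)$ at once. For the reverse inclusion one invokes the surjectivity of $\phi$: any $u\in L(\cA,\lambda)$ equals $\phi(w)$ for some $w$, and then $\pi Q_0(w)f=\pi Q(u)f>\lambda$ places $w$ in $L(\cA_0,\lambda)$, whence $u\in\phi(L(\cA_0,\lambda))$. Since the value $\pi Q(u)f$ depends only on $u=\phi(w)$ and not on the chosen preimage $w$, the two languages agree. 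The main point to watch is precisely this reverse inclusion, where surjectivity of the canonical epimorphism is essential; everything else follows directly from the extension postulate and from the fact that $\cA_0$ and $\cA$ share the same matrices.
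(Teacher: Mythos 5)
Your proof is correct and follows essentially the same route as the paper: take the free monoid over (an abstract copy of) $G_M$, reuse the same matrices so the matrix monoids coincide and $\psi$ is the identity, get the extension postulate for free from the universal property, and verify the commuting property by comparing generator products. Your explicit treatment of the reverse inclusion $L(\cA,\lambda)\subseteq\phi(L(\cA_0,\lambda))$ via surjectivity of $\phi$ is a welcome detail that the paper leaves implicit.
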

\begin{proof}
Let $\cA'=(S,M,\{Q'(x')\mid x'\in G_M\},\pi,f)$ be a monoidal generalized automaton over the monoid $(M,\circ,e)$.
Define the alphabet
$$\Sigma = \{ x\mid x'\in G_M\}$$
and consider the mapping $\phi_0:\Sigma\rightarrow M$ defined by $\phi_0(x)=x'$.
Since $\Sigma^*$ is a free monoid, 
there exists a unique extension of $\phi_0$ to a monoid homomorphism $\phi:\Sigma^*\rightarrow M$.

Let $\cA=(S,\Sigma,\{Q(x)\mid x\in \Sigma\},\pi,f)$ be the classical generalized automaton,
where the matrices are defined as $Q(x) = Q'(x')$ for each $x\in \Sigma$. 
The extension postulate holds in $\cA$, since the underlying monoid is free. 

In view of the extension postulates, for each word $u=x_1\ldots x_k\in \Sigma^*$,
\begin{eqnarray*}
Q'(u') 
&=& Q'(x'_1\ldots x'_k) = Q'(x'_1) \cdots Q'(x'_k) \\
&=& Q(x_1)\cdots Q(x_k) = Q(x_1\ldots x_k)= Q(u).
\end{eqnarray*}
By the setting $Q(u)' = Q(u)$, the commuting property holds.
Therefore $\pi Q(u)f = \pi Q'(u') f$ and 
so $L(\cA',\lambda) =\phi(L(\cA,\lambda))$ for each cut point $\lambda$.
\end{proof}
The reader may check Ex.~\ref{e-4}.

\begin{proposition}[Set Operations]
The class of monoidal generalized languages is closed 
under union, intersection and complement with regular monoidal languages.
\end{proposition}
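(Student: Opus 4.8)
The plan is to reduce the three operations to two elementary constructions on automata — a tensor (Kronecker) product for the intersection and relative complement, and a block‑diagonal direct sum for the union — after first putting the operands into convenient normal forms. For $L$, I would invoke the Generalized Turakainen theorem (Thm.~\ref{t-tura-gen}) to assume $L$ is accepted by a monoidal \emph{stochastic} automaton $\cA=(S,M,\{Q(x)\},\pi,f)$ with a cut point satisfying $0\le\lambda\le 1$; in particular $\pi Q(u)f\in[0,1]$ and $\lambda\ge 0$. For $R$, which is regular and hence a monoidal automaton language (Prop.~\ref{p-m-reg}), I would apply the construction of Prop.~\ref{p-malg} and then determinize by the subset construction, producing an automaton $\cB=(T,M,\{P(x)\},\rho,g)$ whose matrices $P(x)$ are $0$–$1$ \emph{deterministic} (exactly one $1$ per row), with a unit initial vector $\rho$ and a binary final vector $g$, such that the behaviour is \emph{crisp}:
\[
\rho P(u)g=\mathbf 1_R(u)\in\{0,1\}\qquad(u\in M).
\]
Replacing $g$ by $g'=\mathbf 1-g$ flips accepting and rejecting states; since every row of $P(u)$ still sums to $1$, one gets $\rho P(u)g'=1-\mathbf 1_R(u)=\mathbf 1_{\overline R}(u)$, a crisp acceptor for the set‑theoretic complement $\overline R=M\setminus R$.

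The hard part will be the determinization step, specifically checking that the resulting map $P$ respects the monoid, i.e.\ satisfies the \emph{extension postulate}, so that $P(u)$ is well defined on $M$ and the ensuing product automata are legitimate monoidal generalized automata. Concretely, the subset reached after reading $u$ must depend only on $u\in M$ and not on the chosen factorization of $u$ into generators. This is exactly where the monoidal setting departs from the classical free‑monoid case, and it is the point the extension postulate is designed to handle: the generalized transition relation $\Delta^*$ of the monoidal automaton for $R$ is by definition a subset of $S\times M\times S$, so reachability is a property of the monoid element itself, and the determinized transition map therefore factors through $M$. Verifying this compatibility — rather than the linear algebra that follows — is the crux.

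For $L\cap R$ I would form the automaton with state set $S\times T$, matrices $Q(x)\otimes P(x)$, initial vector $\pi\otimes\rho$ and final vector $f\otimes g$. By the mixed‑product property $(A\otimes B)(C\otimes D)=(AC)\otimes(BD)$, the generator matrices multiply as $Q(u)\otimes P(u)$, so the extension postulate is inherited from $\cA$ and $\cB$, and
\[
(\pi\otimes\rho)\bigl(Q(u)\otimes P(u)\bigr)(f\otimes g)=\bigl(\pi Q(u)f\bigr)\bigl(\rho P(u)g\bigr)=\mathbf 1_R(u)\,\pi Q(u)f .
\]
With cut point $\lambda$ this value exceeds $\lambda$ iff $u\in R$ and $\pi Q(u)f>\lambda$ — when $u\notin R$ it is $0\le\lambda$ — i.e.\ iff $u\in L\cap R$; because $Q(\epsilon)=P(\epsilon)=I_n$ the computation also covers the empty word. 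Taking $g'$ in place of $g$ gives $L\cap\overline R=L\setminus R$ by the identical argument, which settles the relative complement.

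For $L\cup R$ I would instead take the block‑diagonal matrices $\operatorname{diag}\bigl(Q(x),P(x)\bigr)$, initial vector $(\pi,\rho)$ and final vector $(f,Cg)^T$ with a constant $C>1\ge\lambda$. Block multiplication again yields $\operatorname{diag}\bigl(Q(u),P(u)\bigr)$, and
\[
(\pi,\rho)\operatorname{diag}\bigl(Q(u),P(u)\bigr)(f,Cg)^T=\pi Q(u)f+C\,\mathbf 1_R(u),
\]
which is at least $C>\lambda$ when $u\in R$ and equals $\pi Q(u)f$ when $u\notin R$; hence it exceeds $\lambda$ iff $u\in L\cup R$. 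Notably, for the union the deterministic refinement is not even required, since the counting acceptor of Prop.~\ref{p-malg} already has $\rho P(u)g\ge 1$ exactly on $R$. Each constructed collection satisfies the extension postulate, inherited through $\otimes$ or $\oplus$ from $\cA$ and the monoid‑compatible acceptor of $R$, so each defines a genuine monoidal generalized automaton, proving closure under all three operations.
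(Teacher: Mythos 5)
Your high-level plan coincides with the paper's: bring $L$ into monoidal stochastic form with a cut point in $[0,1]$ via Thm.~\ref{t-tura-gen}, bring $R$ into crisp $0$--$1$ form, and then combine by block linear algebra. The combination step differs only in detail --- you use $Q(x)\otimes P(x)$ for the intersection and a direct sum with a scaled final vector for the union, whereas the paper uses a single direct-sum automaton with initial vector $\frac{1}{2}(\pi,\pi')$ and the two cut points $\frac{1}{2}\lambda$ (union) and $\frac{1}{2}(\lambda+1)$ (intersection) --- and that linear algebra is correct on both routes; in particular, your union construction indeed needs only the counting acceptor of Prop.~\ref{p-malg}, as you observe.

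The genuine gap is the step you yourself flag as the crux: determinization by the subset construction, and the argument you give for it is a non sequitur. It is true that $\Delta^*\subseteq S\times M\times S$, so the set $\delta(T,u)=\{s'\mid \exists s\in T: (s,u,s')\in\Delta^*\}$ depends only on $u\in M$; but well-definedness of reachability is not the issue. What the extension postulate for your matrices $P(x)$ requires is the homomorphism property $\delta(T,uv)=\delta(\delta(T,u),v)$, and this fails, because a path labelled $uv$ need not decompose into a path labelled $u$ followed by one labelled $v$. Concretely, take $M=\langle x,y\mid xy=yx\rangle$, states $\{1,2,3\}$ and transitions $(1,y,2)$ and $(2,x,3)$. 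Then $\delta(\{1\},x)=\emptyset$, hence $\delta(\delta(\{1\},x),y)=\emptyset$, yet $\delta(\{1\},xy)=\{3\}$, since $xy=yx$ is the label of the path through states $1,2,3$. Equivalently, the subset-transition matrices satisfy $P(x)P(y)\neq P(y)P(x)$ although $xy=yx$, which is exactly the failure mode of Ex.~\ref{e-xy00}: the product depends on the chosen factorization, so no deterministic generator matrices can track $\Delta^*$-reachability. There is also a global obstruction: if every monoidal automaton could be determinized (one $1$ per row, complete), then flipping final states would show that monoidal automaton languages are closed under complement, contradicting the paper's own example at the end of the section on monoidal automata. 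Consequently your intersection and relative-complement constructions --- which genuinely need the crisp property $\rho P(u)g\in\{0,1\}$, and for $g'=\mathbf{1}-g$ also completeness --- rest on a step that cannot be carried out in general. Note that the paper does not determinize either: it takes the existence of a row-deterministic acceptor for $R$ as given (reading it out of Prop.~\ref{p-malg}) rather than constructing one by subsets; your proof breaks precisely where it departs from the paper by trying to supply that construction.
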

\begin{proof}
Let $L$ be a monoidal generalized language and let $R$ be a regular monoidal language over a common monoid $M$.
Then there exists a monoidal stochastic automaton
$$\cA=(S,M,\{Q(x)\mid x\in G_M\},\pi,f)$$ 
and a monoidal automaton 
$$\cA' = (S',M,I,F,\Delta)$$
such that $L=L(\cA,\lambda)$ for some cut point $0\leq \lambda\leq 1$ and $R=L(\cA')$.
By Prop.~\ref{p-malg},
it may be assumed that $\cA'$ is a monoidal stochastic automaton
$$\cA'=(S',M,\{Q'(x)\mid x\in G_M\},\pi',f')$$ 
such that $R=L(\cA',0)$, where
the matrices $Q'(x)$ have one entry~1 in each row and the initial state distribution $\pi'$ 
has a single entry~1.

Let $S=\{s_1,\ldots,s_n\}$ and $S'=\{s_{n+1},\ldots,s_{n+m}\}$.
Define the monoidal stochastic automaton
$$\cA''=(S'',M,\{Q''(x)\mid x\in G_M\},\pi'',f''),$$ 
where $S'' = S\cup S'$, $\pi'' = \frac{1}{2}(\pi,\pi')$,
$f ={f\choose f'}$ and
$$Q''(x) =  \left(\begin{array}{cc} Q(x) & 0 \\ 0 & Q'(x)\end{array}\right),\quad x\in G_M.$$
Then for all words $u=x_1\ldots x_k\in M$,
$$Q''(u) = \left(\begin{array}{cc} Q(u) & 0 \\ 0 & Q'(u)\end{array}\right).$$
Thus 
$$\pi''Q''(u) f'' = \frac{1}{2} \pi Q(u) f + \frac{1}{2} \pi' Q'(u) f'.$$ 
By the structure of $Q'(u)$ and $\pi'$, the term $\pi' Q'(u) f'$ is either~0 or~1.
Therefore, $u\in L\cup R$ iff 
$\frac{1}{2} \pi Q(u) f > \frac{1}{2} \lambda$ 
or 
$\frac{1}{2} \pi' Q'(u) f' = \frac{1}{2}.$ 

If $\lambda=1$, then $L=\emptyset$ and so $L\cup R=R$.
Otherwise, $\frac{1}{2}>\frac{1}{2} \lambda$.
It easily follows that $u\in L\cup R$ iff $\pi''Q''(u)f''> \frac{1}{2} \lambda$ and hence
$L\cup R = L(\cA'', \frac{1}{2} \lambda)$.

Similarly, $L\cap R = L(\cA'', \frac{1}{2} (\lambda+1))$ and thus
$L\cap R$ is also a monoidal stochastic automaton language.
Moreover, $L\setminus R = L\cap \overline R$, where $\overline R$ is the set complement of $R$.
Since $\overline R$ is also regular, $L\setminus R$ is also a monoidal stochastic automaton language.
\end{proof}

\begin{proposition}[Closure under Complement]
The class of monoidal automaton languages corresponding to monoidal generalized automata 
with isolated cut points is closed under complement.
\end{proposition}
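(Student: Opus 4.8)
The plan is to adapt the argument behind Prop.~\ref{p-com} to the monoidal generalized setting, the decisive point being that isolation of the cut point lets one replace the non-strict inequality defining the complement by a strict one. Let $L = L(\cA,\lambda)$ be accepted by a monoidal generalized automaton $\cA=(S,M,\{Q(x)\mid x\in G_M\},\pi,f)$ with a cut point $\lambda$ that is isolated for $\cA$, say $|\lambda - \pi Q(u)f|\geq\delta$ for all $u\in M$ and some fixed $\delta>0$. First I would write the complement as $\overline L = \{u\in M\mid \pi Q(u)f\leq\lambda\}$ and invoke isolation: since $\pi Q(u)f$ never equals $\lambda$, one has $\overline L = \{u\in M\mid \pi Q(u)f<\lambda\}$.

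The second step reflects the acceptance value about the cut point by a sign change. Setting $\pi'=-\pi$, keeping the same generator matrices $Q'(x)=Q(x)$ for $x\in G_M$ and the same final vector $f'=f$, and taking the cut point $\lambda'=-\lambda$, I obtain the monoidal generalized automaton $\cA'=(S,M,\{Q(x)\mid x\in G_M\},-\pi,f)$. Because the generator matrices are unchanged, the homomorphism $Q:M\rightarrow\RR^{n\times n}$ is the same and the extension postulate for $\cA'$ is inherited verbatim from $\cA$. For every $u\in M$ one then has $\pi'Q(u)f = -\pi Q(u)f$, so $\pi Q(u)f<\lambda$ holds iff $-\pi Q(u)f>-\lambda$, i.e.\ iff $\pi'Q(u)f>\lambda'$. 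Consequently $\overline L = L(\cA',\lambda')$; the computation is uniform in $u$, so the empty word is covered by the same equivalence with $Q(\epsilon)=I_n$.

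It remains to check that $\cA'$ again has an isolated cut point, which keeps $\overline L$ inside the same class and hence yields closure. This is immediate: $|\lambda' - \pi'Q(u)f| = |-\lambda + \pi Q(u)f| = |\pi Q(u)f-\lambda|\geq\delta$ for all $u\in M$, so $\lambda'=-\lambda$ is isolated for $\cA'$ with the same radius $\delta$.

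I expect the only genuine subtlety, rather than a real obstacle, to be the temptation to first normalise $\cA$ to a monoidal stochastic automaton via Thm.~\ref{t-tura-gen} and then complement the binary final-state vector (replacing $\lambda$ by $1-\lambda$, using that $\pi P(u)$ is a state distribution so that $\pi P(u)(\mathbf{1}-f)=1-\pi P(u)f$). That final-state complement does preserve both stochasticity and isolation when one already starts from a stochastic automaton; however it cannot be reached through Turakainen's reduction, because the construction in Prop.~\ref{p-0} rescales the acceptance value by the length-dependent factor $\beta^{-k}$, which drives the values toward $0$ and destroys any isolation gap. The sign-change construction above sidesteps this entirely by acting directly on the given automaton, and it is the route I would follow.
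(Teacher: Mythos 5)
Your proof is correct, and it is self-contained where the paper is not: the paper states this proposition without giving a proof, implicitly pointing the reader back to the classical result Prop.~\ref{p-com}. The classical route (for stochastic automata) complements the binary final vector, replacing $f$ by $\mathbf{1}-f$ and $\lambda$ by $1-\lambda$, which relies on $\pi P(u)$ being a probability distribution; to run it here one would first have to normalize the given monoidal generalized automaton to a stochastic one. Your sign-change construction $(\pi,f,\lambda)\mapsto(-\pi,f,-\lambda)$ avoids that detour entirely: it acts directly on the generalized automaton, leaves the matrices and hence the monoid homomorphism $Q$ and the extension postulate untouched, uses isolation exactly once (to identify $\{u\in M\mid \pi Q(u)f\le\lambda\}$ with $\{u\in M\mid \pi Q(u)f<\lambda\}$), and preserves the isolation radius $\delta$, so the complement stays inside the class named in the proposition. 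Your closing diagnosis is also accurate and worth keeping: the normalization of Thm.~\ref{t-tura-gen} cannot serve as a preprocessing step, because Prop.~\ref{p-0} rescales acceptance values by $\beta^{-k}$ with $k$ the word length, which collapses any isolation gap. The one caveat to record is that what you prove is closure of the class of languages accepted by monoidal \emph{generalized} automata with isolated cut points, which is exactly what is asserted; if one additionally wanted the complement to be accepted by a monoidal \emph{stochastic} automaton with an isolated cut point, an isolation-preserving variant of Turakainen's reduction would be needed, and neither your argument nor the paper supplies one.
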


We may allow anti-homomorphisms in the definition of homomorphic images. 
An {\em anti-homomorphism} is a mapping $\phi:M\rightarrow M'$ between two monoids such that
$\phi(xy) = \phi(y)\phi(x)$ for all $x,y\in M$.
The commuting property may then change accordingly.

\begin{proposition}[Mirror Images]
The monoidal generalized automaton given by the mirror image of a 
monoidal generalized automaton $\cA$ is a 
homomorphic image of\/ $\cA$.
The class of monoidal generalized automata languages is closed under mirror images.
\end{proposition}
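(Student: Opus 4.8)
The plan is to realize the mirror image of $\cA=(S,M,\{Q(x)\mid x\in G_M\},\pi,f)$ through a pair of \emph{anti}-homomorphisms of the kind just announced. First I would take $M'=M^{\mathrm{op}}$, the opposite monoid on the same underlying set with reversed product, generated again by $G_M$, and let $\phi:M\rightarrow M^{\mathrm{op}}$ be the reversal anti-homomorphism that fixes every element, so $\phi(uv)=\phi(v)\phi(u)$. On the matrix side I would let $\psi:H(\cA)\rightarrow H(\cA')$ be transposition, $Q\mapsto Q'=Q^{T}$, which is an anti-homomorphism precisely because $(AB)^{T}=B^{T}A^{T}$. The mirror image automaton is then $\cA'=(S,M^{\mathrm{op}},\{Q'(x)\mid x\in G_M\},\pi',f')$ with $Q'(x)=Q(x)^{T}$ and with the initial and final vectors interchanged as transposes, $\pi'=f^{T}$ and $f'=\pi^{T}$; this interchange is exactly the adjustment of the commuting property forced by passing from homomorphisms to anti-homomorphisms.

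Next I would check that $\cA'$ is well defined and is a homomorphic image of $\cA$. For a non-empty word $u=x_1\ldots x_k\in M$, using that $\phi$ is an anti-homomorphism and $Q'$ a genuine homomorphism on $M^{\mathrm{op}}$, one computes $Q'(\phi(u))=Q(x_k)^{T}\cdots Q(x_1)^{T}=(Q(x_1)\cdots Q(x_k))^{T}=Q(u)^{T}$, which is the anti-homomorphic commuting property $Q'(u')=Q(u)'$ with $Q(u)'=\psi(Q(u))$. The same identity delivers the extension postulate for $\cA'$: if a defining relation $u=v$ of $M$ is reversed to the corresponding relation of $M^{\mathrm{op}}$, then the transposed generators satisfy $Q'(\phi(u))=Q(u)^{T}=Q(v)^{T}=Q'(\phi(v))$ because $Q(u)=Q(v)$ already holds in $\cA$; hence $Q'$ extends uniquely from $G_M$ to a monoid homomorphism $M^{\mathrm{op}}\rightarrow\RR^{n\times n}$. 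This is the anti-homomorphic analogue of Prop.~\ref{p-comm}, and it shows that the transposed matrices respect the reversed relations, so the extension postulate survives in $\cA'$.

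Finally I would read off the accepted language. For every $u\in M$ the scalar identity
$$\pi'Q'(\phi(u))f'=f^{T}Q(u)^{T}\pi^{T}=(\pi Q(u)f)^{T}=\pi Q(u)f$$
shows $\phi(u)\in L(\cA',\lambda)$ iff $u\in L(\cA,\lambda)$, so $L(\cA',\lambda)=\phi(L(\cA,\lambda))$ is the mirror image of $L(\cA,\lambda)$; the empty word is covered by the same formula, since $Q'(\epsilon)=I_n$ gives $\pi'f'=\pi f$. This simultaneously proves that the mirror image automaton is a homomorphic image of $\cA$ and that the class of monoidal generalized automaton languages is closed under mirror images. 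The main obstacle I anticipate is organizational rather than deep: one must keep three reversals in step at once—the opposite product in $M^{\mathrm{op}}$, the transposition of matrices, and the interchange of $\pi$ and $f$—and in particular verify that the transposed generators genuinely respect the reversed relations. The single identity $(AB)^{T}=B^{T}A^{T}$ is what makes all three reversals cancel correctly and is the heart of the argument.
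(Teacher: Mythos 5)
Your construction is correct and is in essence the paper's own: both proofs realize the mirror image by transposing the matrices, $Q'(x)=Q(x)^{T}$, swapping the initial and final vectors to $f^{T}$ and $\pi^{T}$, and pairing a reversal anti-homomorphism on the input monoid with the transposition anti-homomorphism on the matrix monoid, the whole argument resting on $(AB)^{T}=B^{T}A^{T}$ and on a scalar being its own transpose. The one point where you genuinely diverge --- and improve on the paper --- is the codomain of the reversal map. The paper defines $\phi:M\rightarrow M$, $x_1\ldots x_k\mapsto x_k\ldots x_1$, on representing words; for an arbitrary finitely generated monoid this is only well defined when the defining relations are closed under reversal (it is fine for free or commutative monoids, but if $uv=w$ holds in $M$ while $vu\neq w$, the ``mirror'' of an element depends on the chosen word). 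By passing to the opposite monoid $M^{\mathrm{op}}$ and letting $\phi$ be the identity on underlying sets, you make the anti-homomorphism well defined for every $M$, and your explicit check that $x\mapsto Q(x)^{T}$ extends to a monoid homomorphism $M^{\mathrm{op}}\rightarrow\RR^{n\times n}$ supplies the extension postulate for $\cA'$ --- a verification the paper leaves implicit, and the analogue for anti-homomorphisms of Prop.~\ref{p-comm}. The price is purely organizational: the mirror language now lives over $M^{\mathrm{op}}$ rather than over $M$ itself, which is in fact the natural reading of closure under mirror images when reversal is not an anti-automorphism of $M$.
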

\begin{proof}
Let $L$ be a monoidal generalized language.
Then there exists a monoidal generalized automaton 
$$\cA=(S,M,\{Q(x)\mid x\in G_M\},\pi,f)$$ and a real number $\lambda$
such that $L=L(\cA,\lambda)$.

Define the monoidal generalized automaton $$\cA'=(S,M,\{Q(x)^T\mid x\in G_M\},f^T,\pi^T),$$
which is the homomorphic image of $\cA$ under the monoid anti-homomorphism 
given by taking the mirror image
$$\phi:M\rightarrow M:x=x_1\ldots x_k\mapsto x'=x_k\ldots x_1$$
and the matrix monoid anti-homomorphism given by transposition
$$\psi:H(\cA)\rightarrow H(\cA'):Q\mapsto Q'=Q^T.$$ 

Then for each word $u=x_1\ldots x_k\in M$, 
\begin{eqnarray*}
Q(u)^T &=& Q(x_1\ldots x_k)^T = (Q(x_1)\cdots Q(x_k))^T \\
&=& Q(x_k)^T\cdots Q(x_1)^T 
\end{eqnarray*}
Thus 
\begin{eqnarray*}
\pi Q(u) f 
&=& (\pi Q(u) f)^T = f^T Q(u)^T \pi^T = f^T (Q(x_1)\cdots Q(x_k))^T\pi^T \\
&=& f^T  Q(x_k)^T\cdots Q(x_1)^T \pi^T = f^T Q(u)^T \pi^T.
\end{eqnarray*}
Hence, $L(\cA',\lambda) = \phi(L(\cA,\lambda))$ for each cut point $\lambda$.
\end{proof}

%

A {\em monoidal generalized $2$-tape automaton} is a monoidal generalized automaton 
$\cA=(S,M,\{Q(x)\mid x\in G_{M}\},\pi,f)$
over the monoid $M = M_1\times M_2$, which is the Cartesian product of two monoids.
A {\em monoidal generalized $2$-tape automaton language} is a monoidal generalized automaton language recognized 
by a monoidal generalized $2$-tape automaton.
This notion can be extended to monoidal generalized $n$-tape automata and monoidal generalized $n$-tape
automata languages for $n\geq 2$.

The class of monoids is closed under Cartesian products and 
thus the monoidal generalized $n$-tape automata are
a special case of the monoidal generalized automata.

Let $M_1 = (M_1,\circ_1,e_1)$ and $M_2 = (M_2,\circ_2,e_2)$ be monoids.
In view of the monoidal generalized $2$-tape automaton
$$\cA=(S,M,\{Q(x)\mid x\in G_{M}\},\pi,f)$$ over the Cartesian product monoid $M=M_1\times M_2$,
it is assumed that the generating set $G_M$ is defined 
component-wise in terms of generating sets $G_{M_1}$ and $G_{M_2}$ for $M_1$ and $M_2$, respectively; i.e., 
$$G_M = \{(x_1,e_2)\mid x_1\in G_{M_1}\} \cup \{(e_1,x_2)\mid x_2\in G_{M_2}\}.$$
That is, the matrices $Q(x)$, $x\in G_M$, have the form
$Q(x_1,e_2)$ for all $x_1\in G_{M_1}$ and $Q(e_1,x_2)$ for all $x_2\in G_{M_2}$. 
Then the extension postulate for $\cA$ states that for all words 
$u=x_1\ldots x_k\in M_1$ and $v=y_1\ldots y_l\in M_2$, 
\begin{eqnarray}\label{e-u,v}
Q(u,v) &=& Q(u,e_2)Q(e_1,v) \\ &=& (Q(x_1,e_2)\cdots Q(x_k,e_2))(Q(e_1,y_1)\cdots Q(e_1,y_l)). \nonumber
\end{eqnarray}
This definition can be extended in a straightforward manner to monoidal generalized $n$-tape automata.

\begin{example}\label{e-2tape-m}
Consider the commutative monoid $M_1$, given by the presentation $\langle x,y\mid xy=yx\rangle$, and 
the monoid $M_2=(\NN_0,+,0)$.
Define the associated matrices
$$
Q(x,0) = Q(y,0) =\left(\begin{array}{rr} 1 & 1 \\ 0 & 1 \end{array}\right)
\quad\mbox{and}\quad
Q(e,1) = \left(\begin{array}{rr} 1 & -1 \\ 0 & 1 \end{array}\right).
$$
Each element of the Cartesian product monoid $M = M_1\times M_2$ has the form $(x^iy^j,k)$, where $i,j,k\geq 0$.
The associated matrix $Q(x^iy^j,k)$ has the $(1,2)$-entry $i+j-k$.
By Prop.~\ref{p-mga-L}, 
the corresponding monoidal generalized 2-tape automaton language $L$ has the non-empty words 
$(x^iy^j,k)$, where $i+j>k$.
\EXX
\end{example}

\begin{proposition}[Inverse Relations]
The monoidal generalized automaton given by the inverse relation of a monoidal generalized automaton $\cA$ is 
a homomorphic image of\/ $\cA$.
The class of monoidal generalized 2-tape languages is closed under inverse relations.
\end{proposition}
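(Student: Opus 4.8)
The plan is to mirror the structure of the preceding Mirror Images proposition, where the reversal of words was realized as an anti-homomorphism paired with matrix transposition. Here the relevant operation on the monoid $M=M_1\times M_2$ is the tape-swap $\phi:M_1\times M_2\rightarrow M_2\times M_1$ sending $(u,v)\mapsto(v,u)$. Since both components are concatenated independently and the swap preserves the order within each tape, $\phi$ is an honest monoid homomorphism (not merely an anti-homomorphism): for $(u,v),(u',v')\in M$ we have $\phi((u,v)(u',v'))=\phi((uu',vv'))=(vv',uu')=\phi(u,v)\phi(u',v')$. So unlike the mirror-image case, no reversal of matrix products is forced, and the accompanying matrix map $\psi$ on $H(\cA)$ can be taken to be the identity.

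The first step is therefore to set up the target automaton. Given a monoidal generalized $2$-tape automaton $\cA=(S,M_1\times M_2,\{Q(x)\mid x\in G_M\},\pi,f)$ with generating set $G_M=\{(x_1,e_2)\}\cup\{(e_1,x_2)\}$ as fixed in the paragraph before the statement, I define the swapped automaton $\cA'=(S,M_2\times M_1,\{Q'(x')\mid x'\in G_{M'}\},\pi,f)$ over $M'=M_2\times M_1$, with generators $G_{M'}=\{(e_2,x_1)\}\cup\{(x_2,e_1)\}$ and matrices assigned by $Q'((e_2,x_1))=Q((x_1,e_2))$ and $Q'((x_2,e_1))=Q((e_1,x_2))$. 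In the notation of the commuting property this is exactly the setting $Q'(x')=Q(x)'$ with $Q(x)'=Q(x)$, i.e.\ $\psi=\mathrm{id}$.

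The second step is to verify the commuting property $Q'(\phi(w))=Q(w)$ for every $w\in M$, which by Proposition~\ref{p-comm} also discharges the extension postulate for $\cA'$ since $\phi$ is an epimorphism. For $w=(u,v)$ with $u=x_1\cdots x_k\in M_1$ and $v=y_1\cdots y_l\in M_2$, the extension postulate~(\ref{e-u,v}) for $\cA$ gives $Q(u,v)=Q(x_1,e_2)\cdots Q(x_k,e_2)\,Q(e_1,y_1)\cdots Q(e_1,y_l)$, while evaluating $Q'$ on $\phi(u,v)=(v,u)$ produces the same product of the same matrices in the same left-to-right order; the two agree factor by factor. Consequently $\pi Q'(\phi(w))f=\pi Q(w)f$ for all $w$, so $w\in L(\cA,\lambda)$ iff $\phi(w)\in L(\cA',\lambda)$, giving $L(\cA',\lambda)=\phi(L(\cA,\lambda))=L(\cA,\lambda)^{-1}$ for every cut point $\lambda$ and thus closure under inverse relations.

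I expect the only genuine subtlety to lie in the bookkeeping of the extension postulate: I must confirm that reordering the tapes does not disturb the interleaving guaranteed by~(\ref{e-u,v}), i.e.\ that both $\cA$ and $\cA'$ factor an arbitrary two-tape word as first-tape generators followed by second-tape generators in a consistent way. Once this component-wise generating convention is invoked exactly as stated in the text, the matrix computation is routine and the homomorphism claim reduces to the observation that the tape-swap is order-preserving on each component, so $\psi$ may legitimately be the identity rather than a transpose.
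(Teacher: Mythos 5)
Your proposal matches the paper's own proof essentially step for step: the paper likewise takes the tape-swap $\phi:(u,v)\mapsto(v,u)$ as a genuine monoid homomorphism, uses the identity assignment $Q(u,v)\mapsto Q'(v,u)=Q(u,v)$ as the matrix homomorphism $\psi$, verifies the commuting property $Q'((u,v)')=Q(u,v)'$, invokes Prop.~\ref{p-comm} to discharge the extension postulate for $\cA'$, and concludes $L(\cA',\lambda)=\phi(L(\cA,\lambda))$ for every cut point. The one imprecision--your claim that the two products appear ``in the same left-to-right order,'' when the convention for $M_2\times M_1$ actually yields the blocks swapped, i.e.\ $Q(e_1,v)\,Q(u,e_2)$ versus $Q(u,e_2)\,Q(e_1,v)$--is harmless, since these blocks commute as a consequence of the extension postulate for $\cA$ (the generators $(u,e_2)$ and $(e_1,v)$ commute in $M_1\times M_2$), a point you correctly flag in your final paragraph.
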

\begin{proof}
Let $\cA = (S,M_1\times M_2,\{Q(x,y)\mid (x,y)\in G_{M_1\times M_2}\},\pi,f)$ 
be a monoi\-dal generalized 2-tape automaton.

Define the monoidal generalized 2-tape automaton 
$$\cA' = (S,M_2\times M_1,\{Q'(y,x)\mid (y,x)\in G_{M_2\times M_1}\},\pi,f),$$ 
which is the homomorphic image of $\cA$ under the monoid homomorphism
$$\phi:M_1\times M_2\rightarrow M_2\times M_1:(u,v) \mapsto (v,u)$$ 
and the monoid homomorphism 
$$\phi:H(\cA)\rightarrow H(\cA'): Q(u,v) \mapsto Q'(v,u) =Q (u,v).$$ 

The automaton $\cA'$ satisfies the commuting property, since for all words $(u,v)\in M_1\times M_2$,
$$Q'((u,v)') = Q'(v,u) = Q(u,v) = Q(u,v)'.$$
Therefore, by Prop.~\ref{p-comm}, the automaton $\cA'$ fulfills the extension postulate.

Thus for each pair $(u,v)\in M_1\times M_2$, 
$$\pi Q'(v,u) f = \pi Q(u,v)' f= \pi Q(u,v) f.$$
Hence, $L(\cA',\lambda) = \phi(L(\cA,\lambda))$ for each cut point $\lambda$.
\end{proof}

\begin{example}
Reconsider the Cartesian product monoid $M=M_1\times M_2$ from Ex.~\ref{e-2tape-m}.
The inverse relation leads to the monoid $M'=M_2\times M_1$ with the associated matrices
$$
Q(0,x) = Q(0,y) =\left(\begin{array}{rr} 1 & 1 \\ 0 & 1 \end{array}\right)
\quad\mbox{and}\quad
Q(1,e) = \left(\begin{array}{rr} 1 & -1 \\ 0 & 1 \end{array}\right).
$$
By Prop.~\ref{p-mga-L}, 
the corresponding monoidal generalized 2-tape automaton language has the non-empty words $(k,x^iy^j)$, where $i+j>k$.
\EXX
\end{example}

\begin{proposition}[Cartesian Products]
The class of monoidal generalized automa\-ton languages is closed under Cartesian products.
\end{proposition}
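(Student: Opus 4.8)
The plan is to mirror, at the level of transition matrices, the construction used for the plain monoidal Cartesian product proposition, replacing the product of transition relations by the Kronecker (tensor) product of matrices. Let $\cA_i=(S_i,M_i,\{Q_i(x)\mid x\in G_{M_i}\},\pi_i,f_i)$ be monoidal generalized automata with $L_i=L(\cA_i,\lambda_i)$ and $|S_i|=n_i$. Over the Cartesian product monoid $M=M_1\times M_2$ with the component-wise generating set $G_M=\{(x_1,e_2)\mid x_1\in G_{M_1}\}\cup\{(e_1,x_2)\mid x_2\in G_{M_2}\}$, I would take $S=S_1\times S_2$ (so $n=n_1n_2$) and, writing $\otimes$ for the Kronecker product, set
$$Q(x_1,e_2)=Q_1(x_1)\otimes I_{n_2}\quad\text{and}\quad Q(e_1,x_2)=I_{n_1}\otimes Q_2(x_2).$$
First I would check the extension postulate. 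Since $(Q_1(u)\otimes I_{n_2})(I_{n_1}\otimes Q_2(v))=(I_{n_1}\otimes Q_2(v))(Q_1(u)\otimes I_{n_2})$, the two families of generator matrices commute, so the value $Q(u,v)$ prescribed by~(\ref{e-u,v}) is well defined; the mixed-product law $(A\otimes B)(C\otimes D)=(AC)\otimes(BD)$ together with the extension postulates for $\cA_1,\cA_2$ then gives, for $u=x_1\cdots x_k\in M_1$ and $v=y_1\cdots y_l\in M_2$,
$$Q(u,v)=Q(u,e_2)Q(e_1,v)=(Q_1(u)\otimes I_{n_2})(I_{n_1}\otimes Q_2(v))=Q_1(u)\otimes Q_2(v).$$

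Next I would choose the product vectors $\pi=\pi_1\otimes\pi_2$ and $f=f_1\otimes f_2$. Using $(\pi_1\otimes\pi_2)(Q_1(u)\otimes Q_2(v))(f_1\otimes f_2)=(\pi_1Q_1(u)f_1)(\pi_2Q_2(v)f_2)$, the acceptance value factorises as
$$\pi Q(u,v)f=\bigl(\pi_1Q_1(u)f_1\bigr)\,\bigl(\pi_2Q_2(v)f_2\bigr),$$
so the machine computes the product of the two individual acceptance values. This identity, the commuting property, and Prop.~\ref{p-comm} make the 2-tape automaton $\cA=(S,M,\{Q(x)\mid x\in G_M\},\pi,f)$ a legitimate monoidal generalized 2-tape automaton.

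The hard part will be to recover the conjunction ``$u\in L_1$ and $v\in L_2$'' from a single inequality $\pi Q(u,v)f>\lambda$: a product of two reals exceeding one threshold is \emph{not} equivalent to each factor exceeding its own cut point, so no direct choice of $\lambda$ on the factorised value succeeds, and the spurious case in which both factors are negative must be excluded. The remedy I would pursue is to start not from arbitrary automata but from the representations furnished by the Matrix Characterization (Thm.~\ref{p-mga-L}), applicable after passing to monoidal stochastic automata via generalized Turakainen (Thm.~\ref{t-tura-gen}): matrices $Q_i(x)$ for which, on non-empty words, $w\in L_i\iff(Q_i(w))_{1,n_i}>0$. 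Under the Kronecker construction the $(1,n)$-entry of $Q(u,v)=Q_1(u)\otimes Q_2(v)$ is exactly $(Q_1(u))_{1,n_1}(Q_2(v))_{1,n_2}$, so $(Q(u,v))_{1,n}>0$ captures $L_1\times L_2$ precisely when both designated entries are guaranteed to be nonnegative. Securing that nonnegativity on every word is the crux of the argument, and where the genuine difficulty lies; I would attempt it by carrying the nonnegative-matrix normal form of Prop.~\ref{p-zz} through the matrix-characterization construction so that the bad ``both factors negative'' configuration cannot occur. Granting this, $(u,v)\in L_1\times L_2\iff(Q(u,v))_{1,n}>0$ for non-empty words, and the converse half of Thm.~\ref{p-mga-L} together with Prop.~\ref{p-0e} for the empty word would exhibit $L_1\times L_2$ as a monoidal stochastic, hence monoidal generalized, automaton language.
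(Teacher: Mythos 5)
Your core construction coincides with the paper's: the same component\-/wise generating set for $M_1\times M_2$, the same Kronecker matrices $Q(x_1,e_2)=Q_1(x_1)\otimes I$, $Q(e_1,x_2)=I\otimes Q_2(x_2)$, the same $\pi=\pi_1\otimes\pi_2$, $f=f_1\otimes f_2$, and the same factorization $\pi Q(u,v)f=(\pi_1Q_1(u)f_1)\cdot(\pi_2Q_2(v)f_2)$; your verification of the extension postulate via the mixed-product law is in fact more careful than the paper's one-line appeal to bilinearity. The divergence is in how the conjunction is recovered from this product. The paper first normalizes both automata using the Section~\ref{s-5} chain (Props.~\ref{p-zz}--\ref{p-5}), so that matrices, initial and final vectors, and cut points are all nonnegative, and then accepts with the cut point $\lambda_1\lambda_2$; nonnegativity is exactly what makes the implication ``$\pi_1Q_1(u)f_1>\lambda_1$ and $\pi_2Q_2(v)f_2>\lambda_2$ $\Rightarrow$ $\pi Q(u,v)f>\lambda_1\lambda_2$'' go through. (Your suspicion that thresholding a product is delicate is not unfounded: even after this normalization, the converse implication does not follow from nonnegativity alone, and the paper leaves it unaddressed. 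But that observation does not license your alternative.)

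The genuine gap is the step you explicitly grant: that each $L_i$ admits a matrix-characterization representation as in Thm.~\ref{p-mga-L} in which the designated entry $(Q_i(w))_{1,n_i}$ is nonnegative for \emph{every} word and positive exactly on $L_i$. This normal form does not exist in general, so the gap cannot be closed along the route you propose. Indeed, if every stochastic automaton language had such a representation, then for two stochastic languages over the \emph{same} monoid, say over $\{x\}^*$, the Hadamard/Kronecker product $Q(x)=Q_1(x)\otimes Q_2(x)$ would yield $(Q(u))_{1,n_1n_2}=(Q_1(u))_{1,n_1}(Q_2(u))_{1,n_2}\geq 0$, positive precisely when $u\in L_1\cap L_2$, and Thm.~\ref{p-claus} would make $L_1\cap L_2$ a stochastic automaton language. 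Taking $L_1=\overline{L_{\varphi_1}}$ and $L_2=\overline{L_{\varphi_2}}$ as in the paper's example following Prop.~\ref{p-closed} contradicts the fact (Fliess~\cite{fliess}) that this intersection is not stochastic. The same obstruction explains why Prop.~\ref{p-zz} cannot be ``carried through'' Thm.~\ref{p-mga-L}: Prop.~\ref{p-zz} buys nonnegative matrices only at the price of a nonzero cut point and sign-indefinite vectors (its $f_2$ has a $-1$ entry), and pushing the cut point to $0$ (Prop.~\ref{p-0}) necessarily reintroduces negative entries --- with all data nonnegative and cut point $0$ the accepted language would be a support language, i.e., a monoidal automaton language as in Prop.~\ref{p-malg}, which stochastic languages in general are not. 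So your argument must fall back on the paper's order of operations: normalize to nonnegative stochastic form first, and threshold the factorized value at $\lambda_1\lambda_2$.
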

\begin{proof}
Let $L_1$ and $L_2$ be two monoidal generalized languages.
Then there exist monoidal generalized automata
$\cA_1 = (S_1,M_1,\{Q_1(x)\mid x\in G_{M_1}\}, \pi_1,f_1)$ and
$\cA_2 = (S_2,M_2,\{Q_2(x)\mid x\in G_{M_2}\}, \pi_2,f_2)$ such that
$L_1=L(A_1,\lambda_1)$ and $L_2=L(A_2,\lambda_2)$ for some cut points $\lambda_1$ and $\lambda_2$.
In view of the results in Sect.~\ref{s-5}, 
the matrices, initial vectors, final vectors and cut points can be chosen to be non-negative.

The following construction makes use of the Kronecker product of matrices.
For this, consider the monoidal generalized automaton 
$$\cA = (S,M,\{Q(y)\mid y\in G_{M}\}, \pi,f),$$
where $S=S_1\times S_2$,
$M = M_1\times M_2$,
$\pi = \pi_1\otimes \pi_2$, $f = f_1\otimes f_2$ and
\begin{eqnarray*}
Q(x_1,e_2) &=& Q_1(x_1)\otimes Q_2(e_2), \\
Q(e_1,x_2) &=& Q_1(e_1)\otimes Q_2(x_2)
\end{eqnarray*}
for all $x_1\in G_{M_1}$ and $x_2\in G_{M_2}$,
and $Q_1(e_1)=I$ and $Q_2(e_2)=I'$ are identity matrices.
Since the Kronecker product is a bilinear form, we obtain
for all $u\in M_1$ and $v\in M_2$,
%
\begin{eqnarray*}
Q(u,v) &=& Q_1(u)\otimes Q_2(v).
\end{eqnarray*}
Thus
\begin{eqnarray*}
\pi Q(u,v) f 
&=& (\pi_1\otimes \pi_2) (Q_1(u)\otimes Q_2(v)) (f_1\otimes f_2)\\
&=& (\pi_1Q_1(u)f_1)\otimes (\pi_2Q_2(v)f_2)\\
&=& (\pi_1Q_1(u)f_1)\cdot (\pi_2Q_2(v)f_2),
\end{eqnarray*}
since the components of the last tensor product are scalars.
Therefore, by the hypothesis on non-negativity,
$\pi_1Q_1(u)f_1> \lambda_1$ and $\pi_2Q_2(v)f_2>\lambda_2$ iff $\pi Q(u,v) f > \lambda_1\lambda_2$.
\end{proof}


\begin{example}
Let $m\geq 2$. 
Consider the $m$-adic acceptor 
$$\cA_1 = (\{s_1,s_2\},\{0,\ldots,m-1\},P_1,\pi_1,f_1)$$ 
and the stochastic automaton 
$$\cA_2 = (\{s'_1,s'_2\},\{y\},P_2,\pi_2,f_2),$$ 
where
$$P_2(y) = \left(\begin{array}{cc} \frac{1}{2} & \frac{1}{2}\\ 0 & 1 \end{array}\right),$$
$\pi_2=(1,0)$ and $f_2 = (0,1)^T$.
Since for each integer $k\geq 0$,
$$P_2(y^k) = \left(\begin{array}{cc} \frac{1}{2^k} & 1-\frac{1}{2^k}\\ 0 & 1 \end{array}\right),$$
the accepted language is 
$L(\cA_2,\lambda) = \{y^i\mid i\geq k\}$, 
where $1-\frac{1}{2^{k-1}}\leq \lambda< 1-\frac{1}{2^k}$.

The construction of the Kronecker product leads to the monoidal generalized automaton
$$\cA = (\{s_1,s_2\}\times\{s'_1,s'_2\},\{0,\ldots,m-1\}\times \{y\},P,\pi,f),$$ 
where for each $0\leq x\leq m-1$,
\begin{eqnarray*}
P(x,\epsilon)
&=& P_1(x)\otimes I_2 
= \left(\begin{array}{cccc} 
1-\frac{x}{m}   & \frac{x}{m}   & 0 & 0 \\
1-\frac{x+1}{m} & \frac{x+1}{m} & 0 & 0 \\
0 & 0 & 1-\frac{x}{m}   & \frac{x}{m}   \\
0 & 0 & 1-\frac{x+1}{m} & \frac{x+1}{m} \\
\end{array}\right),\\
P(\epsilon, y) 
&=& I_2\otimes P_2(y)
= \left(\begin{array}{cccc} 
\frac{1}{2}   & \frac{1}{2}   & 0 & 0 \\
0             & 1             & 0 & 0 \\
0 & 0 & \frac{1}{2}   & \frac{1}{2}  \\
0 & 0 & 0 & 1
\end{array}\right),
\end{eqnarray*}
$\pi= (1,0,0,0)$ and $f=(0,0,0,1)^T$.
Then
\begin{eqnarray*}
P(x,y) 
&=& P_1(x)\otimes P_2(y)\\
&=& \left(\begin{array}{cccc} 
\frac{1}{2}(1-\frac{x}{m}) & \frac{1}{2}\frac{x}{m} & \frac{1}{2}(1-\frac{x}{m}) & \frac{1}{2}\frac{x}{m} \\
\frac{1}{2}(1-\frac{x+1}{m}) & \frac{1}{2}\frac{x+1}{m} & \frac{1}{2}(1-\frac{x+1}{m}) & \frac{1}{2}\frac{x+1}{m} \\
0 & 0 & 1-\frac{x}{m} & \frac{x}{m} \\
0 & 0 & 1-\frac{x+1}{m} & \frac{x+1}{m} \\
\end{array}\right).
\end{eqnarray*}

Let $u=x_1\ldots x_k\in\Sigma^*$ and $i\geq 0$ be an integer.
The word $(u,y^i)$ lies in $L(\cA,\lambda)$ iff $\pi P(u,y^i) f>\lambda$, i.e.,
the $(1,4)$-entry of the matrix $P(u,y^i)$ is larger than $\lambda$.
By the Kronecker product, 
this entry is given by the product of the $(1,2)$-entries of $P_1(u)$ and $P(y^i)$,
which is $0.x_k\ldots x_1\cdot \left(1-\frac{1}{2^i}\right)$.
\EXX
\end{example}

\end{document}